\documentclass[11pt]{article}
\usepackage{amsmath, amssymb, amsthm, mathrsfs, bm, geometry, mathtools, hyperref, booktabs, graphicx}
\usepackage{natbib}
\usepackage{algorithm}
\usepackage{algpseudocode}
\usepackage{fancyhdr}
\usepackage{tikz}
\usetikzlibrary{positioning, arrows.meta, shapes.geometric, decorations.pathmorphing}

\newtheorem{theorem}{Theorem}[section]

\newtheorem{definition}[theorem]{Definition}

\newcommand{\Var}{\mathrm{Var}}

\newcommand{\Prob}{\mathbb{P}}

\newcommand{\KL}[2]{D_{\text{KL}}\left(#1 \parallel #2\right)}
\newcommand{\dd}{\mathrm{d}}

\newcommand{\btheta}{\boldsymbol{\theta}}

\newcommand{\bA}{\boldsymbol{A}}

\title{\textbf{The Bayesian Reflex: Online Learning as the Autonomic Nervous System of Modern and Future AI}}
\author{Durba Bhattacharya$^{1}$, Sucharita Roy$^{2}$, and Sourabh Bhattacharya$^{3}$\\
$^{1}$Department of Statistics, St. Xavier's College (Autonomous), Kolkata\\
$^{2}$Department of Mathematics, St. Xavier's College (Autonomous), Kolkata\\
$^{3}$Interdisciplinary Statistical Research Unit, Indian Statistical Institute, Kolkata}

\date{}

\begin{document}

\maketitle

\begin{abstract}
This chapter introduces the concept of the \textit{Bayesian reflex} as a unifying framework for understanding online learning in artificial intelligence. Drawing an analogy with the autonomic nervous system that unconsciously regulates vital physiological functions, we conceptualize Bayesian online algorithms as the continuous, automatic learning mechanism that enables AI systems to maintain equilibrium in dynamic environments. The Bayesian reflex operates through three fundamental mechanisms: belief maintenance via probabilistic representations, sequential updating through Bayes' theorem, and uncertainty-driven action that balances exploration and exploitation.

We provide a comprehensive mathematical treatment of sequential Bayesian inference and survey the methodological landscape from Kalman filters to particle filters and variational methods. A central contribution is the recognition that the computational foundation of the Bayesian reflex lies in two complementary principles: the \textit{look-up table principle} for sequential inference in function space, and the \textit{ellipsoidal decomposition framework} for perfect iid sampling from arbitrary posterior distributions. We trace how these principles have been generalised across an increasingly broad range of problems—from dynamic computer model emulation to nonparametric state-space models, circular time series, inverse regression for climate model evaluation, and deep architectures through Recursive Gaussian Processes.

The decision-making aspects of the Bayesian reflex are examined through bandit theory, with detailed treatments of Thompson sampling and restless bandits. We also explore remarkable extensions to fundamental mathematical problems: assessing convergence of infinite series (with applications to climate dynamics and the Riemann Hypothesis), modelling the distribution of prime numbers (leading to the discovery of 184 strong Mersenne prime candidates), detecting stationarity in time series, characterising point processes, and determining frequencies in oscillatory signals. This remarkable versatility across seemingly disparate domains underscores the unifying power of the Bayesian reflex as the essential infrastructure for building truly adaptive artificial intelligence—systems that do not merely learn once, but continuously evolve in harmony with the complex, dynamic world they inhabit.
\end{abstract}

\vspace{0.5cm}
\textbf{Keywords:} Bayesian inference, online learning, sequential Monte Carlo, look-up table principle, ellipsoidal decomposition, Thompson sampling

\newpage

\tableofcontents

\newpage

\section{Introduction: The Imperative for Online Adaptation}

Contemporary artificial intelligence has achieved remarkable successes through large-scale batch training. Models such as large language models (LLMs), computer vision systems, and recommendation engines are typically trained on massive static datasets, optimised through millions of gradient updates, and then deployed in fixed configurations. This paradigm, while powerful, embodies a fundamental limitation: the world is not static. The assumption that the environment remains unchanged after deployment is increasingly untenable in real-world applications where data streams continuously, user preferences evolve, and system dynamics shift \citep{hazan2016introduction, lattimore2020bandit, orabona2019modern}.

Consider the following scenarios that illustrate this limitation. A recommendation system must adapt to shifting user preferences, seasonal trends, and emerging content, requiring continuous learning rather than periodic retraining \citep{dasgupta2025bayesian}. A maternal health program must learn optimal call times for millions of beneficiaries, where individual preferences vary due to work schedules, phone sharing patterns, and network reliability, and these preferences themselves change over time \citep{liang2025context}. A sepsis prediction system in an intensive care unit must continuously update its risk assessments as new patient data arrives every minute, where delays in adaptation could have life-or-death consequences \citep{zhou2025sepsyn}. An autonomous vehicle navigating unfamiliar terrain must make real-time decisions under profound uncertainty, integrating sensor data streams and adapting to novel situations without human intervention \citep{kalman1960new}. Even in function optimisation—a classical problem in numerical analysis—sequential strategies that adaptively select new points to evaluate based on previous observations embody the same principles of online learning \citep{roy2021function, snoek2012practical}. Remarkably, the same principles extend to evaluating climate model projections, where inverse regression formulations ask whether observed past temperatures are plausible given assumed future scenarios \citep{chatterjee2022ominous}. Recursive Bayesian methods have also been developed to assess the convergence of infinite series—a fundamental problem in mathematical analysis—with applications ranging from climate change to the Riemann Hypothesis \citep{roy2020bayesian, roy2020bayesian2}. Most astonishingly, these same recursive Bayesian principles have been applied to model the distribution of prime numbers, leading to insights that challenge the Riemann Hypothesis and to the discovery of 259 new primes exceeding 140 million, including 184 strong Mersenne prime candidates with potential digit lengths reaching 242 million \citep{roy2025prime}. In each of these scenarios, the system faces a common challenge: it must learn and adapt while operating, integrating new information sequentially without the luxury of re-training on entire accumulated datasets. This is the domain of online learning.

The recursive Bayesian framework has proven to be remarkably versatile, extending far beyond these applications to encompass a broad range of statistical problems. As demonstrated by \citet{roy2021stationarity}, the same core methodology—partitioning the index set into locally stationary regions, constructing binary indicators based on empirical distribution functions, and applying recursive Beta-Binomial updates—can be applied to detect stationarity in time series models, assess convergence of Markov chain Monte Carlo algorithms, test for spatial and spatio-temporal stationarity, characterise point processes, and determine frequencies in oscillatory signals. This unification of seemingly disparate problems under a common methodological framework is a testament to the power of the Bayesian reflex.

However, a fundamental challenge underlies all these applications: how can we obtain (practically) exact samples from the complex, high-dimensional posterior distributions that arise in sequential Bayesian updating? Traditional approaches rely on Markov chain Monte Carlo (MCMC) methods, which suffer from well-known convergence assessment difficulties \citep{robert2004monte}. Even the most experienced practitioners often resort to deterministic or ad-hoc approximations, considering the challenges of diagnosing convergence insurmountable \citep{bhattacharya2025iid}. This is where the concept of \textit{perfect simulation}—generating (practically) exact independent and identically distributed (iid) samples from the target distribution—becomes essential.

\begin{figure}[htbp]
\centering
\begin{tikzpicture}[
    block/.style={
        rectangle, draw, fill=blue!20, text width=3.2cm, text centered,
        rounded corners, minimum height=1.2cm, font=\small
    },
    env/.style={
        rectangle, draw, fill=gray!20, text width=3.0cm, text centered,
        rounded corners, minimum height=1.0cm, font=\small
    },
    arrow/.style={thick, ->, >=stealth},
    reflex/.style={draw=red!70, dashed, thick, inner sep=0.3cm},
    lab/.style={font=\footnotesize, fill=white, inner sep=1pt}
]

% Analogy box at top
\node (ans) [draw, fill=green!20, text width=6cm, text centered, dashed,
             rounded corners, font=\small] at (0, 5.8)
    {Analogy: Autonomic Nervous System \\
     Unconscious regulation of heartbeat, breathing, homeostasis};

% Dashed ellipse for Bayesian Reflex
\node (reflex) [reflex, ellipse, minimum width=11cm, minimum height=7cm,
                anchor=center] at (0, 0) {};

% Reflex title below the ellipse
\node (title) [font=\small\bfseries, align=center] at (0, -3.7)
    {Bayesian Reflex\\ \normalfont\small Automatic, Continuous Learning};

% Core mechanism nodes inside the ellipse
\node (seq)  [block] at (0, 2.8)     {Sequential Updating \\ Bayes' Theorem};
\node (bel)  [block] at (-4.5, -0.2) {Belief Maintenance \\ Probabilistic Representation};
\node (act)  [block] at ( 4.5, -0.2) {Uncertainty-Driven Action \\ Explore vs.\ Exploit};

% Environment node outside, below the ellipse
\node (env)  [env]   at (0, -4.8)    {Environment \\ (Outcome / Reward)};

% ---- Arrows (now with prior belief flow) ----

% 1. New observation: env -> seq
\draw[arrow] (env.north) -- (seq.south)
    node[midway, right, lab] {new observation};

% 2. Prior belief: bel -> seq (the missing piece)
\draw[arrow] (bel.north east) to[out=90, in=-150] (seq.south west)
    node[midway, above, sloped, lab] {prior belief};

% 3. Updated belief: seq -> bel
\draw[arrow] (seq.west) -- (bel.north)
    node[pos=0.45, above, lab] {updated belief};

% 4. Posterior for action selection: bel -> act
\draw[arrow] (bel.east) -- (act.west)
    node[midway, below, lab] {posterior};

% 5. Choose action: act -> env
\draw[arrow] (act.south) |- (env.east)
    node[pos=0.2, right, lab] {choose action};

% 6. Analogy -> Reflex
\draw[arrow] (ans.south) -- (0, 3.8);

% Loop label
\node at (0, -6.0) [font=\small\itshape] {Continuous Learning Loop};

\end{tikzpicture}
\caption{Bayesian Reflex diagram with complete information flow. New observations from the environment and the prior belief (the current posterior) are combined via Bayes' theorem; the resulting posterior becomes the new belief; actions are chosen based on that posterior; outcomes feed back into the environment, closing the loop.}
\label{fig:bayesian_reflex}
\end{figure}

We introduce the concept of the \textit{Bayesian reflex} as an organising principle for understanding such online adaptive systems. Just as the autonomic nervous system unconsciously and continuously regulates heartbeat, breathing, and homeostasis in biological organisms (Figure \ref{fig:bayesian_reflex}), Bayesian online algorithms provide the continuous, low-level intelligence that enables artificial systems to maintain equilibrium in dynamic environments. The term reflex is chosen deliberately to emphasise the automatic, immediate, and unconscious nature of these updates—they happen without explicit intervention, forming the bedrock upon which higher-level cognitive functions can be built.

The Bayesian reflex operates through three fundamental mechanisms that together enable intelligent adaptation. First, \textit{belief maintenance} means the system continuously maintains a probabilistic representation of its knowledge about relevant quantities of interest, taking the form of a posterior distribution \citep{berger2013statistical, bernardo2009bayesian}. This representation captures not only point estimates but also the uncertainty associated with those estimates, providing a rich characterisation of what is known and what remains uncertain. Second, \textit{sequential updating} dictates that upon receiving new observations, the system updates its beliefs according to Bayes' theorem, optimally integrating prior knowledge with new evidence \citep{gelman2013bayesian, murphy2023probabilistic}. This update is mathematically principled and ensures that information is accumulated efficiently over time. Third, \textit{uncertainty-driven action} ensures that decisions are guided not only by point estimates but by full uncertainty quantification, naturally balancing exploration and exploitation \citep{thompson1933likelihood, sutton2018reinforcement}. When uncertainty is high, the system explores to gather information; when uncertainty is low, it exploits its knowledge to maximise performance.

This chapter provides a treatment of the Bayesian reflex, from its mathematical foundations to its state-of-the-art implementations in modern AI systems. A unifying theme throughout is the interplay between two foundational computational principles: the \textit{look-up table principle}—a powerful strategy for sequential inference in function space, and the \textit{ellipsoidal decomposition principle}—a complementary approach for practically exact iid sampling in parameter space. Together, these principles provide the computational infrastructure that makes the Bayesian reflex practically realisable.

We argue that Bayesian online algorithms are not merely a specialised toolkit for niche applications but rather constitute essential infrastructure for building truly adaptive artificial intelligence. The principles we develop here have implications across the entire spectrum of AI applications, from healthcare to robotics to language models, and point toward a future where AI systems continuously learn and evolve throughout their operational lifetime.

\section{Mathematical Foundations of Sequential Bayesian Inference}

\subsection{Bayes' Theorem and the Sequential Update}

At its core, Bayesian inference provides a mathematical framework for reasoning under uncertainty that is both elegant and powerful. The fundamental relation known as Bayes' theorem expresses how prior beliefs should be updated in light of observed data, serving as the mathematical foundation for the Bayesian reflex. Formally, if we denote by \(\theta\) the unknown quantities of interest and by \(\mathcal{D}\) the observed data, Bayes' theorem states that 
\begin{equation}
p(\theta \mid \mathcal{D}) = \frac{p(\mathcal{D} \mid \theta) p(\theta)}{p(\mathcal{D})}.
\label{eq:bayes}
\end{equation}
Here, \(p(\theta)\) is the prior distribution encoding beliefs before observation, representing our initial state of knowledge or ignorance about \(\theta\) \citep{robert2007bayesian}. The term \(p(\mathcal{D} \mid \theta)\) is the likelihood, which describes the probabilistic mechanism by which data are generated given the parameters, capturing our modelling assumptions about the world. The denominator \(p(\mathcal{D}) = \int p(\mathcal{D} \mid \theta) p(\theta) \, d\theta\) is the marginal likelihood or evidence, which serves as a normalising constant ensuring that the posterior integrates to one. The result \(p(\theta \mid \mathcal{D})\) is the posterior distribution representing updated beliefs after observing the data, synthesising prior knowledge with empirical evidence.

In the online setting, data arrives sequentially over time: \(\mathcal{D}_t = \{x_1, x_2, \ldots, x_t\}\). This sequential nature fundamentally changes the inference problem, as we cannot simply wait for all data to arrive before computing the posterior. Instead, we must update our beliefs incrementally as each new observation becomes available. The Bayesian update then becomes a sequential process that mirrors the operation of the Bayesian reflex:
\begin{equation*}
p(\theta \mid \mathcal{D}_t) \propto p(x_t \mid \theta) \cdot p(\theta \mid \mathcal{D}_{t-1}).
\end{equation*}
This recursive structure is the mathematical embodiment of the Bayesian reflex: each new observation triggers an automatic update of beliefs, with the posterior at time \(t-1\) serving as the prior for time \(t\). The proportionality constant ensures proper normalisation, but the core insight is that yesterday's posterior becomes today's prior—a natural and elegant way to accumulate knowledge over time.

\begin{figure}[htbp]
\centering
\begin{tikzpicture}[node distance=2.2cm]
    % Define styles
    \tikzstyle{prior} = [rectangle, draw, fill=orange!20, text width=2.5cm, text centered, minimum height=1.2cm]
    \tikzstyle{posterior} = [rectangle, draw, fill=green!20, text width=2.5cm, text centered, minimum height=1.2cm]
    \tikzstyle{data} = [ellipse, draw, fill=yellow!20, text width=2cm, text centered, minimum height=1cm]
    
    % Time line
    \draw[thick] (0,0) -- (14,0);
    
    % Time points
    \foreach \x/\t in {1/$t=1$, 4/$t=2$, 7/$t=3$, 10/$t=4$, 13/$t$}
        \draw (\x,0.1) -- (\x,-0.1) node[below] {\t};
    
    % Priors and posteriors
    \node (prior0) [prior] at (0,2.5) {Prior \\ $p(\theta)$};
    \node (post1) [posterior] at (2.5,2.5) {Posterior$_1$ \\ $p(\theta|x_1)$};
    \node (post2) [posterior] at (5.5,2.5) {Posterior$_2$ \\ $p(\theta|x_1,x_2)$};
    \node (post3) [posterior] at (8.5,2.5) {Posterior$_3$ \\ $p(\theta|x_1,x_2,x_3)$};
    \node (post4) [posterior] at (11.5,2.5) {Posterior$_t$ \\ $p(\theta|\mathcal{D}_t)$};
    
    % Data points
    \node (data1) [data] at (2.5,1) {$x_1$};
    \node (data2) [data] at (5.5,1) {$x_2$};
    \node (data3) [data] at (8.5,1) {$x_3$};
    \node (data4) [data] at (11.5,1) {$x_t$};
    
    % Update arrows
    \draw[->, thick] (prior0) -- (post1);
    \draw[->, thick] (post1) -- (post2);
    \draw[->, thick] (post2) -- (post3);
    \draw[->, thick] (post3) -- (post4);
    
    % Data influence
    \draw[->, dashed] (data1) -- (post1);
    \draw[->, dashed] (data2) -- (post2);
    \draw[->, dashed] (data3) -- (post3);
    \draw[->, dashed] (data4) -- (post4);
    
    % Recursion formula
    \node at (7,4.2) {$\pi_t(\theta) = \frac{p(x_t \mid \theta) \pi_{t-1}(\theta)}{\int p(x_t \mid \theta') \pi_{t-1}(\theta') \, d\theta'}$};
\end{tikzpicture}
\caption{Sequential Bayesian updating: Yesterday's posterior becomes today's prior.}
\label{fig:sequential_update}
\end{figure}

More formally, let \(\pi_t(\theta) = p(\theta \mid \mathcal{D}_t)\) denote the posterior distribution after \(t\) observations. As illustrated in Figure \ref{fig:sequential_update}, the sequential update satisfies the fundamental recursion
\begin{equation}
\pi_t(\theta) = \frac{p(x_t \mid \theta) \pi_{t-1}(\theta)}{\int p(x_t \mid \theta') \pi_{t-1}(\theta') \, d\theta'}.
\label{eq:sequential_update}
\end{equation}
This recursion possesses several remarkable properties that make it theoretically appealing \citep{wainwright2008graphical, ghahramani2015probabilistic}. 

\begin{theorem}[Optimality of Sequential Bayesian Updating]
For any sequence of observations \(x_1, \ldots, x_t\) generated independently conditioned on \(\theta\) according to \(p(x_i \mid \theta)\), the sequential update in \eqref{eq:sequential_update} yields the same posterior distribution as the batch update that processes all data simultaneously:
\begin{equation*}
\pi_t(\theta) = \frac{\prod_{i=1}^t p(x_i \mid \theta) p(\theta)}{\int \prod_{i=1}^t p(x_i \mid \theta') p(\theta') \, d\theta'}.
\end{equation*}
\end{theorem}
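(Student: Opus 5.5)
The plan is to argue by induction on \(t\), with \(\pi_0(\theta)=p(\theta)\) as the base case. Write \(L_t(\theta)=\prod_{i=1}^t p(x_i\mid\theta)\), with \(L_0\equiv 1\), and let
\begin{equation*}
Z_t=\int L_t(\theta')\,p(\theta')\,d\theta' .
\end{equation*}
The inductive hypothesis is \(\pi_{t-1}(\theta)=L_{t-1}(\theta)p(\theta)/Z_{t-1}\). For \(t=0\) this is trivial, since \(Z_0=1\).

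For the inductive step, substitute the hypothesis into the recursion \eqref{eq:sequential_update}. The numerator becomes \(p(x_t\mid\theta)L_{t-1}(\theta)p(\theta)/Z_{t-1}=L_t(\theta)p(\theta)/Z_{t-1}\). The denominator becomes \(\int p(x_t\mid\theta')L_{t-1}(\theta')p(\theta')\,d\theta'/Z_{t-1}=Z_t/Z_{t-1}\); here the constant \(1/Z_{t-1}\) comes out of the integral by linearity. The two factors of \(Z_{t-1}\) cancel, leaving \(L_t(\theta)p(\theta)/Z_t\), which is exactly the batch formula.

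The role of conditional independence should be made explicit. The recursion multiplies by \(p(x_t\mid\theta)\), whereas the correct conditional factor is \(p(x_t\mid\theta,\mathcal D_{t-1})\). These coincide precisely because the \(x_i\) are independent given \(\theta\). Likewise, the batch likelihood \(p(\mathcal D_t\mid\theta)\) factorises as \(L_t(\theta)\) by that same assumption. I will state this once, so that both sides of the claimed identity are seen to equal the genuine posterior \(p(\theta\mid\mathcal D_t)\) from \eqref{eq:bayes}.

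The main obstacle is not algebraic but concerns well-definedness. One must ensure every normaliser satisfies \(0<Z_s<\infty\) for \(s\le t\), so that no step divides by zero or infinity. I would handle this by noting that \(Z_s\) is the marginal density \(p(x_1,\dots,x_s)\), which is positive and finite for almost every observed sequence. So I will assume the observed data have positive evidence, which is implicit in the statement. Finiteness at step \(t\) then implies finiteness at all earlier steps only if the likelihoods are bounded or integrable, so I would simply add the assumption \(Z_s\in(0,\infty)\) for each \(s\le t\). Under that assumption, the equality holds as densities \(p(\theta)\,d\theta\)-almost everywhere.
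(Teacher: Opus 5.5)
Your proof is correct and follows the same route as the paper: induction on $t$, substituting the inductive hypothesis into the recursion and cancelling the common normaliser $Z_{t-1}$ between numerator and denominator. Your base case at $t=0$ with $\pi_0=p(\theta)$ is a small difference from the paper's $t=1$ start. Your explicit assumption $0<Z_s<\infty$ and your remark on conditional independence are refinements that the paper leaves implicit.
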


\begin{proof}
The proof proceeds by induction. For \(t=1\), the sequential update is identical to the batch update. Assume the equality holds for \(t-1\). Then:
\begin{align*}
\pi_t(\theta) &= \frac{p(x_t \mid \theta) \pi_{t-1}(\theta)}{\int p(x_t \mid \theta') \pi_{t-1}(\theta') \, d\theta'} \\
&= \frac{p(x_t \mid \theta) \frac{\prod_{i=1}^{t-1} p(x_i \mid \theta) p(\theta)}{\int \prod_{i=1}^{t-1} p(x_i \mid \theta') p(\theta') \, d\theta'}}{\int p(x_t \mid \theta') \frac{\prod_{i=1}^{t-1} p(x_i \mid \theta') p(\theta')}{\int \prod_{i=1}^{t-1} p(x_i \mid \theta'') p(\theta'') \, d\theta''} \, d\theta'} \\
&= \frac{\prod_{i=1}^t p(x_i \mid \theta) p(\theta)}{\int \prod_{i=1}^t p(x_i \mid \theta') p(\theta') \, d\theta'}.
\end{align*}
The denominator simplifies because the integrals cancel, establishing the result.
\end{proof}

This theorem establishes that sequential processing loses no information compared to batch processing, a crucial property for online learning systems.

\subsection{Conjugate Families and Exponential Families}

For general likelihoods, however, the sequential update may lead to posterior distributions of increasing complexity. Even if the prior has a simple form, the posterior after just a few updates may become intractable, requiring approximation methods. This motivates the use of conjugate families, which are pairs of likelihood and prior distributions such that the posterior belongs to the same family as the prior. Conjugacy ensures that the mathematical form of the distribution is preserved under updating, making sequential inference computationally tractable and elegant \citep{bernardo2009bayesian, gelman2013bayesian}.

\begin{definition}[Conjugate Family]
A family of prior distributions \(\mathcal{F}\) is conjugate to a likelihood function \(p(x \mid \theta)\) if for any prior \(\pi \in \mathcal{F}\) and any observation \(x\), the posterior distribution \(\pi(\theta \mid x) \propto p(x \mid \theta) \pi(\theta)\) also belongs to \(\mathcal{F}\).
\end{definition}

The exponential family provides a unifying framework for understanding conjugacy in Bayesian inference \citep{wainwright2008graphical}. 

\begin{definition}[Exponential Family]
A probability distribution belongs to the exponential family if it can be written in the canonical form
\begin{equation}
p(x \mid \theta) = h(x) \exp\left(\eta(\theta)^\top T(x) - A(\theta)\right),
\label{eq:exp_family}
\end{equation}
where \(T(x)\) are sufficient statistics, \(\eta(\theta)\) are natural parameters, \(A(\theta)\) is the log-partition function ensuring normalisation, and \(h(x)\) is a base measure.
\end{definition}

Many common distributions belong to the exponential family. For example, the Bernoulli distribution can be written as \(p(x \mid \theta) = \theta^x (1-\theta)^{1-x} = \exp\left(x \log\frac{\theta}{1-\theta} + \log(1-\theta)\right)\). The Poisson distribution takes the form \(p(x \mid \theta) = \frac{e^{-\theta}\theta^x}{x!} = \exp\left(x \log\theta - \theta - \log x!\right)\). The Gaussian distribution is expressed as \(p(x \mid \mu, \sigma^2) = \frac{1}{\sqrt{2\pi\sigma^2}}\exp\left(-\frac{(x-\mu)^2}{2\sigma^2}\right) = \exp\left(\frac{\mu}{\sigma^2}x - \frac{1}{2\sigma^2}x^2 - \frac{\mu^2}{2\sigma^2} - \frac{1}{2}\log(2\pi\sigma^2)\right)\).

For exponential family distributions, the conjugate prior takes a natural form:

\begin{theorem}[Conjugate Prior for Exponential Families]
For a likelihood in exponential family form \eqref{eq:exp_family}, the conjugate prior is given by
\begin{equation}
p(\theta \mid \chi, \nu) \propto \exp\left(\eta(\theta)^\top \chi - \nu A(\theta)\right),
\label{eq:conjugate_prior}
\end{equation}
where \(\chi\) and \(\nu\) are hyperparameters. The posterior after observing \(n\) data points \(\{x_1,\ldots,x_n\}\) updates the hyperparameters as
\begin{align*}
\chi_n &= \chi_0 + \sum_{i=1}^n T(x_i), \\
\nu_n &= \nu_0 + n.
\end{align*}
\end{theorem}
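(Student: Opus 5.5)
The plan is a direct computation with Bayes' theorem \eqref{eq:bayes}, organised so that it also yields the conjugacy claim in the sense of the Conjugate Family definition. Write the prior as $p(\theta \mid \chi_0,\nu_0) \propto \exp\bigl(\eta(\theta)^\top \chi_0 - \nu_0 A(\theta)\bigr)$. The key intermediate statement is a one-step update: for a single observation $x$, multiplying the prior by the likelihood \eqref{eq:exp_family} gives
\begin{equation*}
p(x \mid \theta)\, p(\theta \mid \chi_0,\nu_0) \propto h(x)\exp\bigl(\eta(\theta)^\top (\chi_0 + T(x)) - (\nu_0+1)A(\theta)\bigr).
\end{equation*}
The factor $h(x)$ does not involve $\theta$, so it is absorbed into the normalising constant. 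What remains has exactly the form \eqref{eq:conjugate_prior} with hyperparameters $(\chi_0+T(x),\,\nu_0+1)$. This shows that the family $\{p(\cdot\mid\chi,\nu)\}$ is closed under updating, i.e.\ it is conjugate.

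For $n$ observations I would take one of two equivalent routes. The first is to invoke the Optimality of Sequential Bayesian Updating theorem and iterate the one-step update, with an induction on $n$ giving $\chi_n = \chi_{n-1}+T(x_n)$ and $\nu_n=\nu_{n-1}+1$. The second is to compute the batch posterior directly. The product of likelihoods is
\begin{equation*}
\prod_i h(x_i)\exp\Bigl(\eta(\theta)^\top \sum_i T(x_i) - nA(\theta)\Bigr),
\end{equation*}
and multiplying by the prior gives the stated $\chi_n,\nu_n$ immediately. I would present the batch computation and note that the sequential route agrees with it.

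The only genuine subtlety is well-definedness: the prior and the posterior must be normalisable. That is, the hyperparameters must lie in the set
\begin{equation*}
\Bigl\{(\chi,\nu) : \int \exp\bigl(\eta(\theta)^\top\chi - \nu A(\theta)\bigr)\,\dd\theta<\infty\Bigr\}.
\end{equation*}
I would handle this by assuming the initial prior is proper and normalisable. I would then note that the posterior is automatically proper whenever the marginal likelihood is finite and positive, because the posterior is a normalised version of likelihood times prior. Under that assumption the updated pair $(\chi_n,\nu_n)$ is admissible, and no further work is needed beyond the exponent bookkeeping.
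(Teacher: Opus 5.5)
Your proposal is correct, and the route you say you would present is exactly the paper's proof: multiply the prior by $\prod_{i=1}^n h(x_i)\exp\bigl(\eta(\theta)^\top T(x_i) - A(\theta)\bigr)$, absorb $\prod_i h(x_i)$ into the normalising constant, and read off $\chi_n=\chi_0+\sum_i T(x_i)$ and $\nu_n=\nu_0+n$. Your one-step closure argument and your normalisability caveat are sound additions that the paper leaves implicit.
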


\begin{proof}
The joint distribution of data and parameters is:
\begin{align*}
p(\{x_i\}, \theta) &\propto \left[\prod_{i=1}^n h(x_i) \exp\left(\eta(\theta)^\top T(x_i) - A(\theta)\right)\right] \exp\left(\eta(\theta)^\top \chi_0 - \nu_0 A(\theta)\right) \\
&= \left[\prod_{i=1}^n h(x_i)\right] \exp\left(\eta(\theta)^\top \left(\chi_0 + \sum_{i=1}^n T(x_i)\right) - (\nu_0 + n) A(\theta)\right).
\end{align*}
Treating this as a function of \(\theta\), we recognise it as having the same form as the prior, with updated hyperparameters \(\chi_n = \chi_0 + \sum_{i=1}^n T(x_i)\) and \(\nu_n = \nu_0 + n\).
\end{proof}

This update is remarkably simple: we merely add the sufficient statistics of the new data to the prior hyperparameters. This property makes conjugate families particularly attractive for online learning, where updates must be performed quickly and efficiently \citep{murphy2023probabilistic}.

However, in real-world applications, the true posterior distribution rarely belongs to a conjugate family. This fundamental limitation necessitates the development of general-purpose methods for sampling from arbitrary posterior distributions—methods that are practically exact, efficient, and free from convergence concerns.

\subsection{The Challenge of Exact Posterior Sampling}

The sequential update in \eqref{eq:sequential_update} is mathematically elegant, but its practical implementation requires the ability to sample from the posterior distribution \(\pi_t(\theta)\) at each time step. Traditional approaches to this problem fall into two categories. MCMC methods construct a Markov chain whose stationary distribution is the target posterior. While immensely popular and widely applicable \citep{robert2004monte, brooks2011handbook}, MCMC suffers from a fundamental limitation: it is impossible to determine with certainty when the chain has converged to its stationary distribution. As noted by \citet{bhattacharya2025iid}, ``even the most experienced MCMC theorists and practitioners often resort to deterministic or ad-hoc approximations to posterior distributions, considering the difficulties of diagnosing convergence insurmountable." The second category comprises Sequential Monte Carlo (SMC) methods, such as particle filters, which approximate the posterior using weighted samples that evolve over time \citep{doucet2001sequential, doucet2009tutorial}. While these methods are naturally suited to online settings, they suffer from particle degeneracy—the tendency for most particles to have negligible weight after a few updates—and provide only approximate samples from the posterior.

The concept of \textit{perfect simulation}, introduced by \citet{propp1996exact}, initially appeared promising as a solution to the convergence problem. However, as \citet{bhattacharya2025iid} observe, ``the apparent infeasibility of direct practical implementation in general Bayesian problems led to scepticism and, in many cases, outright disbelief." The challenge of developing practical perfect simulation methods for arbitrary posterior distributions remained largely unsolved for decades.

This is where the \textit{ellipsoidal decomposition framework} of \citep{bhattacharya2025iid} aims to provide a transformative breakthrough, to enable practically exact iid sampling from any target distribution on \(\mathbb{R}^d\) without the convergence concerns that plague MCMC methods.

\section{The Ellipsoidal Decomposition Framework: Perfect IID Sampling as a Computational Foundation}
\label{sec:iid}

The ellipsoidal decomposition framework of \citep{bhattacharya2025iid} provides a general methodology for producing iid realisations from any distribution on \(\mathbb{R}^d\), for arbitrary \(d \geq 1\). This framework is not merely a theoretical curiosity—it has been successfully applied to generate 10,000 iid realisations from standard distributions (normal, Student's \(t\), Cauchy) for dimensions up to 100, from 50-dimensional normal mixtures, and from challenging posterior distributions including a 160-dimensional spatial model for radionuclide count data on Rongelap Island. In all cases, implementation times are reasonable, often less than a minute in parallel computing environments.

\begin{figure}[htbp]
\centering
\begin{tikzpicture}[scale=1.2]
    % Define styles
    \tikzstyle{region} = [draw, thick]
    
    % Concentric ellipsoids
    \foreach \r/\c in {0.6/blue!20, 1.2/blue!40, 1.8/blue!60, 2.4/blue!80, 3.0/blue!100} {
        \draw[fill=\c, opacity=0.3] (0,0) ellipse (\r cm and \r*0.5 cm);
    }
    
    % Labels for regions
    \node at (0,0) {$\mathbf{A}_1$};
    \node at (0.9,0.3) {$\mathbf{A}_2$};
    \node at (1.5,0.6) {$\mathbf{A}_3$};
    \node at (2.1,0.9) {$\mathbf{A}_4$};
    \node at (2.7,1.2) {$\mathbf{A}_5$};
    
    % Center point
    \filldraw[red] (0,0) circle (3pt) node[above right] {$\boldsymbol{\mu}$};
    
    % Radius annotation
    \draw[<->] (0,0) -- (1.8,0.9) node[midway, above, sloped] {$\sqrt{c_i}$};
    
    % Formula - placed below with enough space
    \node at (0,-2.8) [text width=10cm, align=center] {$\mathbf{A}_i = \{\boldsymbol{\theta}: c_{i-1} \leq (\boldsymbol{\theta} - \boldsymbol{\mu})^\top \boldsymbol{\Sigma}^{-1} (\boldsymbol{\theta} - \boldsymbol{\mu}) \leq c_i\}$};
    
    % Perfect sampling algorithm - placed below formula
    \node at (0,-4.2) [rectangle, draw, fill=yellow!20, text width=10cm, text centered, minimum height=1.5cm] {
        \textbf{Perfect Sampling:} Draw $T_i \sim \text{Geometric}(\hat{p}_i)$, \\
        then simulate from $t=-T_i$ to $0$ using split-chain representation
    };
\end{tikzpicture}
\caption{Ellipsoidal decomposition: Target distribution decomposed into concentric ellipsoidal regions $\mathbf{A}_i$, enabling perfect iid sampling from each component.}
\label{fig:ellipsoidal}
\end{figure}

\subsection{The Core Idea: Decomposition into Ellipsoidal Regions}

Let \(\pi(\boldsymbol{\theta})\) be the target distribution supported on \(\mathbb{R}^d\), where \(d \geq 1\). As illustrated in Figure \ref{fig:ellipsoidal}, the key insight is to represent the distribution as an infinite mixture of component distributions defined on a nested sequence of compact sets:

\begin{equation}
\pi(\boldsymbol{\theta}) = \sum_{i=1}^{\infty} \pi(\mathbf{A}_i) \pi_i(\boldsymbol{\theta}), \label{eq:mixture_rep}
\end{equation}

where \(\mathbf{A}_i\) are disjoint compact subsets of \(\mathbb{R}^d\) such that \(\cup_{i=1}^{\infty} \mathbf{A}_i = \mathbb{R}^d\), and

\begin{equation*}
\pi_i(\boldsymbol{\theta}) = \frac{\pi(\boldsymbol{\theta})}{\pi(\mathbf{A}_i)} I_{\mathbf{A}_i}(\boldsymbol{\theta})
\end{equation*}

is the distribution of \(\boldsymbol{\theta}\) restricted to \(\mathbf{A}_i\). The mixing probabilities \(\pi(\mathbf{A}_i) = \int_{\mathbf{A}_i} \pi(d\boldsymbol{\theta}) \geq 0\) satisfy \(\sum_{i=1}^{\infty} \pi(\mathbf{A}_i) = 1\).

The elegance of this representation lies in its interpretation: the central ellipsoid \(\mathbf{A}_1\) captures the modal region of the distribution, while the annuli \(\mathbf{A}_i\) for \(i \geq 2\) represent increasingly distant tail regions. By constructing these sets appropriately, we transform the problem of sampling from a complicated distribution on an unbounded domain into a collection of simpler problems on compact domains.

\subsection{Constructing the Ellipsoidal Sequence}

For an appropriate \(d\)-dimensional vector \(\boldsymbol{\mu}\) and \(d \times d\) positive definite scale matrix \(\boldsymbol{\Sigma}\), we define:

\begin{equation*}
\mathbf{A}_i = \{\boldsymbol{\theta}: c_{i-1} \leq (\boldsymbol{\theta} - \boldsymbol{\mu})^\top \boldsymbol{\Sigma}^{-1} (\boldsymbol{\theta} - \boldsymbol{\mu}) \leq c_i\}, \quad i = 1,2,\ldots,
\end{equation*}

with \(0 = c_0 < c_1 < c_2 < \cdots\). Thus, \(\mathbf{A}_1\) is a closed ellipsoid centred at \(\boldsymbol{\mu}\) with shape determined by \(\boldsymbol{\Sigma}\), while for \(i \geq 2\), \(\mathbf{A}_i\) are ellipsoidal annuli—the regions between two successive concentric ellipsoids.

The parameters \(\boldsymbol{\mu}\) and \(\boldsymbol{\Sigma}\) can be estimated from a preliminary sample (e.g., from a short MCMC run or domain knowledge). Theoretically, any choice is valid, but judicious choices improve efficiency: \(\boldsymbol{\mu}\) should be near the centre of mass (or coordinate-wise median) of \(\pi\), and \(\boldsymbol{\Sigma}\) should approximate its covariance structure.

\subsection{Estimating the Mixing Probabilities}

The mixing probabilities \(\pi(\mathbf{A}_i)\) are not known a priori. However, they can be estimated using Monte Carlo sampling from the uniform distribution on each \(\mathbf{A}_i\). Writing \(\pi(\boldsymbol{\theta}) = C \tilde{\pi}(\boldsymbol{\theta})\) where \(C > 0\) is the unknown normalising constant, we have:

\begin{equation}
\pi(\mathbf{A}_i) = C \mathcal{L}(\mathbf{A}_i) \mathbb{E}_{\mathbf{A}_i}[\tilde{\pi}(\boldsymbol{\theta})], \label{eq:mixing_estimate}
\end{equation}

where \(\mathcal{L}(\mathbf{A}_i)\) is the Lebesgue measure of \(\mathbf{A}_i\) and the expectation is with respect to the uniform distribution on \(\mathbf{A}_i\). The Lebesgue measures are available in closed form:

\begin{align*}
\mathcal{L}(\mathbf{A}_1) &= |\mathbf{B}| \times \frac{\pi^{d/2}}{\Gamma(d/2 + 1)} c_1^{d/2}, \\
\mathcal{L}(\mathbf{A}_i) &= |\mathbf{B}| \times \frac{\pi^{d/2}}{\Gamma(d/2 + 1)} \left(c_i^{d/2} - c_{i-1}^{d/2}\right), \quad i \geq 2,
\end{align*}

where \(\boldsymbol{\Sigma} = \mathbf{B}\mathbf{B}^\top\) and \(|\mathbf{B}|\) denotes the determinant of the Cholesky factor \(\mathbf{B}\).

Uniform sampling from \(\mathbf{A}_1\) is straightforward: generate \(\mathbf{X} = (X_1,\ldots,X_d)^\top\) with \(X_k \stackrel{iid}{\sim} N(0,1)\), draw \(U \sim U(0,1)\), set \(\mathbf{Y} = \sqrt{c_1} U^{1/d} \mathbf{X}/\|\mathbf{X}\|\), and finally \(\boldsymbol{\theta} = \boldsymbol{\mu} + \mathbf{B}\mathbf{Y}\). For \(\mathbf{A}_i\) with \(i \geq 2\), when the annuli are narrow, a more efficient method samples near the surface of the ellipsoid by using a large ``pseudo-dimension'' \(\tilde{d}\) in the transformation, effectively concentrating mass near the boundary.

Let \(\hat{\pi}(\mathbf{A}_i)\) denote the estimate of \(\pi(\mathbf{A}_i)\) obtained from \(N_i\) Monte Carlo samples. For sufficiently large \(N_i\), we can ensure that for any \(\epsilon > 0\),

\begin{equation}
1 - \epsilon \leq \frac{{\pi}(\mathbf{A}_i)}{\hat\pi(\mathbf{A}_i)} \leq 1 + \epsilon. \label{eq:estimate_bound}
\end{equation}
Since $\sum_{i=1}^{\infty}\pi(\bA_i)=1$, it follows from (\ref{eq:estimate_bound}), that for any $\epsilon\in (0,1)$, 
\begin{equation*}
	\frac{1}{1+\epsilon}\leq \sum_{i=1}^{\infty}\hat\pi(\bA_i)\leq \frac{1}{1-\epsilon},%\label{eq:sum_bound}
\end{equation*}
%
%Furthermore, for sufficiently large \(M\),
%
%\begin{equation}
%1 - \epsilon \leq C \sum_{i=1}^M \hat{\pi}(\mathbf{A}_i) \leq 1 + \epsilon. \label{eq:normalisation_bound}
%\end{equation}
%
which allows us to construct an approximate, but well-defined density
\begin{equation*}
	\hat{\pi}(\boldsymbol{\theta}) =  \sum_{i=1}^{\infty} \hat{\pi}(\mathbf{A}_i) \pi_i(\boldsymbol{\theta}).
\end{equation*}
%
%which is well-defined due to \eqref{eq:normalisation_bound}. 
The above constructions and observations yield the following result: 

\begin{theorem}[Total variation bound]
	\label{theorem:tv_bound}
%Suppose that for all \(\boldsymbol{\theta} \in \mathbb{R}^d\) the density estimates satisfy
%\[
%1-\epsilon \;\le\; \frac{\pi(\boldsymbol{\theta})}{\hat\pi(\boldsymbol{\theta})} \;\le\; 1+\epsilon ,
%\]
%as implied by the component‑wise \(\epsilon\)‑relative error of the mixing weights (equation~(4) of \citet{bhattacharya2025iid}). Then
The total variation norm between $\pi$ and $\hat\pi$ is
\[
%\|\pi - \hat\pi\|_{\mathrm{TV}} \;\le\; \frac{\epsilon}{2(1+\epsilon)} < \frac{\epsilon}{2} = \mathcal{O}(\epsilon).
\|\pi - \hat\pi\|_{\mathrm{TV}} \; \leq \frac{\epsilon}{2}. %= \mathcal{O}(\epsilon).
\]
\end{theorem}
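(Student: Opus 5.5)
The plan is to reduce the total variation distance to a sum over the ellipsoidal pieces $\bA_i$, and then apply the relative-error bound (\ref{eq:estimate_bound}) piece by piece. Write $\|\pi-\hat\pi\|_{\mathrm{TV}}=\frac12\int_{\mathbb{R}^d}|\pi(\btheta)-\hat\pi(\btheta)|\,\dd\btheta$. Both $\pi$ and $\hat\pi$ are mixtures of the same components $\pi_i$, as in (\ref{eq:mixture_rep}) and the definition of $\hat\pi$. On $\bA_i$ (up to the Lebesgue-null boundaries) only the $i$-th component is nonzero, so there
\[
\pi(\btheta)-\hat\pi(\btheta)=\bigl(\pi(\bA_i)-\hat\pi(\bA_i)\bigr)\pi_i(\btheta).
\]
Integrating over each $\bA_i$ and using $\int\pi_i=1$ gives the exact identity
\[
\|\pi-\hat\pi\|_{\mathrm{TV}}=\tfrac12\sum_{i=1}^\infty\bigl|\pi(\bA_i)-\hat\pi(\bA_i)\bigr|.
\]
The problem is thereby reduced to a statement about the two weight sequences.

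The second step bounds each term using (\ref{eq:estimate_bound}). I will read the accuracy requirement on $\hat\pi(\bA_i)$ as a relative error measured against the true weight:
\[
|\hat\pi(\bA_i)-\pi(\bA_i)|\le\epsilon\,\pi(\bA_i).
\]
This is exactly what "sufficiently large $N_i$" can deliver. Under this reading, summing and using $\sum_i\pi(\bA_i)=1$ gives
\[
\sum_i|\pi(\bA_i)-\hat\pi(\bA_i)|\le\epsilon,
\]
and hence $\|\pi-\hat\pi\|_{\mathrm{TV}}\le\epsilon/2$.

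The main obstacle is the direction of the ratio in (\ref{eq:estimate_bound}). As written, $1-\epsilon\le\pi(\bA_i)/\hat\pi(\bA_i)\le 1+\epsilon$ yields only $|\pi(\bA_i)-\hat\pi(\bA_i)|\le\epsilon\,\hat\pi(\bA_i)$. Summing then gives $\epsilon\sum_i\hat\pi(\bA_i)\le\epsilon/(1-\epsilon)$, so the bound is $\epsilon/(2(1-\epsilon))$ rather than $\epsilon/2$.

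I would handle this as follows. Inverting the ratio shows $\hat\pi(\bA_i)/\pi(\bA_i)\in[1/(1+\epsilon),1/(1-\epsilon)]$. So if (\ref{eq:estimate_bound}) is imposed with $\epsilon$ replaced by $\epsilon'=\epsilon/(1+\epsilon)$, then
\[
\frac{1}{1-\epsilon'}=1+\epsilon\quad\text{and}\quad\frac{1}{1+\epsilon'}\ge 1-\epsilon,
\]
which gives precisely $|\hat\pi(\bA_i)-\pi(\bA_i)|\le\epsilon\,\pi(\bA_i)$. Since $N_i$ may be taken as large as we like, this re-parametrisation costs nothing, and the $\epsilon/2$ bound follows as in the second step.

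I would remark in the write-up that $\hat\pi$ need not integrate to one. The identity in the first step still holds because it is computed directly as an $L^1$ distance. The bound $\frac{1}{1+\epsilon}\le\sum_i\hat\pi(\bA_i)\le\frac{1}{1-\epsilon}$ is needed only to guarantee that $\hat\pi$ is a well-defined finite measure.
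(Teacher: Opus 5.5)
Your argument is correct, and computationally it is the same as the paper's. The paper turns \eqref{eq:estimate_bound} into the pointwise bound $(1-\epsilon)\hat\pi(\btheta)\le\pi(\btheta)\le(1+\epsilon)\hat\pi(\btheta)$ and integrates $|\pi-\hat\pi|\le\epsilon\,\hat\pi$. Restricted to each $\bA_i$, that is exactly your termwise bound $|\pi(\bA_i)-\hat\pi(\bA_i)|\le\epsilon\,\hat\pi(\bA_i)$.

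The two accounts differ only in the last line, and there you are right and the paper is not. The paper evaluates $\frac12\int\epsilon\,\hat\pi$ as $\epsilon/2$ on the grounds that ``both $\pi$ and $\hat\pi$ are probability densities''. But $\int\hat\pi=\sum_i\hat\pi(\bA_i)$, which the paper's own display only places in $[1/(1+\epsilon),1/(1-\epsilon)]$. Carried out honestly, the paper's argument gives your $\epsilon/(2(1-\epsilon))$. It reaches $\epsilon/2$ only by tacitly assuming $\sum_i\hat\pi(\bA_i)=1$.

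Your exact identity also shows the loss is genuine, not an artefact of the method. Take $\hat\pi(\bA_i)=\pi(\bA_i)/(1-\epsilon)$ for every $i$: this satisfies \eqref{eq:estimate_bound} and gives $\|\pi-\hat\pi\|_{\mathrm{TV}}=\epsilon/(2(1-\epsilon))$ exactly. So the statement with the same $\epsilon$ is false for $\hat\pi$ as constructed. Normalising $\hat\pi$ afterwards does not restore the literal constant either: the worst case under \eqref{eq:estimate_bound} then becomes $(1-\sqrt{1-\epsilon^2})/\epsilon$, slightly above $\epsilon/2$.

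Your re-parametrisation $\epsilon'=\epsilon/(1+\epsilon)$ is therefore necessary, not cosmetic. You should present the result in corrected form, with the example above to justify the change, rather than as a proof of the statement verbatim:
\begin{itemize}
\item relative error at most $\epsilon$ measured against $\pi(\bA_i)$ (implied by \eqref{eq:estimate_bound} at level $\epsilon/(1+\epsilon)$) gives $\|\pi-\hat\pi\|_{\mathrm{TV}}\le\epsilon/2$;
\item \eqref{eq:estimate_bound} at level $\epsilon$ gives the sharp bound $\epsilon/(2(1-\epsilon))$.
\end{itemize}

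What your route buys is an equality in place of an inequality. That makes the dependence on the total mass of $\hat\pi$ visible and yields sharpness at no extra cost. The paper's pointwise route has the merit of producing the density-ratio bounds \eqref{eq:p5}--\eqref{eq:p6} that drive the rejection-sampling argument around \eqref{eq:rejection_bound}. Its final line, however, needs exactly the repair you supply.
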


\begin{proof}
Due to the upper bound of (\ref{eq:estimate_bound}), for all $\btheta\in\mathbb R^d$, 
\begin{equation*}
	\pi(\btheta)=\sum_{i=1}^{\infty}{\hat\pi(\bA_i)}\times\frac{\pi(\bA_i)}{\hat\pi(\bA_i)}\times\pi_i(\btheta)
	\leq (1+\epsilon)\sum_{i=1}^{\infty}{\hat\pi(\bA_i)}\pi_i(\btheta)=(1+\epsilon)\hat\pi(\btheta).
\end{equation*}
That is, for all $\btheta\in\mathbb R^d$,
\begin{equation}
	%\frac{\pi(\btheta)}{\sum_{i=1}^{\infty}\widehat{\pi(\bA_i)}\pi_i(\btheta)}\leq 1+\epsilon.
	\frac{\pi(\btheta)}{\hat\pi(\btheta)}\leq 1+\epsilon,
	\label{eq:p5}
\end{equation}
	Similarly, the lower bound of (\ref{eq:estimate_bound}) yields
\begin{equation}
	%\frac{\pi(\btheta)}{\sum_{i=1}^{\infty}\widehat{\pi(\bA_i)}\pi_i(\btheta)}\leq 1+\epsilon.
	\frac{\pi(\btheta)}{\hat\pi(\btheta)}\geq 1-\epsilon,
	\label{eq:p6}
\end{equation}
	Combining (\ref{eq:p5}) and (\ref{eq:p6}) we obtain \(|\pi(\boldsymbol{\theta}) - \hat\pi(\boldsymbol{\theta})| \le \epsilon\,\hat\pi(\boldsymbol{\theta})\) for every \(\boldsymbol{\theta}\). Because both \(\pi\) and \(\hat\pi\) are probability densities,
\[
\|\pi - \hat\pi\|_{\mathrm{TV}}
 = \frac12 \int_{\mathbb{R}^d} |\pi(\boldsymbol{\theta}) - \hat\pi(\boldsymbol{\theta})|\,\dd\boldsymbol{\theta}
 \le \frac12 \int_{\mathbb{R}^d} \epsilon\,\hat\pi(\boldsymbol{\theta})\,\dd\boldsymbol{\theta}
 = \frac{\epsilon}{2}.
\]
\end{proof}

Crucially, for sufficiently small \(\epsilon > 0\), sampling from \(\hat{\pi}(\boldsymbol{\theta})\) 
is essentially equivalent to sampling from the true target \(\pi(\boldsymbol{\theta})\) via a rejection sampling argument: sample $\boldsymbol{\theta}$ from 
$\hat\pi(\boldsymbol{\theta})$ and $U$ from $U(0,1)$ until 
the following is satisfied: $$U<\frac{\pi(\boldsymbol{\theta})}{(1+\epsilon)\hat\pi(\boldsymbol{\theta})}.$$
Since
\begin{equation}
\frac{\pi(\boldsymbol{\theta})}{(1+\epsilon)\hat{\pi}(\boldsymbol{\theta})} \geq \frac{1-\epsilon}{1+\epsilon} \approx 1, \quad \forall \boldsymbol{\theta} \in \mathbb{R}^d, \label{eq:rejection_bound}
\end{equation}
and $U<1$ almost surely,
it follows that, with \(\epsilon\) sufficiently small, any sample from \(\hat{\pi}(\boldsymbol{\theta})\) is accepted with probability nearly one, making the approximation practically exact.

\subsection{Perfect Sampling from the Component Distributions}

The key to the ellipsoidal decomposition framework is the ability to draw perfect samples from each component distribution \(\pi_i(\boldsymbol{\theta})\) on the compact set \(\mathbf{A}_i\). This is achieved through a novel perfect sampling scheme based on a minorisation inequality for the Metropolis-Hastings algorithm with a uniform proposal density on \(\mathbf{A}_i\).

For \(\pi_i\), consider the Metropolis-Hastings algorithm in which all coordinates are updated simultaneously using an independence uniform proposal:

\begin{equation*}
q_i(\boldsymbol{\theta}') = \frac{1}{\mathcal{L}(\mathbf{A}_i)} I_{\mathbf{A}_i}(\boldsymbol{\theta}').
\end{equation*}

This choice is deliberate: since \(\mathbf{A}_i\) are constructed so that \(\pi_i\) is relatively flat on these sets (the central ellipsoid captures the mode, the annuli capture relatively flat tail regions), the uniform proposal provides a good approximation to the target, leading to a large minorisation constant.

Let \(s_i = \inf_{\boldsymbol{\theta} \in \mathbf{A}_i} \tilde{\pi}(\boldsymbol{\theta})\) and \(S_i = \sup_{\boldsymbol{\theta} \in \mathbf{A}_i} \tilde{\pi}(\boldsymbol{\theta})\). Then for the Metropolis-Hastings transition kernel \(P_i(\boldsymbol{\theta}, \cdot)\),

\begin{align*}
P_i(\boldsymbol{\theta}, \mathbb{B} \cap \mathbf{A}_i) &\geq \int_{\mathbb{B} \cap \mathbf{A}_i} \min\left\{1, \frac{\tilde{\pi}(\boldsymbol{\theta}')}{\tilde{\pi}(\boldsymbol{\theta})}\right\} q_i(\boldsymbol{\theta}') d\boldsymbol{\theta}' \\
&\geq \left(\frac{s_i}{S_i}\right) \times \frac{\mathcal{L}(\mathbb{B} \cap \mathbf{A}_i)}{\mathcal{L}(\mathbf{A}_i)} \\
&= p_i Q_i(\mathbb{B} \cap \mathbf{A}_i),
\end{align*}

where \(p_i = s_i/S_i\) and \(Q_i\) is the uniform probability measure on \(\mathbf{A}_i\). In practice, we use Monte Carlo estimates \(\hat{s}_i\) and \(\hat{S}_i\) of the infimum and supremum, obtaining \(\hat{p}_i = \hat{s}_i/\hat{S}_i - \eta_i\) with \(\eta_i\) very small (e.g., \(10^{-5}\)) to ensure the inequality holds.

This minorisation leads to a split-chain representation:

\begin{equation}
P_i(\boldsymbol{\theta}, \cdot) = \hat{p}_i Q_i(\cdot) + (1 - \hat{p}_i) R_i(\boldsymbol{\theta}, \cdot), \label{eq:split_chain}
\end{equation}

where \(R_i\) is the residual distribution. This representation has a remarkable consequence: the time to coalescence for two independent chains started from arbitrary initial states is geometrically distributed with parameter \(\hat{p}_i\). More precisely, if we run two independent chains using the same sequence of uniform random numbers, the time \(T_i\) until they first meet satisfies

\begin{equation}
\Prob(T_i = k) = (1 - \hat{p}_i)^{k-1} \hat{p}_i, \quad k = 1,2,\ldots \label{eq:coalescence_time}
\end{equation}

This geometric coalescence time enables a simple perfect sampling algorithm:

\begin{algorithm}[H]
\caption{Perfect Sampling from \(\pi_i\) (Adapted from \citet{bhattacharya2025iid})}
\label{alg:perfect_sampling}
\begin{algorithmic}[1]
\State Draw \(T_i \sim \text{Geometric}(\hat{p}_i)\) from \eqref{eq:coalescence_time}
\State Draw \(\boldsymbol{\theta}^{(-T_i)} \sim Q_i(\cdot)\) (uniformly from \(\mathbf{A}_i\))
\For{\(t = -T_i, -T_i+1, \ldots, -1\)}
    \State Simulate \(\boldsymbol{\theta}^{(t+1)} \sim R_i(\boldsymbol{\theta}^{(t)}, \cdot)\) using rejection sampling based on the Metropolis-Hastings kernel
\EndFor
\State Return \(\boldsymbol{\theta}^{(0)}\) as a perfect sample from \(\pi_i\)
\end{algorithmic}
\end{algorithm}

The rejection sampling step for the residual distribution \(R_i\) is implemented by noting that

\begin{equation}
\frac{\tilde{R}_i(\boldsymbol{\theta}, \boldsymbol{\theta}')}{\tilde{P}_i(\boldsymbol{\theta}, \boldsymbol{\theta}')} \leq \frac{1}{1 - \hat{p}_i}, \quad \forall \boldsymbol{\theta}, \boldsymbol{\theta}' \in \mathbf{A}_i, \label{eq:rejection_ratio}
\end{equation}

where \(\tilde{R}_i\) and \(\tilde{P}_i\) are the densities of \(R_i\) and \(P_i\) respectively. Thus, given \(\boldsymbol{\theta}^{(t)}\), we can simulate \(\boldsymbol{\theta}' \sim \tilde{P}_i(\boldsymbol{\theta}^{(t)}, \cdot)\) using the Metropolis-Hastings kernel and accept it as a draw from \(R_i\) with probability \((1 - \hat{p}_i)\tilde{R}_i(\boldsymbol{\theta}^{(t)}, \boldsymbol{\theta}')/\tilde{P}_i(\boldsymbol{\theta}^{(t)}, \boldsymbol{\theta}')\).

This perfect sampling scheme avoids the computationally expensive coupling-from-the-past (CFTP) algorithm of \citet{propp1996exact}, making it practical for high-dimensional problems. The key innovation is using the split-chain representation to transform the problem of finding coalescence times into a simple geometric draw.

\subsection{The Complete IID Sampling Algorithm}

Combining the component selection and perfect sampling steps yields the complete algorithm for generating iid samples from any target distribution \(\pi\):

\begin{algorithm}[H]
\caption{IID Sampling from Target Distribution \(\pi\) \citep{bhattacharya2025iid}}
\label{alg:iid_sampling}
\begin{algorithmic}[1]
\State Obtain estimates \(\boldsymbol{\mu}\) and \(\boldsymbol{\Sigma}\) for the ellipsoidal sets (e.g., from a short MCMC run)
\State Choose initial \(M\) (number of ellipsoidal regions) and radii \(\sqrt{c_i}\) appropriately
\State \textbf{Parallel Step:} For \(i = 1,\ldots,M\), in parallel processors:
    \State \quad Generate \(N_i\) uniform samples from \(\mathbf{A}_i\)
    \State \quad Compute Monte Carlo estimates \(\hat{\pi}(\mathbf{A}_i)\) using \eqref{eq:mixing_estimate}
    \State \quad Compute \(\hat{s}_i = \min \tilde{\pi}(\boldsymbol{\theta})\) and \(\hat{S}_i = \max \tilde{\pi}(\boldsymbol{\theta})\) over the samples
    \State \quad Set \(\hat{p}_i = \hat{s}_i/\hat{S}_i - \eta_i\) with \(\eta_i\) small (e.g., \(10^{-5}\))
\State Each processor broadcasts its estimates to all others
\State \textbf{Parallel IID Generation:} For each desired iid sample (in parallel processors):
    \State \quad Select component \(i\) with probability proportional to \(\hat{\pi}(\mathbf{A}_i)\) using \eqref{eq:component_selection}
    \State \quad If \(i = M\), increase \(M\) to \(2M\) and return to Step 3 for the new indices
    \State \quad Generate perfect sample \(\boldsymbol{\theta}^{(0)}\) from \(\pi_i\) using Algorithm \ref{alg:perfect_sampling}
    \State \quad Return \(\boldsymbol{\theta}^{(0)}\) as a perfect iid realisation from \(\pi\)
\State Aggregate all iid samples for downstream analysis
\end{algorithmic}
\end{algorithm}

The component selection in Line 10 of the algorithm uses a uniform random number \(U \sim U(0,1)\) and selects the smallest \(i\) such that

\begin{equation}
\frac{\sum_{j=1}^{i-1} \hat{\pi}(\mathbf{A}_j)}{\sum_{j=1}^M \hat{\pi}(\mathbf{A}_j)} < U \leq \frac{\sum_{j=1}^i \hat{\pi}(\mathbf{A}_j)}{\sum_{j=1}^M \hat{\pi}(\mathbf{A}_j)}. \label{eq:component_selection}
\end{equation}

For sufficiently small \(\epsilon\) in \eqref{eq:estimate_bound}, %-\eqref{eq:normalisation_bound}, 
this selection is effectively equivalent to sampling from the true mixing probabilities.

The schematic diagram of the iid sampling algorithm is provided in Figure \ref{fig:iid_algorithm}.

\subsection{Theoretical Guarantees and Practical Performance}

The ellipsoidal decomposition framework comes with strong theoretical guarantees. The rejection sampling argument \eqref{eq:rejection_bound} ensures that for sufficiently small \(\epsilon\), sampling from \(\hat{\pi}(\boldsymbol{\theta})\) is practically equivalent to sampling from the true target \(\pi(\boldsymbol{\theta})\); the precise formalism is
provided by Theorem \ref{theorem:tv_bound}. The perfect sampling algorithm for each component \(\pi_i\) produces practically exact draws from the component distribution, with coalescence time geometrically distributed and thus almost surely finite.

Empirical validation across a wide range of distributions demonstrates the practical power of this approach. \citet{bhattacharya2025iid} report generating 10,000 iid realisations from normal distributions in dimensions \(d = 1,5,10,50,100\) with computation times ranging from 6 minutes to 1 hour 45 minutes; the iid samples accurately capture both marginal densities and correlation structures. For Student's \(t\) distributions with 5 degrees of freedom in dimensions up to 100, computation times were as low as less than a minute for \(d=100\); the thicker tails lead to larger minorisation constants \(\hat{p}_i\) and thus faster coalescence. Cauchy distributions in dimensions up to 100 were completed in under 2 minutes, with extremely thick tails yielding the largest minorisation constants and fastest sampling. A 50-dimensional mixture of normals (two components with means \(\boldsymbol{\nu}\) and \(2\boldsymbol{\nu}\), mixing proportions \(2/3\) and \(1/3\)) was completed in 7 minutes with highly accurate results.

Perhaps most impressively, the framework has been successfully applied to generate 10,000 iid realisations from the 160-dimensional posterior distribution for radionuclide count data on Rongelap Island—a problem widely regarded as extremely difficult for MCMC convergence. Using a diffeomorphic transformation to flatten the posterior and improve minorisation constants, the iid samples were generated in just 30 minutes, with results closely matching those from a carefully tuned run of Transformation-based Markov Chain Monte Carlo
(TMCMC) \citep{dutta2014markov} that took 1 hour 40 minutes.

\begin{figure}[htbp]
\centering
\begin{tikzpicture}[node distance=2.8cm, scale=0.9, transform shape]
    % Initialisation
    \node (start) [rectangle, draw, fill=gray!20, minimum width=6cm, text centered] {Target $\pi(\boldsymbol{\theta})$};
    \node (estimate) [rectangle, draw, fill=blue!20, below of=start, minimum width=6cm, text centered] {Estimate $\boldsymbol{\mu}$, $\boldsymbol{\Sigma}$ from preliminary sample};
    
    % Parallel estimation phase
    \node (parallel1) [rectangle, draw, fill=green!20, below of=estimate, minimum width=8cm, text width=8cm, align=center, yshift=-0.8cm] {
        \textbf{Parallel Step:} For $i=1,\ldots,M$ (in parallel) \\
        \quad - Generate $N_i$ uniform samples from $\mathbf{A}_i$ \\
        \quad - Compute $\hat{\pi}(\mathbf{A}_i) = C\mathcal{L}(\mathbf{A}_i)\mathbb{E}_{\mathbf{A}_i}[\tilde{\pi}(\boldsymbol{\theta})]$ \\
        \quad - Compute $\hat{s}_i = \min \tilde{\pi}$, $\hat{S}_i = \max \tilde{\pi}$, $\hat{p}_i = \hat{s}_i/\hat{S}_i - \eta_i$
    };
    
    % Component selection
    \node (select) [rectangle, draw, fill=yellow!20, below of=parallel1, minimum width=8cm, text width=8cm, align=center, yshift=-1.0cm] {
        \textbf{Component Selection:} Draw $U \sim U(0,1)$, select smallest $i$ such that \\
        $\frac{\sum_{j=1}^{i-1} \hat{\pi}(\mathbf{A}_j)}{\sum_{j=1}^M \hat{\pi}(\mathbf{A}_j)} < U \leq \frac{\sum_{j=1}^i \hat{\pi}(\mathbf{A}_j)}{\sum_{j=1}^M \hat{\pi}(\mathbf{A}_j)}$
    };
    
    % Perfect sampling
    \node (perfect) [rectangle, draw, fill=purple!20, below of=select, minimum width=8cm, text width=8cm, align=center, yshift=-1.2cm] {
        \textbf{Perfect Sampling from $\pi_i$:} \\
        \quad 1. Draw $T_i \sim \text{Geometric}(\hat{p}_i)$ \\
        \quad 2. Draw $\boldsymbol{\theta}^{(-T_i)} \sim Q_i$ (uniform on $\mathbf{A}_i$) \\
        \quad 3. For $t = -T_i, -T_i+1, \ldots, -1$: simulate $\boldsymbol{\theta}^{(t+1)} \sim R_i(\boldsymbol{\theta}^{(t)}, \cdot)$ \\
        \quad 4. Return $\boldsymbol{\theta}^{(0)}$ as perfect sample from $\pi_i$
    };
    
    % Output
    \node (output) [rectangle, draw, fill=red!20, below of=perfect, minimum width=6cm, text centered, yshift=-1.0cm] {IID Sample from $\pi$};
    
    % Connections
    \draw[->] (start) -- (estimate);
    \draw[->] (estimate) -- (parallel1);
    \draw[->] (parallel1) -- (select);
    \draw[->] (select) -- (perfect);
    \draw[->] (perfect) -- (output);
    
    % Rejection bound - placed much further below with clear separation
    \node at (0,-18.5) [text width=10cm, align=center, yshift=-1.5cm] {$\frac{\pi(\boldsymbol{\theta})}{(1+\epsilon)\hat{\pi}(\boldsymbol{\theta})} \geq \frac{1-\epsilon}{1+\epsilon} \approx 1$ for small $\epsilon$};
\end{tikzpicture}
\caption{Practically perfect IID sampling algorithm using ellipsoidal decomposition framework.}
\label{fig:iid_algorithm}
\end{figure}

\subsection{Extensions and Generalisations}

The foundational iid sampling methodology has spawned a rich sequence of extensions that further enhance its applicability to online learning contexts. \citet{bhattacharya2021multimodal} extended the framework to multimodal and variable-dimensional distributions by constructing separate ellipsoidal sequences for each mode, with appropriate mixing probabilities across modes; variable-dimensional problems such as mixture models with unknown number of components are also handled within the perfect sampling framework. For doubly intractable distributions, where the normalising constant depends on parameters and is analytically intractable, \citet{bhattacharya2021doubly} combined Monte Carlo and importance sampling approximations with Gaussian process interpolations to achieve highly accurate iid sampling; applications include Ising models, Strauss processes, and autologistic models in dimensions up to 100. \citet{bhattacharya2022dirichlet} enabled practically exact iid sampling from posterior Dirichlet process mixtures—distributions that are notoriously difficult to sample using MCMC due to their infinite-dimensional and discrete nature; the method is validated on benchmark datasets (enzyme, acidity, galaxy) and generates 10,000 iid realisations in minutes using parallel computing. Most recently, \citet{bhattacharya2025flows} extended the iid sampling procedure to Bayesian nonparametrics based on random normalising flows—distributions on non-Euclidean spaces induced by compositions of monotone Gaussian processes; this extension highlights the convergence challenges of MCMC in high dimensions and the necessity of iid sampling---indeed, iid sampling has been applied to a simulated data, associated with $4776$ parameters pf the normalizing flows model. These extensions demonstrate that the ellipsoidal decomposition framework is not merely a standalone method but a versatile foundation upon which a comprehensive approach to practically exact Bayesian computation can be built—an essential infrastructure for the Bayesian reflex.

\section{Methodological Foundations: From Kalman to Particles}

\subsection{The Kalman Filter: Exact Online Inference for Linear Gaussian Systems}

The Kalman filter stands as the archetypal Bayesian online learning algorithm, providing optimal recursive state estimation for linear Gaussian systems since its development in the 1960s \citep{kalman1960new}. Its elegance, efficiency, and theoretical optimality have made it indispensable in countless applications from aerospace navigation to financial forecasting \citep{kailath2000linear, anderson1979optimal}.

Consider a latent state \(\theta_t \in \mathbb{R}^d\) evolving according to linear dynamics with Gaussian noise: \(\theta_t = F_t \theta_{t-1} + w_t\) with \(w_t \sim \mathcal{N}(0, Q_t)\), and observations given by \(x_t = H_t \theta_t + v_t\) with \(v_t \sim \mathcal{N}(0, R_t)\), where \(F_t\) is the state transition matrix, \(H_t\) is the observation matrix, and \(w_t, v_t\) are Gaussian noise processes with covariances \(Q_t\) and \(R_t\) respectively.

\begin{theorem}[Kalman Filter Recursions \citep{kalman1960new}]
For the linear Gaussian state-space model, the posterior distribution \(p(\theta_t \mid x_{1:t})\) is Gaussian with mean \(\hat{\theta}_{t|t}\) and covariance \(P_{t|t}\) given by the following recursions:

\textbf{Prediction step:}
\begin{align*}
\hat{\theta}_{t|t-1} &= F_t \hat{\theta}_{t-1|t-1}, \\
P_{t|t-1} &= F_t P_{t-1|t-1} F_t^\top + Q_t.
\end{align*}

\textbf{Innovation:}
\begin{align*}
\tilde{y}_t &= x_t - H_t \hat{\theta}_{t|t-1}, \\
S_t &= H_t P_{t|t-1} H_t^\top + R_t.
\end{align*}

\textbf{Kalman gain:}
\begin{equation*}
K_t = P_{t|t-1} H_t^\top S_t^{-1}.
\end{equation*}

\textbf{Update step:}
\begin{align*}
\hat{\theta}_{t|t} &= \hat{\theta}_{t|t-1} + K_t \tilde{y}_t, \\
P_{t|t} &= (I - K_t H_t) P_{t|t-1}.
\end{align*}
\end{theorem}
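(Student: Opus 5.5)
We argue by induction on \(t\), showing that if \(p(\theta_{t-1}\mid x_{1:t-1}) = \mathcal{N}(\hat{\theta}_{t-1|t-1}, P_{t-1|t-1})\), then \(p(\theta_t \mid x_{1:t})\) is Gaussian with the stated mean and covariance. The base case is the Gaussian initial prior \(\theta_0 \sim \mathcal{N}(\hat{\theta}_{0|0}, P_{0|0})\), assumed implicitly in the statement. Each step splits into a prediction (marginalisation) stage and an update (conditioning) stage. This mirrors the recursion \eqref{eq:sequential_update}, with a transition step inserted because the state evolves.

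\textbf{Prediction.} By the Markov structure, \(p(\theta_t \mid x_{1:t-1}) = \int p(\theta_t\mid\theta_{t-1})\, p(\theta_{t-1}\mid x_{1:t-1})\,\dd\theta_{t-1}\). Equivalently, \(\theta_t = F_t\theta_{t-1} + w_t\), where, conditional on \(x_{1:t-1}\), \(\theta_{t-1}\) is Gaussian and \(w_t\) is an independent Gaussian. A linear map of a Gaussian plus an independent Gaussian is Gaussian. Linearity of expectation gives the mean \(F_t\hat{\theta}_{t-1|t-1}\), and bilinearity of covariance together with independence gives \(F_tP_{t-1|t-1}F_t^\top + Q_t\). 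This step is routine.

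\textbf{Update.} Conditional on \(x_{1:t-1}\), the pair \((\theta_t, x_t)\) is jointly Gaussian, since \(x_t = H_t\theta_t + v_t\) and \(v_t\) is independent of \(\theta_t\) and of the past. Its moments are
\begin{equation*}
\mathbb{E}[x_t\mid x_{1:t-1}] = H_t\hat{\theta}_{t|t-1},\quad \Var(x_t\mid x_{1:t-1}) = H_tP_{t|t-1}H_t^\top + R_t = S_t,\quad \mathrm{Cov}(\theta_t,x_t\mid x_{1:t-1}) = P_{t|t-1}H_t^\top .
\end{equation*}
We then apply the standard Gaussian conditioning formula: if \((a,b)\) is jointly Gaussian, then \(a\mid b\) is Gaussian with mean \(\mu_a + \Sigma_{ab}\Sigma_{bb}^{-1}(b-\mu_b)\) and covariance \(\Sigma_{aa} - \Sigma_{ab}\Sigma_{bb}^{-1}\Sigma_{ba}\). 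Substituting the moments above yields mean \(\hat{\theta}_{t|t-1} + K_t\tilde{y}_t\) with \(K_t = P_{t|t-1}H_t^\top S_t^{-1}\). The covariance is \(P_{t|t-1} - K_t H_t P_{t|t-1} = (I - K_tH_t)P_{t|t-1}\). Here we use \(\Sigma_{ba} = H_tP_{t|t-1}\), which follows from the symmetry of \(P_{t|t-1}\).

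\textbf{Main obstacle.} The conditioning lemma is the only substantive ingredient. It could be quoted, or proved by completing the square in \(p(x_t\mid\theta_t)\,p(\theta_t\mid x_{1:t-1})\) via Bayes' theorem \eqref{eq:bayes}. Taking the completing-the-square route instead gives the information-form posterior \((P_{t|t-1}^{-1} + H_t^\top R_t^{-1}H_t)^{-1}\), which then has to be reconciled with \((I-K_tH_t)P_{t|t-1}\) using the Woodbury identity. The lemma also requires \(S_t\) to be invertible, and this holds whenever \(R_t\) is positive definite, which we assume.
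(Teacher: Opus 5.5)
The paper gives no proof of this theorem. It is quoted with a citation to Kalman's 1960 paper, and the text moves straight on to the EKF and UKF, so there is no in-paper argument to compare yours against.

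Your derivation is the standard Bayesian one, and it is correct:
\begin{itemize}
\item The prediction step is a linear image of a Gaussian plus independent Gaussian noise.
\item The update step is Gaussian conditioning applied to the pair $(\theta_t,x_t)$, which is jointly Gaussian given $x_{1:t-1}$.
\item The moment calculations, the gain $K_t=P_{t|t-1}H_t^\top S_t^{-1}$, and the reduction $P_{t|t-1}-K_tH_tP_{t|t-1}=(I-K_tH_t)P_{t|t-1}$ (using the symmetry of $P_{t|t-1}$) all check out.
\end{itemize}

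You are right that a Gaussian law for $\theta_0$ must be supplied. In the same spirit, state explicitly that $\theta_0, w_1, w_2, \ldots, v_1, v_2, \ldots$ are mutually independent. The paper's model description only says that $w_t$ and $v_t$ are Gaussian noise processes, and every ``independent of the past'' claim in your argument rests on this assumption.

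Your joint-Gaussian conditioning route compares favourably with the completing-the-square route you mention. It needs only $S_t$ to be invertible, so it covers singular $Q_t$, $P_{t-1|t-1}$ or $P_{t|t-1}$ with no extra work. In those cases the densities in your prediction integral, or the information-form inverse $P_{t|t-1}^{-1}$, do not exist; your ``equivalently'' random-variable formulation of the prediction step already handles them.

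Kalman's original argument works by orthogonal projection instead. It yields the same recursions for the best linear mean-square estimator without any Gaussian assumption, and uses Gaussianity only to identify that estimator with the conditional mean. Your argument relies on Gaussianity at every step, but in exchange it delivers the full posterior distribution, which is exactly what the theorem asserts.
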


For nonlinear systems, the Extended Kalman Filter (EKF) linearises the dynamics and observation models \citep{julier2004unscented}. The Unscented Kalman Filter (UKF) propagates sigma points through the nonlinear transformations and reconstructs the mean and covariance \citep{julier2004unscented, wan2000unscented}. These methods have been widely used in online learning contexts, though they remain approximations for non-Gaussian settings.

\subsection{Particle Filters: Sequential Monte Carlo Methods}

When dynamics are highly nonlinear or non-Gaussian, particle filters provide a flexible and powerful approximation \citep{doucet2001sequential, doucet2009tutorial, gordon1993novel}. Rather than assuming a specific parametric form for the posterior, particle filters represent the distribution using a set of weighted samples that evolve over time.

A particle approximation to a distribution \(\pi(\theta)\) is a set of weighted samples \(\{\theta^{(i)}, w^{(i)}\}_{i=1}^N\) such that \(\pi(\theta) \approx \sum_{i=1}^N w^{(i)} \delta_{\theta^{(i)}}(\theta)\), where \(\delta\) denotes the Dirac delta function and \(\sum_i w^{(i)} = 1\) \citep{doucet2001sequential}.

The particle filter proceeds through sequential importance sampling with resampling. The bootstrap filter \citep{gordon1993novel} is a simple and popular variant; see
Algorithm \ref{alg:particle_filter}
\begin{algorithm}
\caption{Particle Filter (Bootstrap Filter) \citep{gordon1993novel}}
\label{alg:particle_filter}
\begin{algorithmic}[1]
\State \textbf{Initialisation:} For \(i=1,\ldots,N\), sample \(\theta_0^{(i)} \sim p(\theta_0)\) and set \(w_0^{(i)} = 1/N\).
\For{\(t = 1\) to \(T\)}
    \State \textbf{Propagation:} For \(i=1,\ldots,N\), sample \(\theta_t^{(i)} \sim p(\theta_t \mid \theta_{t-1}^{(i)})\).
    \State \textbf{Weighting:} For \(i=1,\ldots,N\), compute \(w_t^{(i)} = w_{t-1}^{(i)} p(x_t \mid \theta_t^{(i)})\).
    \State \textbf{Normalisation:} \(w_t^{(i)} = w_t^{(i)} / \sum_{j=1}^N w_t^{(j)}\).
    \State \textbf{Resampling:} Compute \(N_{\text{eff}} = 1 / \sum_{i=1}^N (w_t^{(i)})^2\).
    \If{\(N_{\text{eff}} < N_{\text{thresh}}\)}
        \State Resample \(N\) particles with probabilities proportional to \(w_t^{(i)}\), set weights to \(1/N\).
    \EndIf
\EndFor
\end{algorithmic}
\end{algorithm}

The effective sample size \(N_{\text{eff}} = 1 / \sum_{i=1}^N (w_t^{(i)})^2\) measures the quality of the particle approximation. More sophisticated proposal distributions can improve efficiency \citep{doucet2000sequential}. Despite their flexibility, particle filters provide only approximate samples and can suffer from degeneracy in high dimensions \citep{bengtsson2008curse}. The ellipsoidal decomposition framework offers a potential path toward exact sequential Monte Carlo methods that maintain iid samples over time without degeneracy.

\subsection{Variational Online Learning}

For high-dimensional problems where exact inference is intractable and particle methods become inefficient, variational inference provides a scalable alternative rooted in optimisation rather than sampling \citep{blei2017variational, jordan1999introduction}.

Variational inference approximates a target distribution \(\pi(\theta)\) by finding the closest distribution \(q^*(\theta)\) within a tractable family \(\mathcal{Q}\): \(q^* = \arg\min_{q \in \mathcal{Q}} \KL{q(\theta)}{\pi(\theta)}\) \citep{blei2017variational}. Minimising the KL divergence is equivalent to maximizing the evidence lower bound (ELBO) \citep{wainwright2008graphical}: \(\log p(\mathcal{D}) = \KL{q(\theta)}{\pi(\theta \mid \mathcal{D})} + \mathcal{L}(q)\) where \(\mathcal{L}(q) = \mathbb{E}_q[\log p(\mathcal{D}, \theta) - \log q(\theta)]\).

For exponential family models, the optimal variational distribution often takes a convenient form. In the online setting, we maintain a variational posterior \(q_t(\theta)\) that is updated recursively \citep{ghahramani2015probabilistic, nguyen2018variational}: \(q_t(\theta) \approx \frac{p(x_t \mid \theta) q_{t-1}(\theta)}{\int p(x_t \mid \theta') q_{t-1}(\theta') \, d\theta'}\). This leads to online variational inference algorithms. For exponential family models, the natural parameter \(\eta_t\) updates as \(\eta_t = \eta_{t-1} + \nabla_\eta \mathbb{E}_{q_{t-1}}[\log p(x_t \mid \theta)]\).

For mean-field variational families where \(q(\theta) = \prod_j q_j(\theta_j)\), the coordinate ascent updates take the form \(\log q_j(\theta_j) = \mathbb{E}_{q_{-j}}[\log p(\mathcal{D}, \theta)] + \text{constant}\).

Variational methods provide computational efficiency at the cost of approximation error. The ellipsoidal decomposition framework offers a way to validate these approximations by providing practically exact iid samples for comparison, and in some cases, may even replace variational approximations with essentially exact inference for moderate-dimensional problems.

\subsection{Connections to Online Learning Theory}

Bayesian online learning connects deeply with classical online learning theory through the concept of regret \citep{lattimore2020bandit, cesa2006prediction, hazan2016introduction}. The regret of an online algorithm that makes predictions \(\hat{\theta}_t\) and suffers losses \(\ell_t(\hat{\theta}_t)\) is \(R_T = \sum_{t=1}^T \ell_t(\hat{\theta}_t) - \min_{\theta} \sum_{t=1}^T \ell_t(\theta)\) \citep{lattimore2020bandit}.

For Bayesian algorithms based on posterior sampling, we have the following bound \citep{russo2018tutorial}: under suitable conditions on the reward distributions and the prior, Thompson sampling (to be discussed in Section \ref{subsec:thompson_sampling}) achieves expected regret \(\mathbb{E}[R_T] = \tilde{\mathcal{O}}(\sqrt{d_E T})\), where \(d_E\) is the Eluder dimension of the reward function class \citep{russo2014learning}. The Eluder dimension measures how many observations are needed to distinguish between different hypotheses \citep{russo2014learning}. Other popular algorithms include Upper Confidence Bound (UCB) \citep{auer2002finite} and EXP3 \citep{auer2002nonstochastic} for adversarial settings. The availability of practically exact iid samples from the posterior, enabled by the ellipsoidal decomposition framework, allows for sharper theoretical analysis of these regret bounds and potentially tighter constants.

\begin{figure}[htbp]
\centering
\begin{tikzpicture}[scale=0.7, transform shape]
    % Timeline arrow
    \draw[thick, ->] (2, -1.8) -- (22, -1.8);
    \foreach \x/\y in {3/1960, 6/1980, 9/2000, 12/2010, 15/2020} {
        \draw (\x,-1.7) -- (\x,-1.9) node[below] {\y};
    }

    % Top row (y=3)
    \node (kalman) [rectangle, draw, fill=red!20, text width=2.2cm, align=center, minimum height=1cm] at (3,3) {\small Kalman Filter};
    \node (ekf)    [rectangle, draw, fill=orange!20, text width=2.2cm, align=center, minimum height=1cm] at (6,3) {\small EKF/UKF};
    \node (particle)[rectangle, draw, fill=yellow!20, text width=2.2cm, align=center, minimum height=1cm] at (9,3) {\small Particle Filters};
    \node (rgp)    [rectangle, draw, fill=blue!40, text width=2.2cm, align=center, minimum height=1cm] at (16,3) {\small Recursive GP};

    % Bottom row (y=0)
    \node (lut)    [rectangle, draw, fill=green!20, text width=2.2cm, align=center, minimum height=1cm] at (6,0) {\small Look-up Table};
    \node (ssm)    [rectangle, draw, fill=green!40, text width=2.2cm, align=center, minimum height=1cm] at (9,0) {\small Nonparametric SSM};
    \node (circular)[rectangle, draw, fill=green!60, text width=2.2cm, align=center, minimum height=1cm] at (12,0) {\small Circular Latent};
    \node (ellip)  [rectangle, draw, fill=purple!40, text width=2.2cm, align=center, minimum height=1cm] at (16,0) {\small Ellipsoidal};

    % Bayesian Reflex node
    \node (reflex) at (21,1.5) [rectangle, draw, fill=red!20, text width=2.5cm, align=center, minimum height=1.2cm] {\small \textbf{Bayesian Reflex}};

    % Merge point
    \coordinate (merge) at (18.5,1.5);

    % --- Arrows ---
    % Top row connections
    \draw[->, thick] (kalman) -- (ekf);
    \draw[->, thick] (ekf) -- (particle);
    \draw[->, thick] (particle) -- (12.5,3) -- (12.5,2.5) -- (16,2.5) -- (rgp);

    % Bottom row connections
    \draw[->, thick] (lut) -- (ssm);
    \draw[->, thick] (ssm) -- (circular);
    \draw[->, thick] (circular) -- (ellip);

    % Dashed arrow: Particle → Look-up Table (influence) - approaches from left
    \draw[->, thick, dashed] (particle) -- (9,2) -- (5,2) -- (5,0.5) -- (lut);

    % Solid arrow: Look-up Table → merge - exits from right
    \draw[->, thick] (lut.east) -- (8,0.5) -- (12,1) -- (16,1) -- (merge);

    % Connections from RGP and Ellip to merge
    \draw[->, thick] (rgp) -- (17,3) -- (17,2) -- (merge);
    \draw[->, thick] (ellip) -- (17,0) -- (17,0.8) -- (merge);

    % Merge to Reflex
    \draw[->, thick] (merge) -- (reflex);

    % Citations
    \node at (6,-2.6)  [font=\tiny] {\cite{bhattacharya2007simulation}};
    \node at (9,-2.6)  [font=\tiny] {\cite{ghosh2014bayesian}};
    \node at (12,-2.6) [font=\tiny] {\cite{mazumder2016bayesian}};
    \node at (15,-2.6) [font=\tiny] {\cite{bhattacharya2025iid}};
    \node at (17,-2.6) [font=\tiny] {\cite{bhattacharya2025bayesian}};
    \node at (21,-2.6) [font=\tiny] {Bayesian Reflex (this work)};

    % Legend
    \node at (4, -3.5) [text width=8cm, align=left, font=\tiny] {
        Solid arrows: Direct evolution \\
        Dashed arrow: Influence on perfect sampling
    };
\end{tikzpicture}
\caption{Methodological evolution: From Kalman filters through look-up table principle to Recursive Gaussian Processes and ellipsoidal decomposition.}
\label{fig:method_evolution}
\end{figure}

Figure \ref{fig:method_evolution} illustrates the methodological evolution from classical Kalman filters to modern Recursive Gaussian Processes and the ellipsoidal decomposition framework.

\section{The Look-Up Table Principle: A Foundational Contribution to Online Bayesian Learning}

Before delving into modern Recursive Gaussian Processes and their applications, we highlight a contribution that is related to the core computational principle underlying many online Bayesian methods for dynamic systems. The work of \citet{bhattacharya2007simulation} introduced the concept of using auxiliary variables and ``look-up tables'' to enable efficient sequential inference in dynamic computer models. This idea has since been a cornerstone of Gaussian process induced online Bayesian works of these authors, providing the mathematical foundation for the recursive architectures and inverse regression methods we discuss in subsequent sections. Somewhat parallel ideas appear in the literature on sparse Gaussian processes \citep{quinonero2005sparse, snelson2006sparse} and inducing point methods, where a set of pseudo-inputs summarises the function.

\subsection{Dynamic Computer Models and the Need for Sequential Emulation}

Consider a dynamic computer model that evolves over time according to the relationship:
\begin{equation*}
\mathbf{y}_t = \pmb{\eta}(\mathbf{z}_t, \mathbf{y}_{t-1}) = \pmb{\eta}(\mathbf{v}_t)
\end{equation*}
where \(\mathbf{y}_t\) is a \(p\)-dimensional output at time \(t\), \(\mathbf{z}_t\) represents forcing inputs (external factors recorded over time), and \(\pmb{\eta}(\cdot)\) is an unknown function treated as a black box. Such models arise in numerous scientific and engineering applications, from climate modelling to vegetation dynamics.

The goal is to predict the entire dynamic sequence \(\{\mathbf{y}_1, \mathbf{y}_2, \ldots, \mathbf{y}_T\}\) given only a training dataset \(\mathbf{D} = \{\pmb{\eta}(\mathbf{g}_1), \ldots, \pmb{\eta}(\mathbf{g}_n)\}\) obtained by running the expensive computer code at a carefully chosen set of input points \(\mathbf{G} = (\mathbf{g}_1, \ldots, \mathbf{g}_n)'\). This is fundamentally an online learning problem: we must simulate the sequence step by step, with each prediction depending on the previous one, while quantifying uncertainty throughout.

\subsection{The Gaussian Process Emulator}

Following standard practice in computer experiments \citep{sacks1989design, santner2018design}, the unknown function \(\pmb{\eta}(\cdot)\) is modelled as a Gaussian process. For the multivariate case, this takes the form of a matrix-variate Gaussian process:
\begin{equation*}
[\mathbf{D} \mid \mathbf{B}, \boldsymbol{\Sigma}, \mathbf{R}] \sim \mathcal{N}_{n,p}(\mathbf{H}_D \mathbf{B}, \mathbf{A}_D, \boldsymbol{\Sigma})
\end{equation*}
where \(\mathbf{H}_D\) contains basis functions evaluated at the training inputs, \(\mathbf{A}_D\) is the correlation matrix with entries \(c(\mathbf{g}_r, \mathbf{g}_\ell) = \exp\{-(\mathbf{g}_r - \mathbf{g}_\ell)'\mathbf{R}(\mathbf{g}_r - \mathbf{g}_\ell)\}\), and \(\boldsymbol{\Sigma}\) captures output covariances.

Given the training data, the posterior predictive distribution for \(\pmb{\eta}(\cdot)\) at any new input \(\mathbf{v}\) is a multivariate Student's \(t\) distribution:
\begin{equation*}
[\pmb{\eta}(\mathbf{v}) \mid \mathbf{R}, \mathbf{D}] \sim \mathcal{T}_p\left(\boldsymbol{\mu}_2(\mathbf{v}), c_2(\mathbf{v},\mathbf{v})\hat{\boldsymbol{\Sigma}}_{GLS}; n-m\right)
\end{equation*}
with mean \(\boldsymbol{\mu}_2(\mathbf{v})\) and scale parameter derived from the Gaussian process. This emulator provides both predictions and uncertainty quantification—the first principle of the Bayesian reflex.

\subsection{The Look-Up Table Innovation}

The key challenge in simulating a dynamic sequence is the recursive dependence: to predict \(\mathbf{y}_2\), we need \(\mathbf{y}_1\) as input; to predict \(\mathbf{y}_3\), we need \(\mathbf{y}_2\), and so on. A naive approach would attempt to directly simulate from the joint posterior \([\mathbf{y}_1, \mathbf{y}_2, \ldots, \mathbf{y}_T \mid \mathbf{D}]\), but this quickly becomes intractable due to the growing dependencies.

The key insight of \citet{bhattacharya2007simulation} was to introduce a set of latent auxiliary variables \(\mathbf{D}^* = \{\pmb{\eta}(\mathbf{g}_1^*), \ldots, \pmb{\eta}(\mathbf{g}_N^*)\}\) defined on a fixed grid \(\mathbf{G}^* = (\mathbf{g}_1^*, \ldots, \mathbf{g}_N^*)'\) spanning the expected range of the dynamic sequence. This ``look-up table'' serves as a finite-dimensional proxy for the entire Gaussian process.

The crucial property is that, given \(\mathbf{D}^*\), the dynamic sequence exhibits a Markov structure:
\begin{equation*}
[\pmb{\eta}(\mathbf{v}_t) \mid \mathbf{D}^*, \mathbf{D}, \mathbf{v}_t, \mathbf{v}_{t-1}, \ldots] \approx [\pmb{\eta}(\mathbf{v}_t) \mid \mathbf{D}^*, \mathbf{D}, \mathbf{v}_t].
\end{equation*}

This conditional independence is not assumed but arises from the properties of Gaussian process regression: given the function values on the grid \(\mathbf{G}^*\), the distribution at any new point depends only on \(\mathbf{D}^*\) and the training data, not on the previous points in the sequence.
Figure \ref{fig:lut} illustrates this schematically.

\subsection{The Algorithm: Sequential Simulation with Latent Variables}

The complete algorithm for simulating from the posterior dynamic sequence proceeds as follows:

\textbf{Step 1:} Simulate the first value \(\mathbf{y}_1 = \pmb{\eta}(\mathbf{v}_1)\) from the marginal posterior \([\pmb{\eta}(\mathbf{v}_1) \mid \mathbf{R}, \mathbf{D}]\).

\textbf{Step 2:} Simulate the look-up table \(\mathbf{D}^*\) from the conditional distribution \([\mathbf{D}^* \mid \mathbf{R}, \mathbf{D}, \pmb{\eta}(\mathbf{v}_1)]\). This is facilitated by simulating sequentially for \(j=1,\ldots,N\):
\begin{align*}
[\pmb{\eta}(\mathbf{g}_j^*) \mid \mathbf{R}, \mathbf{D}, \pmb{\eta}(\mathbf{v}_1), \pmb{\eta}(\mathbf{g}_1^*), \ldots, \pmb{\eta}(\mathbf{g}_{j-1}^*)] \sim 
\mathcal{T}_p(\boldsymbol{\mu}_{2,j}(\cdot), c_{2,j}(\cdot,\cdot)\hat{\boldsymbol{\Sigma}}_{GLS,j}; n+j-m),
\end{align*}
the $p$-variate Student's $t$ distribution with degrees of freedom $n+j-m$. The exact forms of $\boldsymbol{\mu}_{2,j}(\cdot)$ and $c_{2,j}(\cdot,\cdot)\hat{\boldsymbol{\Sigma}}_{GLS,j};$ are provided in \citet{bhattacharya2007simulation}.

\textbf{Step 3:} For \(t = 2, \ldots, T\), simulate \(\mathbf{y}_t = \pmb{\eta}(\mathbf{v}_t)\) from \([\pmb{\eta}(\mathbf{v}_t) \mid \mathbf{R}, \mathbf{D}, \mathbf{D}^*]\), which is a multivariate Student's \(t\) distribution.

This procedure yields an approximate realisation from the joint posterior distribution of the entire dynamic sequence. The approximation becomes exact as the grid \(\mathbf{G}^*\) becomes infinitely fine.

\begin{figure}[htbp]
\centering
\begin{tikzpicture}[scale=0.9, transform shape]
    % Training data
    \node (train) [rectangle, draw, fill=blue!20, minimum width=2.5cm, text centered] at (0,2.5) {Training Data $\mathbf{D}$};
    \node (gp) [rectangle, draw, fill=blue!40, minimum width=2.5cm, text centered] at (0,1) {Gaussian Process $\pmb{\eta}(\cdot)$};
    \draw[->] (train) -- (gp);
    
    % Look-up table
    \node (grid) [rectangle, draw, fill=green!40, minimum width=2.5cm, text centered] at (4,2.5) {Fixed Grid $\mathbf{G}^*$};
    \node (lut) [rectangle, draw, fill=green!60, minimum width=2.5cm, text centered] at (4,1) {Look-up Table $\mathbf{D}^*$};
    \draw[->] (grid) -- (lut);
    \draw[->, dashed] (gp) -- (lut);
    
    % Dynamic sequence
    \node (y1) [circle, draw, fill=orange!40, minimum size=0.8cm] at (8,2) {$\mathbf{y}_1$};
    \node (y2) [circle, draw, fill=orange!60, minimum size=0.8cm] at (9.5,1.2) {$\mathbf{y}_2$};
    \node (y3) [circle, draw, fill=orange!80, minimum size=0.8cm] at (11,0.4) {$\mathbf{y}_3$};
    \node (y4) [circle, draw, fill=orange!100, minimum size=0.8cm] at (12.5,-0.4) {$\mathbf{y}_T$};
    
    \draw[->] (y1) -- (y2);
    \draw[->] (y2) -- (y3);
    \draw[->] (y3) -- (y4);
    
    % Conditional independence
    \draw[->, thick, red] (lut) to[out=0, in=180] (y1);
    \draw[->, thick, red] (lut) to[out=0, in=180] (y2);
    \draw[->, thick, red] (lut) to[out=0, in=180] (y3);
    \draw[->, thick, red] (lut) to[out=0, in=180] (y4);
    
    % Key insight
    \node at (6,-1.8) [text width=12cm, align=center, minimum height=1.5cm] {
        \textbf{Conditional Independence:} Given $\mathbf{D}^*$, $[\mathbf{y}_t \mid \mathbf{D}^*, \mathbf{D}, \mathbf{y}_{t-1}, \ldots] = [\mathbf{y}_t \mid \mathbf{D}^*, \mathbf{D}, \mathbf{y}_{t-1}]$ \\
        \small (Markov property emerges from look-up table)
    };
\end{tikzpicture}
\caption{Look-up table principle: Auxiliary variables on a fixed grid create conditional independence for sequential inference.}
\label{fig:lut}
\end{figure}

\subsection{Computational Efficiency and Stability}

A remarkable feature of this approach is its computational efficiency. The matrix \(\mathbf{A}_{(D^*,D)}\)—the covariance matrix of \((\mathbf{D}^*, \mathbf{D})\)—is fixed and can be inverted once before any simulations begin. All subsequent conditional simulations involve only this pre-computed inverse and simple vector operations.

In contrast, attempting to integrate out \(\mathbf{D}^*\) leads to an algorithm that requires inverting a new random matrix at each time step, inevitably leading to numerical instability and computational intractability after just a few steps. As \citet{bhattacharya2007simulation} demonstrates through extensive experiments, the look-up table approach remains stable even for long sequences, while the marginalised version breaks down completely.

\subsection{Parallelisability and Scalability}

Another crucial advantage is that the look-up table method is embarrassingly parallel. Each random sequence can be simulated independently, making the approach ideal for modern parallel computing architectures. As noted in the original work:

\begin{quote}
``The independence renders our dynamic emulator parallelisable, that is, the random sequences can be generated in independent parallel processors, although the original dynamic computer code may not be parallelisable.''
\end{quote}

This property anticipates the scalability requirements of modern AI systems, where parallel computation is essential for handling large-scale data streams. Notably, this parallelisability mirrors that of the ellipsoidal decomposition framework for iid sampling, where each processor independently handles a different ellipsoidal region. Together, these parallel architectures provide the computational muscle for the Bayesian reflex.

\subsection{Significance for the Bayesian Reflex}

The look-up table principle embodies all three mechanisms of the Bayesian reflex: belief maintenance, as the Gaussian process emulator maintains a full probabilistic representation of the unknown function with uncertainty quantified at every input point; sequential updating, as the algorithm proceeds step by step with each new prediction building upon previous ones through the recursive structure of the dynamic model; and uncertainty-driven action, as while the original application focuses on prediction rather than decision-making, the posterior distributions provide natural uncertainty quantification that could guide experimental design or adaptive sampling. This foundational work establishes that introducing latent auxiliary variables to create conditional independence is a powerful strategy for making sequential Bayesian inference tractable. As we shall see in the next sections, this same principle generalises naturally to deep architectures and inverse regression problems.

\section{Generalising the Look-Up Table: From Dynamic Emulators to Nonparametric State-Space Models}

The look-up table principle introduced by \citet{bhattacharya2007simulation} for dynamic computer model emulation proved to be useful in various aspects. A natural and powerful generalisation came with its application to Bayesian nonparametric state-space models \citep{ghosh2014bayesian}. This work recognised that the same fundamental challenge—sequential inference in a dynamic system with unknown functions—arises in state-space modelling, where both the observational and evolutionary equations may have unknown, time-varying forms. Similar approaches have been explored in the literature on nonparametric state-space models \citep{frigola2013bayesian, turner2010bayesian}.

\subsection{The State-Space Formulation}

Consider a general state-space model of the form:
\begin{align*}
y_t &= f_t(x_t) + \epsilon_t, \quad \epsilon_t \sim N(0, \sigma_\epsilon^2) \\
x_t &= g_t(x_{t-1}) + \eta_t, \quad \eta_t \sim N(0, \sigma_\eta^2)
\end{align*}
where \(y_t\) are the observed data, \(x_t\) are latent states, and both \(f_t\) and \(g_t\) are unknown functions that may vary with time. Traditional parametric approaches assume fixed functional forms (typically linear), but this assumption is often questionable, particularly because data on the latent states are unavailable for validation.

\citet{ghosh2014bayesian} proposed modelling both \(f_t\) and \(g_t\) as Gaussian processes, with the key insight that the look-up table methodology could be applied to handle the complex dependence structure induced by the unknown evolutionary function \(g_t\). This approach yields a realistic, non-Markovian dependence structure for the latent states—a significant advance over traditional state-space models that impose Markovian structure by assumption.

\subsection{The Look-Up Table for State-Space Models}

Following \citet{bhattacharya2007simulation}, the authors introduce a set of auxiliary variables \(\mathbf{D}_n^*\) representing function values on a fixed grid \(\mathbf{G}_n\). For the evolutionary function \(g\), this yields:

\begin{align*}
[x_t = g(x_{t-1}) + \eta_t \mid \mathbf{D}_n^*, x_{t-1}, \boldsymbol{\theta}_g, \sigma_\eta^2] \sim N(\mu_t, \sigma_t^2)
\end{align*}

with mean and variance given by GP regression formulas analogous to those in the original dynamic emulator work. Crucially, the correlation matrix \(\mathbf{A}_{g,D_n^*}\) of $\mathbf{D}_n^*$, is fixed and inverted once, enabling efficient sequential inference.

The authors prove a theorem establishing that the order of approximation of the true function by the conditional distribution given \(\mathbf{D}_n^*\) is \(O(n^{-1})\), where \(n\) is the grid size. This theoretical guarantee, combined with the computational efficiency of the look-up table approach, makes the methodology both sound and practical.

\subsection{The Non-Markovian Nature of the Latent States}

A crucial insight from this work is that, even though conditionally on \(\mathbf{D}_n^*\) the latent states exhibit Markov structure, the marginalised distribution is non-Markovian. As the authors note:

\begin{quote}
``As in Bhattacharya (2007), here also it is possible to marginalise out \(\mathbf{D}_n^*\) ... However, if \(\mathbf{D}_n^*\) is integrated out, it is clear that the conditional independence (Markov) property of \(x_t\)'s given \(\mathbf{D}_n^*\) -- will be lost. Thus, the marginalised conditional distribution of \(x_{t+1}\) depends upon \(\{x_k; k < t+1\}\), that is, the set of all the past state variables, unlike the non-marginalised case, where conditionally on \(\mathbf{D}_n^*\), \(x_{t+1}\) depends only upon \(x_t\). This makes it clear that even though for fixed (known) evolutionary functions \(g_t\) the corresponding equation satisfies the Markov property, such Markov property is lost when the function $g_t(x)=g(t,x)$ is modelled as a Gaussian process.''
\end{quote}

This observation has beneficial implications for the Bayesian reflex: it shows that introducing latent auxiliary variables can transform an intractable, fully dependent process into a conditionally independent structure that enables efficient sequential updating—while still preserving the realistic dependence structure upon marginalisation.

\subsection{Extension to Circular Latent States}

A further generalisation of the look-up table principle came with its application to models involving circular variables—quantities measured on a circle such as directions or times of day \citep{mazumder2016bayesian, mazumder2016nonparametric}. This extension is particularly relevant for real-world problems where latent processes are inherently circular, such as wind direction influencing wind speed, or ocean current direction affecting whale migration.

In the first of these works \citep{mazumder2016bayesian}, the authors consider a state-space model where the observed time series \(y_t\) is linear (e.g., wind speed), but the latent states \(x_t\) are circular (e.g., wind direction). The evolutionary equation becomes:
\begin{equation*}
x_t = \{g(t, x_{t-1}) + \eta_t\} [2\pi], \quad \eta_t \sim N(0, \sigma_\eta^2)
\end{equation*}
where \([2\pi]\) denotes the modulo \(2\pi\) operation that maps the linear sum back to the circle.

The look-up table methodology proves invaluable here. The authors construct a wrapped Gaussian process for the circular evolutionary function by first modelling its linear counterpart \(g^*\) using the standard GP framework, then applying the modulo operation. The auxiliary variables \(\mathbf{D}_z\) represent function values on a grid \(\mathbf{G}_z\) that spans both time and the circular domain.

A particularly elegant aspect of this work is the introduction of additional auxiliary variables \(K_t = \langle x_t^* / 2\pi \rangle\) (the ``wrapping numbers'') that track how many times the linearised process has wrapped around the circle. This transforms the intractable wrapped normal distribution into a conditionally Gaussian form:

\begin{align*}
[x_t \mid g^*(1,x_0), \beta_g, \sigma_\eta^2, \sigma_g^2, K_t] = \frac{\frac{1}{\sqrt{2\pi}\sigma_\eta}\exp\left(-\frac{1}{2\sigma_\eta^2}(x_t + 2\pi K_t - g^*(1,x_0))^2\right)I_{[0,2\pi]}(x_t)}{\Phi\left(\frac{2\pi(K_t+1) - g^*(1,x_0)}{\sigma_\eta}\right) - \Phi\left(\frac{2\pi K_t - g^*(1,x_0)}{\sigma_\eta}\right)}
\end{align*}

This hierarchical structure of auxiliary variables—first the look-up table \(\mathbf{D}_z\) representing the unknown function, then the wrapping numbers \(K_t\) handling the circular nature—exemplifies the recursive application of the Bayesian reflex. Each level of auxiliary variables creates conditional independence that makes sequential inference tractable, while preserving realistic dependence upon marginalisation.

\subsection{The Fully Circular Case}

The most challenging extension comes when both the observed and latent processes are circular \citep{mazumder2016nonparametric}. In this setting, the observational equation becomes:
\begin{equation*}
y_t = \{f(t, x_t) + \epsilon_t\} [2\pi], \quad \epsilon_t \sim N(0, \sigma_\epsilon^2)
\end{equation*}
adding another layer of complexity through the modulo operation on the observations.

This fully circular case requires introducing wrapping numbers \(N_t\) for the observations as well, creating a doubly intractable likelihood. Despite this formidable complexity, the look-up table methodology enables efficient MCMC inference. The fixed grid \(\mathbf{G}_z\) ensures that \(\mathbf{A}_{g,D_z}^{-1}\) can be pre-computed, and the conditional independence structures induced by the auxiliary variables make sequential updating feasible even in the most complex fully circular case.

\subsection{Empirical Validation}

All three papers provide extensive empirical validation of their methodologies. Simulation studies demonstrate that the posterior distributions successfully capture true parameter values, and the latent state posteriors assign high probability to the true states. Real data applications—including wind speed/direction analysis and ozone level modelling—show that the methods successfully predict future observations and recover latent circular processes even when the true functions are highly nonlinear and contain change points.

This ability to capture abrupt changes without explicit change-point modelling is a hallmark of the Bayesian reflex: by maintaining a flexible probabilistic representation and updating it sequentially, the system can adapt to unforeseen changes in the underlying dynamics.

\subsection{Significance for the Bayesian Reflex}

These extensions to circular variables demonstrate that the look-up table principle is not limited to linear, real-valued processes. The Bayesian reflex operates just as effectively when the quantities of interest live on a circle, with the modulo operation creating additional complexity that is tamed through the introduction of auxiliary wrapping numbers.

The hierarchical structure of auxiliary variables—look-up tables for unknown functions, wrapping numbers for circular transformations—illustrates a general principle: complex, intractable dependencies can be transformed into tractable sequential updating problems by introducing latent variables at multiple levels. Each level of auxiliary variables creates conditional independence that makes the next level of inference feasible, while the full joint distribution, upon marginalisation, retains the realistic dependence structure of the original process.

This recursive application of the Bayesian reflex—using auxiliary variables to create conditional independence, then marginalising to recover realistic dependence—provides a template for building complex adaptive systems that can learn online while maintaining theoretical guarantees and computational tractability.

\section{From Look-Up Tables to Recursive Gaussian Processes}

The look-up table principle introduced by \citet{bhattacharya2007simulation} for single-layer dynamic systems has been extended to deep architectures, yielding the Recursive Gaussian Process (RGP) framework \citep{bhattacharya2025bayesian}. RGPs represent a significant advance in Bayesian deep learning, offering a principled approach to hierarchical modelling with uncertainty quantification that naturally lends itself to online inference. Other deep Bayesian nonparametric models include deep Gaussian processes \citep{damianou2013deep} and neural processes \citep{garnelo2018conditional}.

\subsection{The RGP Architecture: A Hierarchical State-Space Model}

The RGP framework constructs a deep neural architecture where each hidden layer is governed by a Gaussian process prior, building upon the look-up table concept recursively. For a network with \(T\) hidden layers, let \(\mathbf{x} \in \mathbb{R}^p\) denote the input. The hidden activations are defined recursively as: for $i=1,\ldots,n$, $n$ being the number of individuals,

\begin{align*}
h_{ij}^{(0)} &= g(0, \mathbf{w}^{(0)'}\mathbf{x}_i + b_j^{(0)}), \quad j = 1,\ldots,K \\
h_{ij}^{(t)} &= g(t, \mathbf{w}^{(t)'}\mathbf{h}_i^{(t-1)} + b_j^{(t)}), \quad t = 1,\ldots,T; \quad j = 1,\ldots,K \\
y_{ij} &= f(\tilde{\mathbf{w}}'\mathbf{h}_i^{(T)} + \tilde{b}_j) + \epsilon_{ij}, \quad \epsilon_{ij} \sim \mathcal{N}(0, \sigma_\epsilon^2)
\end{align*}

Here, \(g(t, \cdot)\) (induced by a single Gaussian process $g(\cdot,\cdot)$) and \(f(\cdot)\) are Gaussian processes with appropriate mean and covariance functions, while the weights \(\mathbf{w}^{(t)}\) and biases \(b_j^{(t)}\) are assigned spike-and-slab priors that enable automatic variable selection.

\begin{figure}[htbp]
\centering
\begin{tikzpicture}[node distance=2cm, scale=0.9, transform shape]
    % Input layer
    \node (input) at (0,0) {$\mathbf{x}$};
    
    % Hidden layers with GP
    \node (h1) at (2.5,0) [rectangle, draw, fill=blue!20, minimum width=2.2cm, minimum height=1.2cm] {$\mathbf{h}^{(1)}$};
    \node (h2) at (5,0) [rectangle, draw, fill=blue!40, minimum width=2.2cm, minimum height=1.2cm] {$\mathbf{h}^{(2)}$};
    \node (h3) at (7.5,0) [rectangle, draw, fill=blue!60, minimum width=2.2cm, minimum height=1.2cm] {$\mathbf{h}^{(3)}$};
    \node (hT) at (10,0) [rectangle, draw, fill=blue!80, minimum width=2.2cm, minimum height=1.2cm] {$\mathbf{h}^{(T)}$};
    
    % GP functions
    \node (gp1) at (3.75,1.8) [rectangle, draw, fill=green!40, minimum width=2cm, minimum height=1cm] {$g(1, \cdot)$};
    \node (gp2) at (6.25,1.8) [rectangle, draw, fill=green!40, minimum width=2cm, minimum height=1cm] {$g(2, \cdot)$};
    \node (gp3) at (8.75,1.8) [rectangle, draw, fill=green!40, minimum width=2cm, minimum height=1cm] {$g(3, \cdot)$};
    \node (gpT) at (11.25,1.8) [rectangle, draw, fill=green!60, minimum width=2cm, minimum height=1cm] {$g(T, \cdot)$};
    
    \draw[->, dashed] (gp1) -- (h1);
    \draw[->, dashed] (gp2) -- (h2);
    \draw[->, dashed] (gp3) -- (h3);
    \draw[->, dashed] (gpT) -- (hT);
    
    % Output layer
    \node (output) at (12.5,0) [rectangle, draw, fill=red!40, minimum width=2.2cm, minimum height=1.2cm] {$\mathbf{y}$};
    \node (f) at (13.75,1.8) [rectangle, draw, fill=green!80, minimum width=2cm, minimum height=1cm] {$f(\cdot)$};
    \draw[->, dashed] (f) -- (output);
    
    % Connections between layers
    \draw[->] (input) -- (h1);
    \draw[->] (h1) -- (h2);
    \draw[->] (h2) -- (h3);
    \draw[->] (h3) -- (hT);
    \draw[->] (hT) -- (output);
    
    % Look-up table shared across layers
    \node (lut) at (6.25,-2.5) [rectangle, draw, fill=purple!40, minimum width=9cm, minimum height=1.2cm] {Shared Look-up Table $\mathbf{D}_m$};
    
    \draw[->, thick, red] (lut) -- (h1);
    \draw[->, thick, red] (lut) -- (h2);
    \draw[->, thick, red] (lut) -- (h3);
    \draw[->, thick, red] (lut) -- (hT);
    \draw[->, thick, red] (lut) -- (output);
    
    % State-space interpretation
    \node at (6.25,-4) [text width=12cm, text centered] {
        \textbf{State-Space Interpretation:} $\mathbf{h}^{(t)}$ are latent states, $g(t,\cdot)$ are unknown evolutionary functions
    };
\end{tikzpicture}
\caption{Recursive Gaussian Process architecture: Hierarchical structure with shared look-up table enabling online inference.}
\label{fig:rgp}
\end{figure}

This hierarchical structure (Figure \ref{fig:rgp}) bears a striking resemblance to state-space models, which are fundamental to online learning. The hidden activations \(\mathbf{h}_i^{(t)}\) can be interpreted as latent states that evolve through the deterministic yet uncertain transformations provided by the GPs. This perspective immediately suggests how RGPs can power online learning systems: as new data arrives, we can perform forward filtering to update the latent states and then use Bayesian updating to refine our beliefs about the underlying functions.

\subsection{The Look-Up Table for RGPs}

One might anticipate that extending the look-up table concept to multiple layers requires introducing auxiliary variables at each level; however, as shown in \citep{bhattacharya2025bayesian}, a single grid \(\mathbf{G}_m = \{(t, z_1), \ldots, (t, z_m)\}\) and corresponding function values \(\mathbf{D}_m = \{g(t, z_1), \ldots, g(t, z_m)\}\) suffice. 

The key insight is that, conditional on the single look-up table induced by the single Gaussian process $g(\cdot,\cdot)$, the activations at each layer exhibit a Markov property: given \(\mathbf{D}_m\), the distribution of \(h_{ij}^{(t)}\) depends only on \(\mathbf{h}_i^{(t-1)}\) and the current layer's parameters, not on earlier layers.

This layered conditional independence structure is precisely what enables efficient sequential inference in deep architectures. It still exploits simply the original single look-up table principle in the hierarchical model setup, creating a foundation for online Bayesian deep learning.

\subsection{Toward Online Inference with RGPs}

While the original RGP framework was developed for batch-mode inference using Markov chain Monte Carlo, its architecture provides a natural foundation for online Bayesian learning. Several pathways exist for adapting RGPs to sequential settings.

\subsubsection{Particle filtering approaches:} The hierarchical structure of RGPs makes them amenable to sequential Monte Carlo methods. We can maintain a set of weighted particles representing the joint distribution of all latent variables—the hidden activations \(\mathbf{h}_i^{(t)}\), the look-up table \(\mathbf{D}_m\), and the parameters \(\{\mathbf{w}^{(t)}, b_j^{(t)}\}\). As each new observation arrives, particles are propagated through the layers and reweighted according to the likelihood, with resampling to combat degeneracy. The look-up table provides compact representations that can be updated incrementally.

\subsubsection{Sequential variational inference:} For scalability, variational approximations can be employed in an online fashion. Let \(q_t(\boldsymbol{\theta})\) denote the variational posterior after processing \(t\) observations, where \(\boldsymbol{\theta}\) encompasses all model parameters. When a new data point \((\mathbf{x}_{t+1}, y_{t+1})\) arrives, we update:

\begin{align*}
q_{t+1} = \arg\min_{q \in \mathcal{Q}} \KL{ q(\boldsymbol{\theta}) }{ \frac{p(y_{t+1} | \mathbf{x}_{t+1}, \boldsymbol{\theta}) q_t(\boldsymbol{\theta}) }{Z_{t+1}} }
\end{align*}

where \(Z_{t+1}\) is the normalising constant. For exponential family variational distributions, this reduces to simple moment updates. The spike-and-slab priors on weights further enable automatic pruning of irrelevant connections as the model learns online.

\subsubsection{Recursive Bayesian estimation:} The recursive structure of RGPs—where each layer's output depends only on the previous layer—suggests the possibility of closed-form or approximately closed-form updates for the posterior parameters. This mirrors the elegance of the Kalman filter but extended to the nonparametric, nonlinear setting. The look-up table representation transforms the infinite-dimensional GP inference problem into a finite-dimensional one, making recursive estimation tractable.

\subsection{Theoretical Guarantees and Model Misspecification}

A remarkable feature of the RGP framework is its robustness to model misspecification. The universal approximation theorem for deep neural networks guarantees that, for sufficiently large depth \(T\), there exist functions within the class of neural networks that approximate any continuous target function arbitrarily well \citep{cybenko1989approximation, hornik1989multilayer}. The Bayesian asymptotic theory developed for RGPs, exploiting the universal approximation theorem, shows that the posterior converges at a near-optimal rate in an appropriate sense, even when the true data-generating mechanism lies outside the prior support \citep{bhattacharya2025bayesian}.

For online learning, these guarantees translate into confidence that as data accumulates sequentially, the system's beliefs will converge to the true underlying dynamics under mild conditions. The Kullback-Leibler divergence rate \(\mathfrak{h}(\boldsymbol{\theta})\) provides a measure of model adequacy, and the posterior predictive distribution satisfies:

\begin{align*}
\lim_{n \to \infty} \rho_H^2(P^{nq}, F_\pi^{nq}) \leq \mathfrak{h}(\boldsymbol{\Theta})
\end{align*}

where \(\rho_H\) is the Hellinger distance, \(P^{nq}\) is the true predictive distribution, \(F_\pi^{nq}\) is the posterior predictive, and $\mathfrak{h}(\boldsymbol{\Theta})$
is the essential infimum of \(\mathfrak{h}(\boldsymbol{\theta})\) over the parameter space $\boldsymbol{\Theta}$. This bound can be made arbitrarily small by choosing sufficient depth \(T\), ensuring that the online learning system based on RGPs will make increasingly accurate predictions over time.

\subsection{Significance for the Bayesian Reflex}

The RGP framework exemplifies the Bayesian reflex at multiple levels, building directly on the foundational look-up table principle. At the architectural level, the recursive composition of functions mirrors the sequential nature of belief updating. At the inference level, the shared look-up table induces conditional independence structures that enable efficient online computation. At the theoretical level, the asymptotic guarantees ensure that the system's uncertainty-driven learning will converge to the truth. As online learning systems grapple with increasingly complex, high-dimensional data streams, architectures like RGPs that combine the flexibility of deep learning with principled Bayesian uncertainty quantification will become essential. The RGP framework provides a template for building AI systems that not only learn continuously but do so with calibrated uncertainty and theoretical guarantees—the very essence of the Bayesian reflex.

\section{Ellipsoidal Decomposition Meets Recursive Gaussian Processes: A Unified Computational Foundation}

The two foundational computational principles we have introduced—the look-up table principle for sequential inference in function space, and the ellipsoidal decomposition framework for nearly perfect iid sampling in parameter space—are not merely complementary; they can be integrated to create a unified foundation for the Bayesian reflex. This integration opens exciting possibilities for online learning systems that are both exact and scalable.

\subsection{Integrating Near-Perfect IID Sampling into RGP Inference}

The Recursive Gaussian Process framework, as originally developed, relies on MCMC for inference. However, the ellipsoidal decomposition framework offers a compelling alternative: instead of running MCMC chains with uncertain convergence, we can generate almost perfect iid samples from the RGP posterior using the methods of Section \ref{sec:iid}.

Consider the RGP posterior distribution over all latent variables \(\boldsymbol{\Theta} = \{\mathbf{h}_i^{(t)}, \mathbf{D}_m, \mathbf{w}^{(t)}, b_j^{(t)}\}\) given observed data \(\mathcal{D}\). This distribution lives on a high-dimensional space (potentially thousands of dimensions), but the ellipsoidal decomposition framework is specifically designed for such challenges. The key steps would be:

\begin{enumerate}
    \item Estimate the location \(\boldsymbol{\mu}\) and scale \(\boldsymbol{\Sigma}\) of the RGP posterior from a preliminary run or using variational approximations.
    \item Construct ellipsoidal regions \(\mathbf{A}_i\) in the parameter space of \(\boldsymbol{\Theta}\).
    \item For each region, compute Monte Carlo estimates \(\hat{\pi}(\mathbf{A}_i)\) using uniform samples transformed via the RGP likelihood.
    \item Generate near-perfect iid samples from the RGP posterior using Algorithm \ref{alg:iid_sampling}.
\end{enumerate}

The parallel nature of both frameworks makes this integration particularly attractive: each ellipsoidal region can be processed independently, and within each region, the look-up table structure enables efficient computation of the RGP likelihood.

\subsection{Online Learning with Near-Perfect IID Updates}

With the ability to generate perfect iid samples from the posterior at any time, online learning in the RGP framework becomes straightforward:

\begin{algorithm}[H]
\caption{Online RGP Learning with Near-Perfect IID Updates}
\label{alg:online_rgp_iid}
\begin{algorithmic}[1]
\State Initialise with prior distribution over RGP parameters \(\boldsymbol{\Theta}_0\)
\For{\(t = 1,2,\ldots\) (as new data arrives)}
    \State Observe new data point \((\mathbf{x}_t, y_t)\)
    \State Update posterior to \(\pi_t(\boldsymbol{\Theta}) \propto p(y_t | \mathbf{x}_t, \boldsymbol{\Theta}) \pi_{t-1}(\boldsymbol{\Theta})\)
    \State \textbf{Parallel Step:} Generate \(N\) near-perfect iid samples \(\{\boldsymbol{\Theta}_t^{(1)}, \ldots, \boldsymbol{\Theta}_t^{(N)}\}\) from \(\pi_t\) using Algorithm \ref{alg:iid_sampling}
    \State Use samples for prediction, decision-making, or uncertainty quantification
\EndFor
\end{algorithmic}
\end{algorithm}

This algorithm maintains exact posterior representations throughout the learning process, with no approximation error beyond the controlled \(\epsilon\) in the ellipsoidal decomposition. The computational cost at each step is amortised through parallel processing, making it feasible for real-time applications.

\subsection{Theoretical Synergies}

The integration of ellipsoidal decomposition with RGPs yields several theoretical benefits. First, exact uncertainty quantification: unlike variational approximations that underestimate uncertainty, near-perfect iid samples provide almost exact posterior summaries, including credible intervals and predictive distributions. Second, convergence-free inference: the ellipsoidal decomposition eliminates the need for MCMC convergence diagnostics, which are particularly problematic in high-dimensional, hierarchical models like RGPs. Third, principled model comparison: almost perfect iid samples enable practically exact computation of marginal likelihoods and Bayes factors through bridge sampling or importance sampling, facilitating rigorous model selection in online settings. Fourth, regret bound validation: with exact posterior samples, we can empirically validate theoretical regret bounds for Thompson sampling and other decision-making algorithms, potentially leading to tighter constants and improved understanding.

\subsection{A Unified Computational Architecture}

Figure \ref{fig:unified} illustrates how the two principles can be combined to form a unified computational architecture for the Bayesian reflex. The look-up table principle handles the sequential dynamics: it decomposes the temporal evolution into conditionally independent steps, enabling efficient forward filtering. The ellipsoidal decomposition principle handles the parameter inference: it decomposes the parameter space into manageable regions, enabling near-perfect iid sampling from the posterior at each time step. Parallel processing unifies both: the look-up table enables parallel simulation of multiple sequences, while the ellipsoidal decomposition enables parallel sampling across parameter regions.

\begin{figure}[htbp]
\centering
\begin{tikzpicture}[node distance=2.5cm, scale=0.9, transform shape]
    % Two main pillars
    \node (lut) at (-4,0) [rectangle, draw, fill=blue!40, minimum width=4.5cm, minimum height=2.2cm, align=center] {
        \textbf{Look-up Table} \\
        \textit{Function Space}
    };
    
    \node (ellip) at (4,0) [rectangle, draw, fill=green!40, minimum width=4.5cm, minimum height=2.2cm, align=center] {
        \textbf{Ellipsoidal} \\
        \textit{Parameter Space}
    };
    
    % Key features
    \node (lut1) at (-4,-2) [text width=4cm, align=center] {Fixed grid $\mathbf{G}^*$};
    \node (lut2) at (-4,-3) [text width=4cm, align=center] {Conditional Markov structure};
    \node (lut3) at (-4,-4) [text width=4cm, align=center] {One-time matrix inversion};
    
    \node (ell1) at (4,-2) [text width=4cm, align=center] {Concentric ellipsoids $\mathbf{A}_i$};
    \node (ell2) at (4,-3) [text width=4cm, align=center] {Perfect iid sampling};
    \node (ell3) at (4,-4) [text width=4cm, align=center] {Geometric coalescence};
    
    % Unifying element - Parallel processing
    \node (parallel) at (0,-5.5) [rectangle, draw, fill=red!40, minimum width=12cm, minimum height=1.5cm, align=center] {
        \textbf{Parallel Processing (MPI)}
    };
    
    \draw[->] (lut3) -- (parallel);
    \draw[->] (ell3) -- (parallel);
    
    % Integration
    \node (integrated) at (0,-7.5) [rectangle, draw, fill=purple!40, minimum width=12cm, minimum height=1.8cm, align=center] {
        \textbf{Integrated Bayesian Reflex:} Sequential inference (LUT) + Exact sampling (Ellipsoidal)
    };
    \draw[->] (parallel) -- (integrated);
    
    % Outcomes
    \node at (0,-9.5) [text width=12cm, align=center] {
        \begin{itemize}
            \item Exact uncertainty quantification
            \item Convergence-free inference
            \item Scalable to high dimensions
            \item Embarrassingly parallel
        \end{itemize}
    };
\end{tikzpicture}
\caption{Unified computational architecture combining look-up table and ellipsoidal decomposition principles.}
\label{fig:unified}
\end{figure}

This architecture (Figure \ref{fig:unified}) realises the vision of the Bayesian reflex as a truly automatic, continuous learning system: as data streams in, the system updates its beliefs exactly, maintains perfect uncertainty quantification, and can make optimal decisions at any moment—all without the convergence concerns that have plagued Bayesian computation for decades.

\section{Convergence with Deep Learning: Bayesian Neural Networks in the Online Setting}

\subsection{Bayesian Neural Networks: Mathematical Formulation}

A Bayesian neural network (BNN) places a prior distribution over network weights \(p(\mathbf{w})\) and aims to compute the posterior given data \citep{neal2012bayesian, mackay1992practical}. For a network with \(L\) layers, let \(\mathbf{w} = \{\mathbf{W}_1, \ldots, \mathbf{W}_L, \mathbf{b}_1, \ldots, \mathbf{b}_L\}\) denote all weights and biases.

For input \(x\), the BNN defines a conditional distribution \(p(y \mid x, \mathbf{w}) = \ell(y; f_{\mathbf{w}}(x))\), where \(f_{\mathbf{w}}\) is the network output and \(\ell\) is an appropriate likelihood function such as Gaussian for regression or categorical for classification \citep{neal2012bayesian}. The prior \(p(\mathbf{w})\) is typically chosen as a factorised Gaussian: \(p(\mathbf{w}) = \prod_{i} \mathcal{N}(w_i; 0, \sigma_i^2)\). The posterior given data \(\mathcal{D} = \{(x_n, y_n)\}_{n=1}^N\) is \(p(\mathbf{w} \mid \mathcal{D}) \propto p(\mathbf{w}) \prod_{n=1}^N p(y_n \mid x_n, \mathbf{w})\).

Predictions are made by Bayesian model averaging \citep{ghahramani2015probabilistic}: \(p(y \mid x, \mathcal{D}) = \int p(y \mid x, \mathbf{w}) p(\mathbf{w} \mid \mathcal{D}) \, d\mathbf{w}\). This integration provides natural uncertainty quantification through the variance decomposition \citep{gal2016dropout}: \(\Var[y \mid x, \mathcal{D}] = \mathbb{E}_{\mathbf{w} \mid \mathcal{D}}[\Var[y \mid x, \mathbf{w}]] + \Var_{\mathbf{w} \mid \mathcal{D}}[\mathbb{E}[y \mid x, \mathbf{w}]]\), which separates aleatoric uncertainty (inherent noise in the data) from epistemic uncertainty (uncertainty about model parameters due to limited data).

\subsection{Variational Inference for BNNs}

Since exact inference is intractable, variational approximations are commonly used \citep{blundell2015weight, kingma2015variational}. Let \(q_\phi(\mathbf{w})\) be a variational distribution with parameters \(\phi\). The ELBO to maximise is \(\mathcal{L}(\phi) = \sum_{n=1}^N \mathbb{E}_{q_\phi}[\log p(y_n \mid x_n, \mathbf{w})] - \KL{q_\phi(\mathbf{w})}{p(\mathbf{w})}\).

For online learning, we use sequential variational inference \citep{nguyen2018variational}. After observing data up to time \(t-1\), we have \(q_{t-1}(\mathbf{w})\). When new data \((x_t, y_t)\) arrives, we update: \(q_t = \arg\min_{q \in \mathcal{Q}} \KL{q(\mathbf{w})}{\frac{p(y_t \mid x_t, \mathbf{w}) q_{t-1}(\mathbf{w})}{Z_t}}\), where \(Z_t\) is the normalising constant.

The variational continual learning (VCL) update incorporates a forgetting factor \citep{nguyen2018variational}: \(q_t(\mathbf{w}) \propto p(y_t \mid x_t, \mathbf{w}) q_{t-1}(\mathbf{w})^\lambda\), with \(\lambda \in [0,1]\). For Gaussian variational approximations, \(q(\mathbf{w}) = \mathcal{N}(\mathbf{w} \mid \boldsymbol{\mu}, \boldsymbol{\Sigma})\), the VCL update can be implemented by optimising \(\mathcal{L}_t(\boldsymbol{\mu}, \boldsymbol{\Sigma}) = \mathbb{E}_{q}[\log p(y_t \mid x_t, \mathbf{w})] - \lambda \KL{q(\mathbf{w})}{q_{t-1}(\mathbf{w})}\).

While variational methods are scalable, they introduce approximation error that can accumulate over time. The ellipsoidal decomposition framework offers a potential path toward near-exact online inference in BNNs, at least for networks of moderate size. By treating the weight posterior as the target distribution \(\pi(\mathbf{w})\) and applying Algorithm \ref{alg:iid_sampling}, we could generate practically perfect iid samples from the exact posterior at each time step, eliminating approximation error almost entirely.

\subsection{In-Context Learning as Implicit Bayesian Inference}

Transformers trained on diverse tasks can perform in-context learning, which can be understood as implicit Bayesian inference \citep{dai2024context, garg2022what, xie2021explanation}. Consider a transformer that maps an input \(x\) and context \(\mathcal{D}_{\text{context}} = \{(x_j, y_j)\}_{j=1}^k\) to a prediction \(\hat{y} = f_{\text{TF}}(x, \mathcal{D}_{\text{context}})\). This is analogous to Bayesian prediction: \(p(y \mid x, \mathcal{D}_{\text{context}}) = \int p(y \mid x, \theta) p(\theta \mid \mathcal{D}_{\text{context}}) \, d\theta\).

Recent theoretical work shows that transformers with appropriate architecture can implement gradient-based learning algorithms \citep{von2023transformers, akyurek2022what}. For a linear self-attention layer, the update can be written as \(\text{Attention}(Q,K,V) = \text{softmax}\left(\frac{QK^\top}{\sqrt{d}}\right) V\), which resembles a kernel smoother. With multiple layers, transformers can implement iterative algorithms that converge to the Bayes-optimal predictor under certain conditions \citep{xie2021explanation}.

The ellipsoidal decomposition framework provides a way to validate and potentially improve in-context learning by providing ground-truth Bayesian predictions for comparison. For small to moderate-sized problems, we could compute practically exact posterior predictive distributions using iid samples and compare them to transformer outputs, shedding light on the implicit inference mechanisms learned by these models.

\section{Decision-Making Under Uncertainty: Bandits and Beyond}

\subsection{Multi-Armed Bandits: Mathematical Framework}

The multi-armed bandit problem provides a mathematical abstraction for sequential decision-making under uncertainty \citep{lattimore2020bandit, bubeck2012regret}. An agent chooses among \(K\) actions (arms) at each time step \(t=1,\ldots,T\). Each action \(a\) yields a random reward \(r_t(a)\) drawn from an unknown distribution with mean \(\mu_a\). The regret of a policy \(\pi\) that selects actions \(A_t\) is \(R_T(\pi) = \sum_{t=1}^T \mu^* - \mathbb{E}[\mu_{A_t}]\), where \(\mu^* = \max_a \mu_a\) is the mean reward of the optimal arm \citep{lattimore2020bandit}.

\subsection{Thompson Sampling: Algorithm and Analysis}
\label{subsec:thompson_sampling}

Thompson sampling is a Bayesian algorithm for the bandit problem \citep{thompson1933likelihood, russo2018tutorial}. Let \(p(\theta)\) be a prior over parameters, and let \(p(\theta \mid \mathcal{H}_t)\) be the posterior after observing history \(\mathcal{H}_t = \{(A_s, r_s)\}_{s=1}^{t-1}\).

\begin{algorithm}
\caption{Thompson Sampling \citep{thompson1933likelihood}}
\label{alg:thompson}
\begin{algorithmic}[1]
\For{\(t = 1\) to \(T\)}
    \State Sample \(\tilde{\theta}_t \sim p(\theta \mid \mathcal{H}_t)\)
    \State Select \(A_t = \arg\max_a \mathbb{E}[r \mid a, \tilde{\theta}_t]\)
    \State Observe reward \(r_t\)
    \State Update posterior \(p(\theta \mid \mathcal{H}_{t+1}) \propto p(r_t \mid A_t, \theta) p(\theta \mid \mathcal{H}_t)\)
\EndFor
\end{algorithmic}
\end{algorithm}

For Bernoulli bandits with Beta prior, Thompson sampling has a closed form. For Bernoulli bandits, Thompson sampling achieves \(\mathbb{E}[R_T] \leq \sum_{a: \mu_a < \mu^*} \frac{\log T}{\Delta_a} + \mathcal{O}(1)\), where \(\Delta_a = \mu^* - \mu_a\) \citep{agrawal2012analysis}. %This matches the lower bound up to constants. 
For general reward distributions, the regret bound depends on the Eluder dimension \citep{russo2014learning}: \(\mathbb{E}[R_T] = \tilde{\mathcal{O}}(\sqrt{d_E T \log K})\).

\begin{figure}[htbp]
\centering
\begin{tikzpicture}[node distance=2.5cm, scale=0.9, transform shape]
    % Initial state
    \node (prior) [rectangle, draw, fill=orange!20, minimum width=3cm, text centered] {Prior $p(\theta)$};
    
    % Loop
    \node (post) [rectangle, draw, fill=green!20, below of=prior, minimum width=3cm, text centered] {Posterior $p(\theta|\mathcal{H}_t)$};
    \node (sample) [rectangle, draw, fill=blue!20, below of=post, minimum width=3cm, text centered] {Sample $\tilde{\theta}_t \sim p(\theta|\mathcal{H}_t)$};
    \node (action) [rectangle, draw, fill=red!20, below of=sample, minimum width=3cm, text centered] {Select $A_t = \arg\max_a \mathbb{E}[r|a,\tilde{\theta}_t]$};
    \node (reward) [rectangle, draw, fill=yellow!20, below of=action, minimum width=3cm, text centered] {Observe reward $r_t$};
    
    % Connections
    \draw[->] (prior) -- node[right] {Initial} (post);
    \draw[->] (post) -- (sample);
    \draw[->] (sample) -- (action);
    \draw[->] (action) -- (reward);
    \draw[->] (reward) -- ++(2.5,0) |- node[right, pos=0.25] {Update} (post);
    
    % Regret formula
    \node at (0,-9.5) {$\mathbb{E}[R_T] = \tilde{\mathcal{O}}(\sqrt{d_E T \log K})$};
    
    % Ellipsoidal enhancement
    \node at (0,-11) [rectangle, draw, fill=purple!20, text width=9cm, text centered, minimum height=1.2cm] {
        \textbf{Ellipsoidal Enhancement:} Near-perfect iid samples from posterior enable essentially exact Thompson sampling
    };
\end{tikzpicture}
\caption{Thompson sampling for sequential decision-making under uncertainty.}
\label{fig:thompson}
\end{figure}

The ellipsoidal decomposition framework can significantly enhance Thompson sampling (Figure \ref{fig:thompson}) by providing practically exact iid samples from the posterior \(p(\theta \mid \mathcal{H}_t)\) at each time step. Traditional implementations use MCMC or variational approximations, which introduce errors that propagate through time. With near-perfect iid samples, the Thompson sampling algorithm becomes essentially exact, potentially improving regret bounds and enabling tighter theoretical analysis.

\subsection{Collaborative Bandits and Matrix Factorization}

In collaborative bandits, we have \(U\) users and \(A\) actions \citep{dasgupta2025bayesian, li2010contextual}. Let \(R \in \mathbb{R}^{U \times A}\) be the reward matrix, assumed low-rank: \(R = UV^\top\) with \(U \in \mathbb{R}^{U \times d}\), \(V \in \mathbb{R}^{A \times d}\), and \(d \ll \min(U,A)\). The Bayesian model places priors on the latent factors \citep{dasgupta2025bayesian, mnih2008probabilistic}: \(U_i \sim \mathcal{N}(0, \sigma_U^2 I_d)\) for \(i=1,\ldots,U\), \(V_j \sim \mathcal{N}(0, \sigma_V^2 I_d)\) for \(j=1,\ldots,A\), and \(R_{ij} \mid U_i, V_j \sim \mathcal{N}(U_i^\top V_j, \sigma^2)\).

Posterior inference proceeds via Gibbs sampling or variational inference \citep{mnih2008probabilistic}. Thompson sampling samples factors from the posterior and selects actions maximising the sampled reward \citep{dasgupta2025bayesian}: \(a_t = \arg\max_a \tilde{U}_{u_t}^\top \tilde{V}_a\), where \(u_t\) is the current user and \((\tilde{U}, \tilde{V})\) are sampled factors.

The matrix factorization posterior in collaborative bandits is a natural candidate for the ellipsoidal decomposition framework. The parameter space \((\{U_i\}, \{V_j\})\) can be high-dimensional, but the framework's ability to handle dimensions up to 4776 (and potentially higher with diffeomorphic transformations) makes it applicable. Near-perfect iid samples from the posterior would eliminate the need for Gibbs sampling (which may mix slowly) or variational approximations (which underestimate uncertainty).

\subsection{Restless Bandits and State-Space Models}

The restless multi-armed bandit (RMAB) extends the framework to settings where arms evolve over time \citep{whittle1988restless, tekin2010bayesian}. Each arm \(a\) has a state \(s_t^{(a)}\) that evolves according to Markov dynamics: \(s_{t+1}^{(a)} \sim P_a(\cdot \mid s_t^{(a)}, a_t)\), where \(a_t\) indicates whether the arm was pulled.

The Bayesian approach uses a state-space model for each arm \citep{liang2025context, osband2016deep}: \(s_t \sim p(s_t \mid s_{t-1}, \theta)\) and \(r_t \sim p(r_t \mid s_t, \phi)\). Posterior inference over states and parameters is performed using particle filters or Kalman filters \citep{liang2025context}. For linear Gaussian dynamics: \(s_t = A s_{t-1} + B a_t + w_t\) with \(w_t \sim \mathcal{N}(0, Q)\) and \(r_t = C s_t + D a_t + v_t\) with \(v_t \sim \mathcal{N}(0, R)\), the Kalman filter provides exact posterior updates.

Thompson sampling then proceeds by sampling states and parameters from the current posterior and solving the resulting known Markov Decision Process \citep{liang2025context}. The optimal policy for a known RMAB is given by the Whittle index \citep{whittle1988restless}. The BCoR framework (Bayesian Learning for Contextual RMABs) incorporates covariates \(x_t\) \citep{liang2025context}: \(s_{t+1} = f(s_t, a_t, x_t) + w_t\) and \(r_t = g(s_t, a_t, x_t) + v_t\).

In restless bandits, the ellipsoidal decomposition framework can be applied to the joint posterior over states and parameters, providing practically exact samples for Thompson sampling. This is particularly valuable because particle filters in RMABs suffer from degeneracy over time, while MCMC for the joint posterior is prone to serious
convergence issues. Near-perfect iid samples would enable essentially exact decision-making at each time step with negligible approximation error.

\section{Sequential Bayesian Optimisation: The Gaussian Derivative Process Approach}

The principles of sequential Bayesian updating extend naturally to function optimisation, where the goal is to find the optimum of an unknown function by strategically selecting points to evaluate. This section examines the work of \citet{roy2021function} (see also the PhD thesis \cite{Soma_thesis}) on optimisation using posterior Gaussian derivative processes as a compelling example of the Bayesian reflex in action. Other approaches to Bayesian optimisation include expected improvement \citep{mockus1978application} and upper confidence bound \citep{srinivas2010gaussian}.

\subsection{Problem Formulation}

Consider an unknown function \(f: \mathbb{R}^d \mapsto \mathbb{R}\) for which we wish to find the global minimum (or maximum). We assume that the first and second partial derivatives of \(f\) exist and are continuous. The key insight of \citet{roy2021function} is to embed the function in a Gaussian process prior, which induces a derivative process that is also Gaussian. This derivative process becomes the foundation for sequential learning about the function's stationary points.

\subsection{Gaussian Process and Derivative Process}

Let \(g(\cdot)\) be a Gaussian process with mean function \(\mu(\cdot)\) and covariance function \(\sigma^2 c(\cdot,\cdot)\). Under appropriate regularity conditions, the derivative process \(g_i'(\mathbf{x}) = \partial g(\mathbf{x})/\partial x_i\) exists and is also a Gaussian process with mean \(\mu_i'(\cdot) = \partial \mu(\cdot)/\partial x_i\) and covariance function \(\sigma^2 \partial^2 c(\cdot,\cdot)/\partial x_i \partial y_i\).

Given initial data \(D_n = \{(\mathbf{x}_i, f(\mathbf{x}_i)): i=1,\ldots,n\}\), the posterior distribution of the derivative vector \(\mathbf{g}'(\mathbf{x}^*)\) at any point \(\mathbf{x}^*\) is a multivariate Student's t distribution:
\begin{equation*}
\pi(\mathbf{g}'(\mathbf{x}^*) | D_n) \propto \left[1 + \frac{\hat{\mu}'(\mathbf{x}^*)^T \hat{\Sigma}(\mathbf{x}^*)^{-1} \hat{\mu}'(\mathbf{x}^*)}{2b + (f_n - H\beta_0)^T (H\Sigma_0 H^T + \Sigma_{22})^{-1} (f_n - H\beta_0)}\right]^{-(a+d)},
\end{equation*}
where \(\hat{\mu}'(\mathbf{x}^*)\) and \(\hat{\Sigma}(\mathbf{x}^*)\) are posterior mean and scale derived from the Gaussian process.

\subsection{Sequential Optimisation Algorithm}

\citet{roy2021function} propose an iterative algorithm that embodies the Bayesian reflex through sequential data augmentation. In the initial stage (Stage 0), we simulate \(N\) realisations \(\{\mathbf{x}_1^*,\ldots,\mathbf{x}_N^*\}\) from \(\pi(\mathbf{x}^*|\mathbf{g}'(\mathbf{x}^*) = \mathbf{0}, D_n)\) using TMCMC, with prior constraints \(\|\mathbf{f}'(\mathbf{x}^*)\|_d < \epsilon\) and \(\mathbf{\Sigma}''(\mathbf{x}^*) > 0\) for minimisation problems. For subsequent stages \(k = 1,2,3,\ldots\), we compute importance weights 
\begin{align*}
w_k(\mathbf{x}_i^*) = \begin{cases} 1 & \text{for } k=1 \\ \frac{\pi(\mathbf{g}'(\mathbf{x}_i^*) = \mathbf{0}|D_{n+\sum_{j=0}^{k-1}n_j},\mathbf{x}_i^*)}{\pi(\mathbf{g}'(\mathbf{x}_i^*) = \mathbf{0}|D_{n+\sum_{j=0}^{k-2}n_j},\mathbf{x}_i^*)} & \text{for } k\geq 2 \end{cases}
\end{align*}
and resample \(M\) points with probabilities proportional to these weights. We then select points satisfying \(\|\mathbf{f}'(\mathbf{x}_{i_j}^*)\|_d < \eta_k\) with \(\eta_k \to 0\), evaluate \(f\) at these points, and augment the data: \(D_{n+\sum_{j=0}^k n_j} = D_{n+\sum_{j=0}^{k-1}n_j} \cup \{(\mathbf{x}_{i_j}^*, f(\mathbf{x}_{i_j}^*))\}\).

\subsection{Connections to Online Learning}

This algorithm exemplifies several key aspects of online Bayesian learning. The posterior distribution is updated each time new function evaluations are added to the dataset, embodying sequential updating. Points are selected based on the current posterior, balancing exploration of points with high uncertainty and exploitation of points near suspected optima, which is uncertainty-driven action. \citet{roy2021function} prove almost sure convergence of the algorithm to the true optima as the number of stages increases, under appropriate fixed-domain infill asymptotics. The posterior distribution provides natural credible regions for the optima, which shrink as more data is accumulated, demonstrating belief maintenance.

\subsection{Convergence Theory}

A key theoretical contribution is the proof of uniform convergence of the posterior Gaussian process and its derivatives to the true function and its derivatives. Under fixed-domain infill asymptotics where the set of input points becomes dense in the domain, the posterior Gaussian process \(g_n(\cdot)\) and its derivative process \(\mathbf{g}_n'(\cdot)\) satisfy \(\sup_{\mathbf{x} \in \mathcal{X}} |g_n(\mathbf{x}) - f(\mathbf{x})| \to 0\) almost surely and \(\sup_{\mathbf{x} \in \mathcal{X}} \|\mathbf{g}_n'(\mathbf{x}) - \mathbf{f}'(\mathbf{x})\|_d \to 0\) almost surely, with rates \(O(h^{3/2})\) and \(O(h^{1/2})\) respectively, where \(h\) is the maximum spacing between input points \citep{roy2021function}. This result ensures that as data accumulates sequentially, the posterior beliefs converge to the truth—a fundamental requirement for any online learning system.

\subsection{Bayesian Characterisation of the Number of Optima}

An elegant aspect of this approach is its ability to characterise the number of local optima through a Dirichlet process (DP) formulation. Let \(Y_j\) indicate whether a new point falls in the region satisfying optimality conditions. Then, with $P_k$ denoting the $k$-th stage (unknown) probability measure associated with $Y_k$, 
\(\pi(P_k | y_k) \sim DP\left(\sum_{j=1}^k \frac{1}{j^2} G + \sum_{j=1}^k \delta_{y_j}\right)\), where \(G\) is a base measure. The posterior mean of \(p_{mk}\) (probability of \(m\) optima) converges to 1 for the true number of optima as \(k \to \infty\).

\subsection{Relationship to Classical Bayesian Optimisation}

It is important to distinguish this approach from conventional Bayesian optimisation methods that treat the function as a black box and use acquisition functions to guide sampling \citep{snoek2012practical, shahriari2015taking}. The derivative-based approach of \citet{roy2021function} leverages gradient information to achieve significantly faster convergence, particularly in higher dimensions. As they note, function optimisation methods using traditional Gaussian process posteriors consider the objective function to be a ``black box'' and assume that the derivatives are unavailable. These methods would naturally be far less accurate compared to theirs when the derivatives are available. Moreover, convergence theory of such method does not seem to be available, and the final optimisation result is likely to depend upon the subjective choice of acquisition functions.

\subsection{Experimental Insights}

The algorithm has been tested on problems ranging from simple one-dimensional functions to challenging 50- and 100-dimensional optimisation problems. The posterior simulation approach consistently yields more accurate solutions than deterministic optimisation algorithms. Dimensionality remains a challenge, with gradient norms increasing with dimension, but the method still outperforms alternatives. Even when Markov chain mixing is less than ideal in high dimensions, the method can significantly outperform popular optimisation methods. Parallel implementation of importance weight computation makes the algorithm computationally feasible.

For simple functions like \(f(x) = 2x^3 - 3x^2 - 12x + 6\), the algorithm accurately identifies both maximum at \(x = -1\) and minimum at \(x = 2\) with estimates \(\hat{x}_{\max} = -0.999995\) and \(\hat{x}_{\min} = 2.000023\). For \(f(x) = \sin(x)\) on \([-10,10]\), the algorithm captures all three maxima and minima, with trace plots clearly exhibiting trimodality. For the two-dimensional function \(f(x_1, x_2) = x_1 x_2 (x_1 + x_2)(1 + x_2)\), the algorithm successfully identifies the maximum at \((0.376858, -0.752406)\), saddle points at \((-0.000309, -0.999580)\) and \((0.999433, -0.999915)\), and the inconclusive point at \((-0.000087, -0.000265)\) where the second derivative test fails. In tests on problems with dimensions up to 100, the algorithm demonstrates for \(d=50\) a gradient norm reduced to 73.05 and for \(d=100\) a gradient norm reduced to 330.45, with consistent outperformance of classical optimisation methods like Fisher scoring.

\subsection{Significance for the Bayesian Reflex}

This work demonstrates that the Bayesian reflex extends beyond traditional streaming data scenarios to settings where the system actively decides which observations to acquire next. The sequential updating of beliefs, uncertainty-driven exploration, and theoretical convergence guarantees all exemplify the core principles of the Bayesian reflex, making this a valuable addition to the online learning canon.

\section{Bayesian Inverse Regression for Climate Model Evaluation}

The principles of sequential Bayesian updating and look-up table methodology extend to an entirely different domain: evaluating complex physical models against historical data. The work of \citet{chatterjee2022ominous} on assessing global warming projections (see also the PhD thesis \cite{Debashis_thesis}, containing this work)  demonstrates how the Bayesian reflex operates in inverse regression settings, where the goal is to assess the plausibility of observed data given assumed future scenarios. Critically, this work directly builds upon the look-up table framework introduced by \citet{bhattacharya2007simulation}, adapting it to the challenging context of climate modelling.

\subsection{The Inverse Regression Paradigm}

Traditional forecasting uses past data to predict the future. However, \citet{chatterjee2022ominous} pose a fundamentally different question: How likely is the observed global warming pattern, assuming that the future projections of general circulation models (GCMs) are correct? This reframes the typical forecasting paradigm into an inverse regression setting:

\begin{equation*}
[x_1, \ldots, x_{T_0} \mid x_{T_0+1}, \ldots, x_T]
\end{equation*}

where \(x_1,\ldots,x_{T_0}\) represent current temperatures (1850-2016) and \(x_{T_0+1},\ldots,x_T\) represent future GCM projections (2017-2099). This inverse conditioning is a form of backward smoothing or retrospective inference, well-established in sequential Monte Carlo methods and Kalman filtering.

\subsection{Compositional Gaussian Process Emulator for Climate Dynamics}

To model the temporal dynamics of global temperature, \citet{chatterjee2022ominous} develop a nonparametric compositional Gaussian process emulator that treats the climate system as an unknown black-box process:

\begin{equation*}
x_t = f_t(x_{t-1}) + \epsilon_t, \quad \epsilon_t \sim N(0, \sigma_\epsilon^2)
\end{equation*}

where \(f_t\) is an unknown time-dependent function modelled as a Gaussian process. This state-space formulation captures complex nonlinear dependencies while maintaining principled uncertainty quantification.

The key innovation is the compositional structure: for any time \(t\), we have \(x_t = f(t, x_{t-1}) + \epsilon_t\), with \(f(\cdot,\cdot)\) modelled as a GP. This induces a non-Markovian dependence structure when marginalised, but computational tractability is achieved by introducing auxiliary variables \(\mathbf{D}_n^*\) that serve as a ``look-up table'' for the function \(f\).

\begin{figure}[htbp]
\centering
\begin{tikzpicture}[scale=0.9, transform shape]
    % Time line
    \draw[thick, ->] (0,0) -- (14,0);
    \node at (14,0.3) {Time};
    
    % Time periods
    \draw[thick] (1,0.3) -- (1,0.6);
    \draw[thick] (6,0.3) -- (6,0.6);
    \draw[thick] (1,0.45) -- (6,0.45);
    \node at (3.5,1.2) {Past $T_0$ (1850-2016)};
    
    \draw[thick] (7,0.3) -- (7,0.6);
    \draw[thick] (13,0.3) -- (13,0.6);
    \draw[thick] (7,0.45) -- (13,0.45);
    \node at (10,1.2) {Future (2017-2099)};
    
    % Observations
    \foreach \x in {2,3,4,5} {
        \filldraw[blue] (\x,0) circle (3pt);
    }
    \node at (3.5,-0.6) {HadCRUT4 $x_1,\ldots,x_{T_0}$};
    
    % GCM projections
    \foreach \x in {8,9,10,11,12} {
        \filldraw[red] (\x,0.3) circle (2pt);
        \filldraw[red] (\x,0) circle (2pt);
        \filldraw[red] (\x,-0.3) circle (2pt);
    }
    \node at (10,0.9) {GCM ensemble $x_{T_0+1},\ldots,x_T$};
    
    % Inverse conditioning
    \draw[->, thick, dashed] (8,1.5) to[out=90, in=90] node[midway, above] {Inverse conditioning} (3,1.5);
    
    % GP emulator
    \node (gp) at (7,-1.5) [rectangle, draw, fill=green!20, align=center, minimum width=5cm] {Compositional GP Emulator \\ $x_t = f_t(x_{t-1}) + \epsilon_t$};
    
    % Look-up table (inherited from Bhattacharya 2007)
    \node (lut) at (7,-3) [rectangle, draw, fill=purple!20, align=center, minimum width=5cm] {Look-up Table $\mathbf{D}_n^*$ \citep{bhattacharya2007simulation}};
    \draw[->] (gp) -- (lut);
    
    % Results
    \node at (7,-4.8) [text width=12cm, align=center, minimum height=1.5cm] {
        \textbf{Key Finding:} $[x_1,\ldots,x_{T_0} \mid x_{T_0+1},\ldots,x_T]$ largely excludes observed data \\
        $\Rightarrow$ GCM projections appear inconsistent with historical record under this model
    };
\end{tikzpicture}
\caption{Inverse regression framework for climate model evaluation, building on look-up table principle.}
\label{fig:climate}
\end{figure}

\subsection{Sequential Updating with Look-up Table}

The look-up table approach (Figure \ref{fig:climate}), directly inherited from \citet{bhattacharya2007simulation}, exemplifies the Bayesian reflex in action. Given initial data, we simulate the GP at a set of grid points \(\mathbf{G}_n\) to obtain \(\mathbf{D}_n^* = \{f(z_1^*), \ldots, f(z_n^*)\}\), where, for $i=1,\ldots,n$, 
$z^*_i\in [0,\infty)\times\mathbb (-\infty,\infty)$. Then, for any subsequent time \(t\), the conditional distribution of \(x_t\) given \(\mathbf{D}_n^*\) and past observations is Gaussian:

\begin{align*}
[x_t = f(x_{t,t-1}^*) + \epsilon_t \mid \mathbf{D}_n^*, x_{t-1}, \boldsymbol{\theta}_f, \sigma_\epsilon^2] \sim N(\mu_t, \sigma_t^2)
\end{align*}

with $x^*_{t,t-1}=(t,x_{t-1})$, and mean and variance given by GP regression formulas. This allows sequential simulation of the entire time series while maintaining computational stability.

The authors explicitly acknowledge this intellectual debt:

\begin{quote}
``The key idea has parallels with Bhattacharya (2007) and Ghosh et al. (2014). It is important to appreciate that although our time series model seems to be a Markovian model at the first glance, it is actually made up of compositions of GPs, and as we shall clarify, has highly structured non-Markovian dependence, with non-Gaussian, intractable distribution.''
\end{quote}

\subsection{Inverse Posterior Distribution}

The key quantity of interest is the posterior distribution of current temperatures given future GCM projections:

\begin{align*}
[x_1, \ldots, x_{T_0} \mid x_{T_0+1}, \ldots, x_T] = \int [x_1, \ldots, x_{T_0} \mid \mathbf{D}_n^*, \boldsymbol{\theta}_f, \sigma_\epsilon^2] \, d[\mathbf{D}_n^*, \boldsymbol{\theta}_f, \sigma_\epsilon^2 \mid x_{T_0+1}, \ldots, x_T]
\end{align*}

This is obtained by first sampling from the posterior of the look-up table and parameters given future data, then simulating the current time series conditional on these samples. The procedure directly mirrors the forward filtering-backward smoothing algorithms in state-space models and leverages the computational advantages of the look-up table representation.

\subsection{Bayesian Multiple Testing for Model Selection}

With multiple GCMs in each climate scenario (A1B, A2, B1, Commitment), \citet{chatterjee2022ominous} develop a Bayesian multiple testing framework to select the best model. For each GCM \(k\), they construct an averaged time series \(\bar{x}_t = K^{-1} \sum_{k=1}^K x_t^{(k)}\) and define discrepancy measures:

\begin{align*}
S_1^{(k)}(\mathbf{v}_{T_0}) &= \frac{1}{T_0} \sum_{t=1}^{T_0} \frac{|v_t - M(\bar{x}_t \mid \bar{x}_{T_0+1},\ldots,\bar{x}_T, \mathcal{M}_k)|}{\sqrt{\Var(\bar{x}_t \mid \bar{x}_{T_0+1},\ldots,\bar{x}_T, \mathcal{M}_k) + c}}, \\
S_2^{(k)}(\mathbf{v}_{T_0}) &= \frac{1}{T_0} \sum_{t=1}^{T_0} \frac{(v_t - M(\bar{x}_t \mid \bar{x}_{T_0+1},\ldots,\bar{x}_T, \mathcal{M}_k))^2}{\Var(\bar{x}_t \mid \bar{x}_{T_0+1},\ldots,\bar{x}_T, \mathcal{M}_k) + c}
\end{align*}
where $\mathcal M_k$ is the $k$-th GCM model and \(M(\cdot)\) denotes posterior mode. For appropriate lower and upper bounds $\bar\ell_k$ and $\bar u_k$, the hypotheses are formulated as:

\begin{align*}
H_{0k}: \zeta = k, S^{(k)}(\bar{\mathbf{x}}_{T_0}) - S^{(k)}(\mathbf{x}_{T_0}^{(0)}) \in [\bar{\ell}_k, \bar{u}_k] \\
H_{1k}: \{\zeta \neq k\} \cup \{\zeta = k, S^{(k)}(\bar{\mathbf{x}}_{T_0}) - S^{(k)}(\mathbf{x}_{T_0}^{(0)}) \in [\bar{\ell}_k, \bar{u}_k]^c\}
\end{align*}

This combines forward model probability \([\zeta = k \mid \bar{x}_{T_0+1},\ldots,\bar{x}_T]\) with inverse fit assessment through the discrepancy measure.

\subsection{Multivariate Extension for Ensemble Analysis}

Recognising that GCM simulations in each scenario form an ensemble, \citet{chatterjee2022ominous} extend their approach to multivariate Gaussian processes. For \(K\)-dimensional vectors \(\mathbf{x}_t = (x_t^{(1)},\ldots,x_t^{(K)})^\top\), the model becomes:

\begin{equation*}
\mathbf{x}_t = \mathbf{f}(\mathbf{x}_{t,t-1}^*) + \boldsymbol{\epsilon}_t, \quad \boldsymbol{\epsilon}_t \sim N_K(\mathbf{0}, \boldsymbol{\Sigma}_\epsilon)
\end{equation*}

where \(\mathbf{f}\) is a \(K\)-variate GP with covariance structure \(\operatorname{Cov}(\mathbf{f}(\mathbf{z}_1^*), \mathbf{f}(\mathbf{z}_2^*)) = c_f(\mathbf{z}_1^*, \mathbf{z}_2^*) \boldsymbol{\Sigma}_f\). This Kronecker product structure enables scalable inference while capturing correlations across ensemble members.

\subsection{Empirical Findings}

The empirical results are striking. Under the inverse Bayesian model testing framework, most GCMs—regardless of scenario—assign low posterior probability to the actual global warming pattern observed from 1850 to 2016. In other words, if their future projections were correct, the present as we know it would be highly unlikely under this model. Figures showing posterior distributions of current temperatures conditional on future GCM projections reveal that for scenarios A1B, A2, and B1, the observed HadCRUT4 data largely falls outside high-density regions. Only the Commitment scenario shows partial alignment.

The multivariate analysis reinforces these conclusions. For the averaged time series \([\bar{x}_1,\ldots,\bar{x}_{T_0} \mid \mathbf{x}_{T_0+1},\ldots,\mathbf{x}_T]\), the observed data lies far outside the posterior support. Even when considering the maximum across ensemble members, the fit remains poor, with posterior variances high and observed data in low-density regions.

\subsection{Forward Forecasting with Historical Data}

Using only historical HadCRUT4 data (1850-2016), \citet{chatterjee2022ominous} produce Bayesian forecasts for 2017-2099. Their predictions suggest more moderate warming trajectories than GCM projections, with posterior modes lying far below GCM forecasts. Only the Commitment scenario shows partial alignment with their predictive distributions.

Remarkably, these results align with the benchmark forecast of \citet{green2009validity}, who argued that future global temperatures would remain within \(0.5^\circ\)C of 2008 levels. The Bayesian GP forecasts place the future temperature within this range, casting serious doubt on the drastic warming predicted by most GCMs \textit{under the assumptions of this statistical model}. It is important to note that these findings are based on a univariate time series model and do not incorporate the full complexity of physical climate models; they should be interpreted as one statistical perspective among many \citep{ipcc2021}.

\subsection{Significance for the Bayesian Reflex}

This work demonstrates that the Bayesian reflex operates in multiple directions—not just forward in time, but also backward through inverse conditioning. The sequential updating of beliefs about the climate system, whether through forward prediction or inverse assessment, embodies the same core principles: belief maintenance via probabilistic representations, sequential updating through Bayes' theorem, and uncertainty-driven assessment of model adequacy.

The methodological contributions are substantial: compositional GPs for nonparametric time series, look-up table approximations for computational tractability (building directly on \citet{bhattacharya2007simulation}), Bayesian multiple testing for model selection in inverse settings, and multivariate GP extensions for ensemble analysis. These tools provide a rigorous statistical framework for evaluating complex physical models, with implications far beyond climate science.

The authors themselves recognise the broader significance of their approach:

\begin{quote}
``Our framework could be extended to incorporate spatio-temporal models, dynamic covariates, or hybrid approaches that integrate physical constraints into the statistical emulation process. We also anticipate that future work may apply this framework to other environmental processes, such as sea-level rise or precipitation extremes, where model uncertainty remains high and decision-making stakes are critical.''
\end{quote}

\section{Recursive Bayesian Assessment of Series Convergence}

Perhaps the most fundamental expression of the Bayesian reflex appears in the work of \citet{roy2020bayesian, roy2020bayesian2} on assessing the convergence of infinite series;
see also \cite{Soma_thesis}. This seemingly abstract mathematical problem reveals deep connections to online learning and has profound implications for understanding long-term behaviour in complex systems, including climate dynamics.

\subsection{The Recursive Bayesian Framework}

Consider an infinite series \(\sum_{i=1}^\infty X_i\), where the terms may be deterministic or random. \citet{roy2020bayesian2} develop a recursive Bayesian procedure that processes the series in stages. At stage \(j\), they consider a block of \(n_j\) terms and define the partial sum:

\begin{equation*}
S_{j,n_j} = \sum_{i=\sum_{k=0}^{j-1}n_k+1}^{\sum_{k=0}^j n_k} X_i
\end{equation*}

with \(n_0 = 0\). They then introduce a non-negative decreasing sequence \(\{c_j\}_{j=1}^\infty\) and define the binary indicator:

\begin{equation*}
Y_{j,n_j} = \mathbb{I}\{|S_{j,n_j}| \leq c_j\}
\end{equation*}

The probability associated with this indicator, \(p_{j,n_j} = P(Y_{j,n_j} = 1)\), can be interpreted as the probability that the series is convergent based on the data observed up to stage \(j\).

The recursive Bayesian update proceeds as follows. At stage 1, with prior \(p_{1,n_1} \sim \text{Beta}(\alpha_1, \beta_1)\) where \(\alpha_1 = \beta_1 = 1\), the posterior is:

\begin{equation*}
\pi(p_{1,n_1} | y_{1,n_1}) \sim \text{Beta}(\alpha_1 + y_{1,n_1}, \beta_1 + 1 - y_{1,n_1})
\end{equation*}

At stage 2, the prior is constructed from the stage 1 posterior associated with a \(\text{Beta}(\alpha_1 + \alpha_2, \beta_1 + \beta_2)\) prior, yielding:

\begin{equation*}
\pi(p_{2,n_2}) \sim \text{Beta}(\alpha_1 + \alpha_2 + y_{1,n_1}, \beta_1 + \beta_2 + 1 - y_{1,n_1})
\end{equation*}

Continuing this process, at stage \(k\) the posterior becomes:

\begin{equation}
\pi(p_{k,n_k} | y_{k,n_k}) \sim \text{Beta}\left(\sum_{j=1}^k \alpha_j + \sum_{j=1}^k y_{j,n_j}, \; k + \sum_{j=1}^k \beta_j - \sum_{j=1}^k y_{j,n_j}\right)
\label{eq:beta_posterior}
\end{equation}

With the choice \(\alpha_j = \beta_j = 1/j^2\), the posterior mean and variance simplify to:

\begin{align*}
E(p_{k,n_k} | y_{k,n_k}) &= \frac{\sum_{j=1}^k 1/j^2 + \sum_{j=1}^k y_{j,n_j}}{k + 2\sum_{j=1}^k 1/j^2} \\
\Var(p_{k,n_k} | y_{k,n_k}) &= \frac{(\sum_{j=1}^k 1/j^2 + \sum_{j=1}^k y_{j,n_j})(k + \sum_{j=1}^k 1/j^2 - \sum_{j=1}^k y_{j,n_j})}{(k + 2\sum_{j=1}^k 1/j^2)^2(1 + k + 2\sum_{j=1}^k 1/j^2)}
\end{align*}
The principle is illustrated schematically by Figure \ref{fig:series}.

\begin{figure}[htbp]
\centering
\begin{tikzpicture}[scale=0.9, transform shape]
    % Series representation
    \draw[thick] (0,0) -- (13,0);
    \foreach \x in {1,2,3,4,5,6,7,8,9,10,11,12} {
        \draw (\x,0.1) -- (\x,-0.1);
    }
    
    % Blocks – repositioned with wider gaps
    \draw[fill=blue!20, opacity=0.5] (1,0) rectangle (3,0.6);
    \draw[fill=blue!40, opacity=0.5] (4.5,0) rectangle (6.5,0.6);
    \draw[fill=blue!60, opacity=0.5] (8,0) rectangle (10,0.6);
    \draw[fill=blue!80, opacity=0.5] (11.5,0) rectangle (13.5,0.6);
    
    % Block labels
    \node at (2,1.2) {Block 1: $n_1$ terms};
    \node at (5.5,1.2) {Block 2: $n_2$ terms};
    \node at (9,1.2) {Block 3: $n_3$ terms};
    \node at (12.5,1.2) {Block $k$: $n_k$ terms};
    
    % Binary indicators – shifted horizontally
    \node (Y1) at (2,-0.8) {$Y_{1,n_1}$};
    \node (Y2) at (5.5,-0.8) {$Y_{2,n_2}$};
    \node (Y3) at (9,-0.8) {$Y_{3,n_3}$};
    \node (Yk) at (12.5,-0.8) {$Y_{k,n_k}$};
    
    \draw[->] (2,0.3) -- (Y1);
    \draw[->] (5.5,0.3) -- (Y2);
    \draw[->] (9,0.3) -- (Y3);
    \draw[->] (12.5,0.3) -- (Yk);
    
    % Beta updates – fixed width, small font, increased vertical spacing
    \node (beta1) at (2,-2.5) [text width=2.5cm, align=center, font=\footnotesize] {Beta($\alpha_1+y_1$, $\beta_1+1-y_1$)};
    \node (beta2) at (5.5,-2.5) [text width=2.5cm, align=center, font=\footnotesize] {Beta($\sum\alpha_j+\sum y_j$, $2+\sum\beta_j-\sum y_j$)};
    \node (beta3) at (9,-2.5) [text width=2.5cm, align=center, font=\footnotesize] {Beta($\sum\alpha_j+\sum y_j$, $3+\sum\beta_j-\sum y_j$)};
    \node (betak) at (12.5,-2.5) [text width=2.5cm, align=center, font=\footnotesize] {Beta($\sum_{j=1}^k \alpha_j + \sum_{j=1}^k y_j$, $k + \sum_{j=1}^k \beta_j - \sum_{j=1}^k y_j$)};
    
    \draw[->] (Y1) -- (beta1);
    \draw[->] (Y2) -- (beta2);
    \draw[->] (Y3) -- (beta3);
    \draw[->] (Yk) -- (betak);
    
    % Convergence theorem – moved further down
    \node at (7,-4.5) [text width=13cm, align=center, font=\small] {
        $\pi(\mathcal{N}_1 | y_{k,n_k}) \to 1$ iff series converges \\
        $\pi(\mathcal{N}_0 | y_{k,n_k}) \to 1$ iff series diverges
    };
\end{tikzpicture}
\caption{Recursive Bayesian framework for series convergence: Binary indicators from blocks of terms drive Beta-Binomial updates.}
\label{fig:series}
\end{figure}

\subsection{Theoretical Characterisation}

\citet{roy2020bayesian2} prove two fundamental theorems that characterise convergence and divergence in terms of the limiting posterior behaviour:

\begin{theorem}[Convergence Characterisation]
For any \(\omega\) in the sample space (excluding a null set), the series \(S_{1,\infty}(\omega)\) converges if and only if there exists a non-negative monotonically decreasing sequence \(\{c_j(\omega)\}_{j=1}^\infty\) such that for any choice of \(\{n_j\}_{j=1}^\infty\), \(\pi(\mathcal{N}_1 | y_{k,n_k}(\omega)) \to 1\) as \(k \to \infty\), where \(\mathcal{N}_1\) is any neighbourhood of 1.
\end{theorem}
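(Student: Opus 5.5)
The plan is to reduce the posterior statement to a statement about the empirical frequency of the indicators. After that, each direction of the equivalence becomes a short argument about Cauchy sums.

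\textbf{Step 1: reduce to the frequency of ones.} Write $A_k=\sum_{j=1}^k 1/j^2$, which lies in $[1,\pi^2/6]$, and $Y_k=\sum_{j=1}^k y_{j,n_j}$. By \eqref{eq:beta_posterior}, the posterior is $\text{Beta}(A_k+Y_k,\,k+A_k-Y_k)$. From the displayed formulas, the posterior mean is $m_k=(A_k+Y_k)/(k+2A_k)$, and the variance is at most $m_k(1-m_k)/(k+1)\to 0$. So by Chebyshev, the posterior concentrates on every neighbourhood $\mathcal N_1$ of $1$ if and only if $m_k\to 1$. Since $A_k$ is bounded, this holds if and only if $Y_k/k\to 1$. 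In the other direction, if $Y_k/k\to 0$ then $m_k\to 0$, and the posterior mass of a small $\mathcal N_1$ tends to $0$.

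\textbf{Step 2: convergence implies the posterior condition.} Fix $\omega$ outside the null set on which the $X_i$ are undefined or infinite, and suppose $\sum X_i(\omega)$ converges. Define
\[
c_j(\omega)=\sup_{m>n\ge j-1}\Bigl|\sum_{i=n+1}^{m}X_i(\omega)\Bigr|.
\]
This is finite and non-negative. It is non-increasing in $j$, and by the Cauchy criterion it tends to $0$. For any block sizes $n_j\ge 1$, the $j$-th block starts after index $\sum_{k<j}n_k\ge j-1$. Hence $|S_{j,n_j}|\le c_j$, so every $y_{j,n_j}=1$, giving $Y_k=k$. By Step 1 the posterior mass of every $\mathcal N_1$ tends to $1$, whatever $\{n_j\}$ is chosen.

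\textbf{Step 3: the posterior condition implies convergence.} Argue by contradiction: suppose the condition holds for some decreasing $\{c_j\}$ but the series diverges. The series is then not Cauchy, so there is an $\varepsilon>0$ such that for every $N$ there exist $m>n\ge N$ with $|\sum_{n+1}^m X_i|>\varepsilon$. Split this sum at $N$: one of $\sum_{N+1}^{n}X_i$ or $\sum_{N+1}^{m}X_i$ must exceed $\varepsilon/2$ in absolute value. So from every starting point $N$ there is a block whose sum exceeds $\varepsilon/2$.

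Use this to choose the $n_j$ adversarially and recursively: each block starts where the previous one ended and is chosen to be such a bad block. Once $c_j<\varepsilon/2$, every $y_{j,n_j}=0$, so $Y_k/k\to 0$. By Step 1 the posterior mass of $\mathcal N_1$ then tends to $0$, contradicting the hypothesis, which was assumed to hold for \emph{any} $\{n_j\}$.

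\textbf{Main obstacle: the hypothesis $c_j\to 0$.} Step 3 only works if $c_j\to 0$, or at least if $c_j$ eventually drops below the oscillation level $\varepsilon/2$. Without that, the converse fails: for $X_i=(-1)^i$ and $c_j\equiv 1$, every block sum has modulus at most $1$, yet the series diverges. I would therefore make explicit the standing assumption from the construction of $\{c_j\}$ that $c_j\downarrow 0$. Under that assumption the argument above goes through, and the remaining work is the routine verification of the Beta concentration in Step 1.
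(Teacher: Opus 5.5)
The paper gives no proof of this theorem; it only attributes the result to \citet{roy2020bayesian2}, so there is no in-paper argument to compare yours against. Your argument is correct and complete for the statement with the added hypothesis \(c_j \to 0\).

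Your caveat is not a mere technicality. As printed, with \(\{c_j\}\) only required to be non-negative and non-increasing, the ``if'' direction is false. Your example \(X_i=(-1)^i\) with \(c_j\equiv 1\) is a valid counterexample, and \(c_j = 1+1/j\) handles a reading that demands strict monotonicity. The paper's own adaptive bound \(c_j=\hat C_j/\log(j+1)\) is clearly built to decay, so the corrected statement is the intended one.

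Two features of your proof are exactly what the quantifiers require.
\begin{itemize}
\item In Step 2, the tail-oscillation bound \(c_j(\omega)=\sup_{m>n\ge j-1}\bigl|\sum_{i=n+1}^{m}X_i(\omega)\bigr|\) is fixed before the block sizes are chosen and works uniformly over all of them. This is what ``there exists \(\{c_j\}\) such that for any \(\{n_j\}\)'' demands. Because it also tends to \(0\), it meets the strengthened hypothesis.
\item In Step 3, you arrange for \emph{every} block to fail eventually, not just infinitely many. This is genuinely needed. Failures on an index set of density zero would still give \(Y_k/k\to 1\), so the posterior mean \((A_k+Y_k)/(k+2A_k)\) would tend to \(1\) and there would be no contradiction. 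Splitting the non-Cauchy increment at \(N\) gives a bad block starting at every position, and that is what makes the recursive adversarial choice possible.
\end{itemize}
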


\begin{theorem}[Divergence Characterisation]
For any \(\omega\) in the sample space (excluding a null set), \(S_{1,\infty}(\omega)\) diverges if and only if there exists a sequence \(\{n_j(\omega)\}_{j=1}^\infty\) such that \(\pi(\mathcal{N}_0 | y_{k,n_k}(\omega)) \to 1\) as \(k \to \infty\), where \(\mathcal{N}_0\) is any neighbourhood of 0.
\end{theorem}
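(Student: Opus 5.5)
The plan is to reduce the statement to a statement about the Ces\`aro averages $\bar y_k=k^{-1}\sum_{j=1}^k y_{j,n_j}$, and then to handle the two implications separately.

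\textbf{Reduction to Ces\`aro averages.} By \eqref{eq:beta_posterior} with $\alpha_j=\beta_j=1/j^2$, the posterior of $p_{k,n_k}$ is a Beta distribution on $[0,1]$. Its mean is $E_k=(\sum_{j\le k}j^{-2}+\sum_{j\le k}y_{j,n_j})/(k+2\sum_{j\le k}j^{-2})$, and its variance is at most $1/(1+k)$. Since $\sum_j j^{-2}<\pi^2/6$, we have $E_k-\bar y_k\to 0$. Suppose $\bar y_k\to0$. Then Chebyshev's inequality together with the vanishing variance gives $\pi(\mathcal N_0\mid y_{k,n_k})\to 1$ for every neighbourhood $\mathcal N_0$ of $0$. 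Conversely, suppose the posterior mass of every $\mathcal N_0=[0,\delta)$ tends to one. The support is bounded in $[0,1]$, so $E_k\le \delta+\pi([\delta,1]\mid\cdot)\to\delta$; letting $\delta\downarrow0$ gives $\bar y_k\to0$. Hence, for fixed $\omega$ outside the exceptional null set, ``$\pi(\mathcal N_0\mid y_{k,n_k}(\omega))\to1$'' is equivalent to ``$\bar y_k(\omega)\to 0$''. All further work is pathwise in $\omega$.

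\textbf{Divergence implies a suitable $\{n_j(\omega)\}$.} Assume $S_{1,\infty}(\omega)$ diverges. I take the thresholds to be as in the paper's construction: nonnegative, decreasing, and with $c_j\to0$. The partial sums $S_m(\omega)$ are not Cauchy, so there is an $\varepsilon>0$ such that for every $N$ there exist $m>n\ge N$ with $|S_m-S_n|>\varepsilon$. A block must start exactly where the previous one ended, and the triangle inequality handles this. Writing $S_m-S_n=(S_m-S_N)-(S_n-S_N)$, at least one of $|S_m-S_N|$, $|S_n-S_N|$ exceeds $\varepsilon/2$. So from any starting index $N$ there is a block length $n$ with $|S_{N+n}-S_N|>\varepsilon/2$. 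I will choose $n_1,n_2,\dots$ recursively this way, each block starting at $N=\sum_{k<j}n_k$. Then $|S_{j,n_j}(\omega)|>\varepsilon/2>c_j$ for all $j$ beyond some $j_0$, because $c_j\to0$. Hence $y_{j,n_j}=0$ eventually, $\bar y_k\to0$, and the reduction above finishes this direction.

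\textbf{A suitable $\{n_j(\omega)\}$ implies divergence.} Suppose some $\{n_j(\omega)\}$ gives $\pi(\mathcal N_0\mid y_{k,n_k}(\omega))\to1$, but the series converges. By the Convergence Characterisation there is a decreasing sequence $c_j(\omega)$ for which every choice of $\{n_j\}$ gives $\pi(\mathcal N_1\mid\cdot)\to1$; with those thresholds this is a contradiction, since the posterior cannot concentrate at both $0$ and $1$. The delicate point is that the thresholds must be the same in both theorems. If they are instead fixed in advance, I will argue directly. The Cauchy criterion gives $\sup_{n}|S_{j,n}|\to0$ as the block start index tends to infinity. The thresholds must then be tied to the tail behaviour, e.g.\ $c_j$ dominating the tail oscillation as in the construction behind the Convergence Characterisation. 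With that choice, $y_{j,n_j}=1$ eventually, so $\bar y_k\to1$.

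\textbf{Main obstacle.} I expect the main obstacle to be this coupling of the thresholds $c_j$ across the two directions. The divergence direction needs $c_j\to0$. The convergence direction needs $c_j$ large enough relative to the tail sums, and the paper's convention for $\{c_j\}$ must supply both properties.
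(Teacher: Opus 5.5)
Your reduction to the averages $\bar y_k$ and your proof that divergence yields a blocking with $y_{j,n_j}=0$ eventually are both correct. The gap is the converse. You leave it conditional on an unspecified convention for $\{c_j\}$, and your fallback ``direct'' argument just re-imposes the coupling it was meant to avoid. In fact no argument can work for a threshold sequence fixed in advance, because then the implication is false. Take any nonincreasing $c_j\downarrow0$ with $c_j>0$. The alternating series $X_j=2(-1)^jc_j$ converges by the Leibniz test. Yet with singleton blocks $n_j\equiv1$ you get $|S_{j,1}|=2c_j>c_j$, so $y_{j,1}\equiv0$ and the posterior concentrates at $0$. So no threshold sequence chosen independently of the series can supply both properties you list ($c_j\to0$ and domination of the tail oscillation). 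The thresholds must be allowed to depend on the series. The paper states this theorem without proof, quoting Roy and Bhattacharya. The natural argument is your first option, contradiction with the Convergence Characterisation, and it carries exactly the implicit coupling you flag.

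The missing step is to put the coupling into the quantifier. The version that is true reads as follows: $S_{1,\infty}(\omega)$ diverges if and only if, for \emph{every} non-negative nonincreasing $\{c_j\}$ with $c_j\to0$, there is a blocking $\{n_j(\omega)\}$ with $\pi(\mathcal N_0\mid y_{k,n_k}(\omega))\to1$.

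Your forward argument proves $\Rightarrow$ unchanged for each such sequence; the blocking you build does not even depend on $\{c_j\}$.

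For $\Leftarrow$, write $S_m=\sum_{i\le m}X_i$ with $S_0=0$, and suppose the series converges. Apply the hypothesis to $c_j^*=\sup_{m>n\ge j-1}|S_m-S_n|$, adding $1/j$ if strict decrease is demanded. This sequence is nonincreasing and tends to $0$ by the Cauchy criterion. Because every $n_k\ge1$, the $j$th block starts after index $\sum_{k<j}n_k\ge j-1$. Hence $|S_{j,n_j}|\le c_j^*$ for every $j$ and every blocking, so $y_{j,n_j}\equiv1$ and $\bar y_k\to1$. Your reduction then rules out $\pi(\mathcal N_0\mid\cdot)\to1$, a contradiction.

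This $c_j^*$ is exactly the sequence behind the Convergence Characterisation. It is now legitimately usable because the hypothesis ranges over all admissible thresholds. Since everything is pathwise, no exceptional null set is needed.
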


These theorems establish that the recursive Bayesian procedure provides a necessary and sufficient condition for convergence or divergence, making it a universal test for series convergence.

\subsection{Nonparametric Adaptive Bounds}

A key challenge in implementing this procedure is constructing appropriate bounds \(c_j\). \citet{roy2020bayesian2} propose an elegant nonparametric adaptive bound:

\begin{equation}
c_j = \hat{C}_j / \log(j+1), \quad \text{with} \quad \hat{C}_j = \hat{C}_{j-1} + 0.05 \cdot (2y_{j-1} - 1)
\label{eq:adaptive_bound}
\end{equation}

This simple rule adjusts the bound based on previous outcomes: if convergence was indicated at the previous stage (\(y_{j-1}=1\)), the bound increases slightly, favouring continued convergence; if divergence was indicated (\(y_{j-1}=0\)), the bound decreases, favouring divergence. The \(\log(j+1)\) denominator ensures that the bound decays to zero at a rate that is neither too fast nor too slow.

\subsection{Application to Climate Change}

\citet{roy2020bayesian2} apply this framework to assess long-term climate dynamics using two datasets: the HadCRUT4 global temperature record (1850-2016) and Holocene temperature reconstructions spanning 12,000 years. They transform the temperature data as:

\begin{equation*}
Y_{\theta_0,t} = \log(\log(X_t)) - \log(\log(\theta_0))
\end{equation*}

for \(\theta_0\) in a grid over \([11^\circ\mathrm{C}, 16^\circ\mathrm{C}]\). For each \(\theta_0\), they apply the recursive Bayesian procedure and find that \(\sum_{t=1}^\infty Y_{\theta_0,t} = \infty\) for all \(\theta_0\) in the grid.

This leads to a striking conclusion under their model: global climate dynamics are subject only to temporary variations, and neither prolonged global warming nor global cooling is likely in the foreseeable future. The current global warming phenomenon, they argue, is just an instance of such temporary variation. However, it must be emphasised that this is a statistical inference based on a specific model; it does not account for the full complexity of the climate system and should be interpreted alongside the vast body of climate science \citep{ipcc2021}.

\subsection{Application to the Riemann Hypothesis}

Perhaps most remarkably, \citet{roy2020bayesian} apply this framework to investigate the Riemann Hypothesis, one of the most famous unsolved problems in mathematics. The Riemann Hypothesis is equivalent to the convergence of the Dirichlet series for the Möbius function:

\begin{equation*}
M(a) = \sum_{n=1}^\infty \frac{\mu(n)}{n^a} = \frac{1}{\zeta(a)}
\end{equation*}

for \(\mathrm{Re}(a) > 1/2\), where \(\mu(n)\) is the Möbius function and \(\zeta(a)\) is the Riemann zeta function.

Using an efficient algorithm to compute the first \(10^9\) values of the Möbius function, \citet{roy2020bayesian} apply their recursive Bayesian procedure with \(n_j = 10^6\) terms per stage and \(K = 1000\) stages. The results show that \(M(a)\) diverges for \(a < 0.72\) and converges for \(a \geq 0.72\). Since the Riemann Hypothesis requires convergence for all \(a > 1/2\), this finding does not support the conjecture. The authors are careful to note that this is not a mathematical proof, but rather a Bayesian statistical analysis that casts doubt on the hypothesis.

Both the convergence assessment method and the multiple limit points extension yield remarkably consistent results, showing that the Riemann Hypothesis is not supported by this Bayesian analysis.

\subsection{Connections to Online Learning}

This work exemplifies the Bayesian reflex in several fundamental ways. The posterior is updated recursively as each new block of terms is processed, with the posterior at stage \(k\) essentially becoming the prior for stage \(k+1\). The posterior \(\pi(p_{k,n_k} | y_{k,n_k})\) represents the current belief about convergence probability, which evolves as more data arrives. The nonparametric bound \(\hat{C}_j\) adapts based on previous outcomes, analogous to adaptive learning rates in online optimisation. The method handles data in a streaming fashion, processing blocks of \(n_j\) terms sequentially. The almost sure convergence theorems provide consistency guarantees analogous to those sought in online learning theory. The recursive updates are computationally trivial (simple counting), enabling analysis of massive datasets such as \(10^9\) terms for the Riemann Hypothesis. The method works for arbitrary dependence structures among the terms, a significant advantage over classical approaches like Kolmogorov's three series theorem that require independence.

\subsection{Significance for the Bayesian Reflex}

This work demonstrates that the Bayesian reflex operates at the most fundamental level of mathematical analysis—determining whether an infinite series converges. The recursive updating of beliefs about convergence, the adaptive construction of bounds based on previous outcomes, and the theoretical convergence guarantees all embody the core principles of the Bayesian reflex. The applications to climate change and the Riemann Hypothesis show that these ideas have profound implications far beyond traditional online learning domains. In climate science, they provide a novel statistical perspective on long-term dynamics, suggesting that current warming may be temporary under this model. In mathematics, they offer new insights into one of the most famous unsolved problems, though not a definitive proof.

\section{Recursive Bayesian Analysis of Prime Numbers}

Perhaps the most astonishing application of recursive Bayesian methodology appears in the work of \citet{roy2025prime} on modelling the distribution of prime numbers. This work demonstrates that the Bayesian reflex extends to the purest realm of mathematics, leading to both theoretical insights about the Riemann Hypothesis and practical discoveries of new prime numbers.

\subsection{The Prime Number Theorem and Non-Homogeneous Poisson Processes}

The Prime Number Theorem (PNT) states that the number of primes less than or equal to \(x\), denoted \(\phi(x)\), satisfies \(\phi(x) \sim x/\log x\) as \(x \to \infty\). This asymptotic result suggests that primes become increasingly rare as numbers grow larger, and naturally leads to a probabilistic interpretation. \citet{roy2025prime} exploit this insight by modelling prime counts as a non-homogeneous Poisson process (NHPP) with intensity function:

\begin{equation}
\lambda_{\alpha,\beta}(t) = \alpha \, li(t) + \beta \, f(t)
\label{eq:intensity}
\end{equation}

where \(li(t) = 1/\log t\) is the intensity suggested by the PNT, and \(f(t)\) represents an error term related to the Riemann Hypothesis. For the error term associated with RH, they take \(f(x) = \frac{1}{\sqrt{x}}(\log\sqrt{x} + 1)\), which corresponds to the integrated error \(\int_2^x f(u)du = \sqrt{x}\log x\).

\subsection{Recursive Bayesian Updating with Prime Numbers}

Given primes \(t_1, t_2, \ldots, t_k\) observed sequentially, the recursive Bayesian updates take a remarkably elegant form. Let:

\begin{align*}
B_1^{(0)} &= \int_2^{t_1} li(u)du, \quad B_1^{(k)} = \int_{t_k}^{t_{k+1}} li(u)du \\
B_2^{(0)} &= \int_2^{t_1} f(u)du, \quad B_2^{(k)} = \int_{t_k}^{t_{k+1}} f(u)du
\end{align*}

With prior \(\pi(\alpha_1, \beta_1) \propto \exp(-a\alpha_1)\alpha_1^{\gamma-1} \times \exp(-b\beta_1)\beta_1^{\xi-1}\), the recursive posterior at stage \(k\) becomes a mixture of gamma distributions of the form:

\begin{align*}
\pi(\alpha_k, \beta_k | Z_k = t_k) =& \, R_k C_1^{(k-1)} \times \frac{\Gamma(\gamma+k)}{(a+\sum_{i=0}^{k-1}B_1^{(i)})^{\gamma+k}} \times \frac{\Gamma(\xi+k-1)}{(b+\sum_{i=0}^{k-1}B_2^{(i)})^{\xi+k-1}} \\
& \times g(\alpha_k; a+\sum_{i=0}^{k-1}B_1^{(i)}, \gamma+k) \times g(\beta_k; b+\sum_{i=0}^{k-1}B_2^{(i)}, \xi+k-1) \\
& + R_k C_2^{(k-1)} \times \frac{\Gamma(\gamma+k-1)}{(a+\sum_{i=0}^{k-1}B_1^{(i)})^{\gamma+k-1}} \times \frac{\Gamma(\xi+k)}{(b+\sum_{i=0}^{k-1}B_2^{(i)})^{\xi+k}} \\
& \times g(\alpha_k; a+\sum_{i=0}^{k-1}B_1^{(i)}, \gamma+k-1) \times g(\beta_k; b+\sum_{i=0}^{k-1}B_2^{(i)}, \xi+k)
\end{align*}

This recursive structure means that the posterior at stage \(k\) becomes the prior for stage \(k+1\), exactly mirroring the Bayesian reflex principle. Crucially, this recursive formulation contains only two terms, whereas a non-recursive (batch) approach would require \(2^k\) terms—making analysis of millions of primes computationally infeasible without recursion.

\subsection{Asymptotic Convergence and Implications for the Riemann Hypothesis}

The recursive posteriors yield asymptotic results about the parameters \(\alpha_k\) and \(\beta_k\). For \(\alpha_k\), which corresponds to the PNT intensity, they prove:

\begin{theorem}
Under the marginal posterior probability measures \(\pi(\alpha_k|Z_k = t_k); k \ge 1\),
\begin{equation}
\alpha_k \xrightarrow{P} 1, \quad \text{as } k \to \infty
\label{eq:alpha_convergence}
\end{equation}
\end{theorem}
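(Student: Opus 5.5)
The plan is to work directly with the explicit two-component gamma mixture for the recursive posterior displayed above. First, integrate out \(\beta_k\) to get the marginal posterior of \(\alpha_k\). Each \(\beta_k\)-factor \(g(\beta_k;\cdot,\cdot)\) is a gamma density and integrates to one, so the marginal is
\[
\pi(\alpha_k\mid Z_k=t_k)=w_k\, g\Bigl(\alpha_k;\, A_k,\, \gamma+k\Bigr)+(1-w_k)\, g\Bigl(\alpha_k;\, A_k,\, \gamma+k-1\Bigr),
\qquad A_k=a+\sum_{i=0}^{k-1}B_1^{(i)} .
\]
Here \(w_k\in[0,1]\) collects \(R_kC_1^{(k-1)}\) and the gamma-function ratios. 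The constant \(R_k\) is exactly the normalizer making the mixture a probability density, and all the terms are nonnegative, so the two weights sum to one.

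Next, I will identify \(A_k\). The intervals telescope, so
\[
\sum_{i=0}^{k-1}B_1^{(i)}=\int_2^{t_k}\frac{du}{\log u}=\mathrm{Li}(t_k)-\mathrm{Li}(2),
\]
and hence \(A_k=a+\mathrm{Li}(t_k)-\mathrm{Li}(2)\). Since \(t_k\) is the \(k\)-th prime, \(\phi(t_k)=k\). The Prime Number Theorem in the form \(\phi(x)\sim\int_2^x du/\log u\) then gives \(k/A_k\to1\) and \(A_k\to\infty\).

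For each component, with shape \(s\in\{\gamma+k,\gamma+k-1\}\) and rate \(A_k\), the mean is \(s/A_k\to 1\) and the variance is \(s/A_k^2\sim 1/A_k\to0\). By Chebyshev's inequality, for any \(\varepsilon>0\) and all large \(k\), each component assigns mass at most \(4s/(A_k^2\varepsilon^2)\) to \(\{|\alpha_k-1|>\varepsilon\}\). This holds once \(|s/A_k-1|<\varepsilon/2\). Because the mixture weights are convex, the mixture's mass on this set is bounded by the larger of the two component bounds, which is \(O(1/k)\to0\). That is exactly \(\alpha_k\xrightarrow{P}1\) under the posteriors.

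I expect the main obstacle to be the bookkeeping of the mixture rather than the limit itself. I need to confirm that, after the recursion, the \(\alpha\)-components really share the single rate \(A_k\) with shapes \(\gamma+k\) and \(\gamma+k-1\). I also need that the weights do not depend on \(\alpha_k\), so that no component with anomalous shape carries non-negligible weight. I would check this by induction on \(k\), using the recursion's structure: at each prime either the \(li\)-part or the \(f\)-part of the intensity accounts for the event, so exactly one shape increments, and the collapsing to two terms preserves this. The only number-theoretic input is the PNT with the logarithmic integral, which the paper takes as given.
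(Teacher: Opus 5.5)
Your proof is correct, and it follows essentially the route behind this result, which the chapter only quotes from \citet{roy2025prime} without a proof: integrating out $\beta_k$ leaves two gamma components with common rate $A_k=a+\int_2^{t_k}du/\log u$ and shapes $\gamma+k$, $\gamma+k-1$; the Prime Number Theorem gives $k/A_k\to 1$; so both components have mean tending to $1$ and variance $O(1/k)$, and this shape-over-rate computation (with rate $F(t_k)$ and $t_k\sim k\log k$) is exactly what yields $E(\beta_k\mid Z_k=t_k)\sim\sqrt{k}/(\log k)^{3/2}$ in the companion theorem. The induction you propose is unnecessary because the mixture is displayed explicitly, but your heuristic for it is off: in the display the $\alpha$- and $\beta$-shapes both grow like $k$ (each component's shapes sum to $\gamma+\xi+2k-1$), whereas a cumulative one-increment-per-prime count gives shapes summing to $\gamma+\xi+k$, which is the $2^k$-term batch posterior with $\alpha$-shapes spread over $\gamma+j$, $0\le j\le k$, where concentration at $1$ would genuinely depend on the weights.
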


This provides a Bayesian validation of the Prime Number Theorem. For \(\beta_k\), which corresponds to the error term associated with the Riemann Hypothesis, they prove:

\begin{theorem}
Under the marginal posterior probability measures \(\pi(\beta_k|Z_k = t_k); k \ge 1\),
\begin{equation}
\beta_k - E(\beta_k|Z_k = t_k) \xrightarrow{P} 0, \quad \text{as } k \to \infty
\label{eq:beta_convergence}
\end{equation}
where \(E(\beta_k|Z_k = t_k) \sim \frac{\sqrt{k}}{(\log k)^{3/2}} \to \infty\).
\end{theorem}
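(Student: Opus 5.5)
The marginal posterior of $\beta_k$ is read off the two-term gamma mixture displayed above. Integrating out $\alpha_k$ leaves
\[
\pi(\beta_k\mid Z_k=t_k)=w_k\, g\Bigl(\beta_k;\, b_k,\ \xi+k-1\Bigr)+(1-w_k)\, g\Bigl(\beta_k;\, b_k,\ \xi+k\Bigr),
\]
where $b_k=b+\sum_{i=0}^{k-1}B_2^{(i)}$ and $w_k,1-w_k$ are the normalised mixture weights. These weights are built from $R_kC_1^{(k-1)}$, $R_kC_2^{(k-1)}$ and the accompanying gamma-function ratios. Both components are gamma distributions with the same rate $b_k$ and shapes differing by one. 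The plan is a mean–variance computation followed by Chebyshev, plus an asymptotic evaluation of $b_k$.

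The mixture mean and variance are explicit. The mean is
\[
E(\beta_k\mid Z_k=t_k)=\frac{\xi+k-1+(1-w_k)}{b_k}.
\]
By the law of total variance,
\[
\Var(\beta_k\mid Z_k=t_k)=\frac{\xi+k-1+(1-w_k)}{b_k^2}+\frac{w_k(1-w_k)}{b_k^2}\le\frac{\xi+k+1}{b_k^2}.
\]
Chebyshev's inequality then gives $\beta_k-E(\beta_k\mid Z_k=t_k)\xrightarrow{P}0$ as soon as $k/b_k^2\to0$, that is, $b_k\gg\sqrt{k}$. No control of $w_k\in[0,1]$ is needed, since it enters only through bounded terms.

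The main step is the asymptotics of $b_k$. Since the intervals telescope,
\[
\sum_{i=0}^{k-1}B_2^{(i)}=\int_2^{t_k}f(u)\,du=\sqrt{t_k}\log t_k+O(1),
\]
using the stated antiderivative, which is exact up to the constant at $2$. Here $t_k$ is the $k$-th prime, and by the Prime Number Theorem $t_k\sim k\log k$. Hence
\[
b_k\sim\sqrt{k\log k}\,\log(k\log k)\sim\sqrt{k}\,(\log k)^{3/2}.
\]
This gives $k/b_k^2\sim 1/(\log k)^3\to0$, which proves the convergence in probability. It also gives
\[
E(\beta_k\mid Z_k=t_k)\sim\frac{k}{\sqrt{k}\,(\log k)^{3/2}}=\frac{\sqrt{k}}{(\log k)^{3/2}}\to\infty,
\]
which is the stated rate.

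The expected obstacle is justifying the form of the marginal: that after integrating out $\alpha_k$ the mixture reduces to the two gamma components above with nonnegative weights summing to one. I would verify this by induction on the recursion, exactly as for the displayed stage-$k$ posterior, noting that the $\alpha$-factors integrate to one. Only the common rate $b_k$ and the shapes $\xi+k-1,\ \xi+k$ matter afterwards. The rest is a routine use of $t_k\sim k\log k$.
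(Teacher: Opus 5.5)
Your proof is correct. The chapter does not prove this theorem; it only quotes it from \citet{roy2025prime}, so there is no in-text argument to compare yours against. Given the displayed stage-$k$ posterior, your route is complete: the exact mean and variance of a two-component gamma mixture with common rate, then Chebyshev, then the telescoped rate $b_k=b+\sqrt{t_k}\log t_k-\sqrt{2}\log 2$ combined with $t_k\sim k\log k$. It also shows two things the statement leaves implicit. The conclusion does not depend on the mixture weights. The posterior variance decays at the explicit rate $O\bigl((\log k)^{-3}\bigr)$, while the mean grows like $\sqrt{k}/(\log k)^{3/2}$.

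The step you flag as the main obstacle needs no induction. The displayed posterior is already a two-term combination of products of normalised gamma densities, so integrating out $\alpha_k$ simply deletes the factors $g(\alpha_k;\cdot)$. The two coefficients sum to one because $R_k$ normalises the joint density. They are nonnegative because they arise from the two positive summands $\alpha_k\,li(t_k)$ and $\beta_k f(t_k)$ of the intensity evaluated at $t_k$, as the shape pattern $(\gamma+k,\xi+k-1)$ versus $(\gamma+k-1,\xi+k)$ indicates. Finally, it does not matter whether $t_k$ is the $k$-th or the $(k+1)$-th prime; the convention $B_1^{(0)}=\int_2^{t_1}li(u)\,du$ suggests $t_1>2$. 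Either way $t_k\sim k\log k$.
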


Since \(\beta_k \to \infty\) in probability, there cannot exist any finite constant bounding \(\beta_k\). This implies that, under the assumed NHPP model, the error term associated with RH is too small—the true error must be larger than \(O(\sqrt{x}\log x)\). Therefore, this Bayesian analysis suggests that the Riemann Hypothesis may be false, providing a second, independent statistical argument using completely different methodology from the series convergence approach. The authors stress that this is not a mathematical proof, but rather strong Bayesian evidence.

\subsection{Asymptotic Equivalence of Recursive and Non-Recursive Inference}

A remarkable theoretical result is the asymptotic equivalence between the computationally efficient recursive Bayesian approach and the theoretically elegant but practically infeasible non-recursive approach. \citet{roy2025prime} prove that the recursive posteriors yield the same asymptotic inference as their non-recursive counterparts, while requiring vastly less computation:

\begin{theorem}
The recursive posterior distributions of the parameters and the recursive posterior predictive distributions asymptotically yield the same Bayesian inference as the corresponding non-recursive Bayesian theory.
\end{theorem}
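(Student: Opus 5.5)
I plan to show that the recursive and non-recursive posteriors have the same asymptotic behaviour: both concentrate at the same limits, and the differences between their means and scales vanish relative to their spreads.

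\emph{Step 1: write both posteriors explicitly.} Under the NHPP model with intensity \eqref{eq:intensity}, the likelihood of the observed primes \(t_1,\ldots,t_k\) is
\[
\prod_{i=1}^k\bigl(\alpha\, li(t_i)+\beta f(t_i)\bigr)\exp\Bigl(-\alpha\sum_{i=0}^{k-1}B_1^{(i)}-\beta\sum_{i=0}^{k-1}B_2^{(i)}\Bigr).
\]
Multiply this by the product gamma prior. Expanding the product gives the non-recursive posterior as a mixture of \(2^k\) products of gamma densities. The component indexed by \(m\), the number of factors drawn from \(\alpha\, li\), has shapes \((\gamma+m,\ \xi+k-m)\) and common rates
\[
a_k=a+\sum_{i<k} B_1^{(i)},\qquad b_k=b+\sum_{i<k} B_2^{(i)}.
\]
The weights are proportional to a symmetric-type sum over subsets of products of \(li(t_i)\) and \(f(t_i)\), times the gamma normalising constants.

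The recursive posterior displayed above has exactly the same rates \(a_k,b_k\). Its only two shapes are \((\gamma+k,\xi+k-1)\) and \((\gamma+k-1,\xi+k)\), with weights \(R_kC_1^{(k-1)},R_kC_2^{(k-1)}\).

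\emph{Step 2: asymptotics of the rates.} By the prime number theorem,
\[
\sum_{i<k}B_1^{(i)}=\int_2^{t_k}li\sim k,\qquad \sum_{i<k}B_2^{(i)}=\int_2^{t_k}f=\sqrt{t_k}\log t_k\sim \sqrt{k\log k}\,\log k,
\]
since \(t_k\sim k\log k\). Hence \(a_k\sim k\) and \(b_k\to\infty\) at a known rate.

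\emph{Step 3: limits of the recursive posterior.} For the recursive posterior, both components have \(\alpha\)-mean \((\gamma+k+O(1))/a_k\to1\) and variance \(O(1/k)\). This recovers \eqref{eq:alpha_convergence}, and similarly \eqref{eq:beta_convergence} with scale \(\sqrt{k}/b_k\).

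\emph{Step 4: the non-recursive posterior gives the same limits.} This is the main obstacle, because the \(2^k\) mixture weights are not explicit. The plan is to show that the non-recursive posterior is a mixture of gamma products whose means also converge to the same limits, and whose variances all vanish at the same rates.

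Write \(m\) for the random number of \(li\)-factors under the mixture weights. The \(\alpha\)-marginal is then \(\mathrm{Gamma}(\gamma+m,a_k)\) averaged over the posterior law of \(m\). It suffices to show that \(m/k\to1\) and \((k-m)\) is small compared with \(b_k\sqrt{\cdot}\) in the appropriate sense, i.e.\ that the law of \(m\) concentrates near the value used in the recursive shapes.

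I would attempt this by a Laplace/saddle-point argument. The weight of \(m\) is dominated by the maximiser of
\[
\log\Gamma(\gamma+m)-(\gamma+m)\log a_k+\log\Gamma(\xi+k-m)-(\xi+k-m)\log b_k+\log e_m,
\]
where \(e_m\) is the elementary symmetric function of the ratios \(li(t_i)/f(t_i)\). Since \(li(t)/f(t)\sim\sqrt t/\log^2 t\to\infty\), these ratios blow up and \(m\) is pushed to \(k-O(\text{small})\). The same computation, tracking \(k-m\) jointly with \(b_k\), should give a \(\beta\)-mean of the same order \(\sqrt{k}/(\log k)^{3/2}\).

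\emph{Step 5: conclude.} Once both posterior families concentrate at the same limits with the same first-order means and vanishing relative variances, the total variation or Hellinger distance between them is negligible at the scale of the concentration. It follows that credible sets, point estimates, and the posterior predictives obtained by integrating the NHPP increment over \((t_k,t_{k+1}]\) agree asymptotically, which is the claimed equivalence.
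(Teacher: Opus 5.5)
Your Steps 1--3 are fine. The problem is Step 4: under the reading of the non-recursive theory that you adopt, its conclusion is false, so the argument cannot be completed.

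\textbf{Why Step 4 fails.} You take the non-recursive object to be the full-data posterior of a single pair $(\alpha,\beta)$ under the fixed prior $\mathrm{Gamma}(\gamma,a)\otimes\mathrm{Gamma}(\xi,b)$. That is the mixture over $m$ of $\mathrm{Gamma}(\gamma+m,a_k)\otimes\mathrm{Gamma}(\xi+k-m,b_k)$.

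The recursive components are not members of this family. The shape pair $(\gamma+k,\ \xi+k-1)$ would need $m=k$ and $m=1$ at the same time. Put differently, the recursive shapes sum to $\gamma+\xi+2k-1$, as if each past prime were credited to both $\alpha$ and $\beta$. The non-recursive shapes sum to only $\gamma+\xi+k$.

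Since the rates coincide, averaging the gamma means over either mixture gives two exact identities:
\[
a_k\,\mathbb{E}_{\mathrm{nr}}[\alpha]+b_k\,\mathbb{E}_{\mathrm{nr}}[\beta]=\gamma+\xi+k,
\qquad
a_k\,\mathbb{E}_{\mathrm{r}}[\alpha_k]+b_k\,\mathbb{E}_{\mathrm{r}}[\beta_k]=\gamma+\xi+2k-1 .
\]
Because $a_k\sim k$, there are only two possibilities:
\begin{itemize}
\item If both $\alpha$-means tend to $1$, then $b_k\mathbb{E}_{\mathrm{nr}}[\beta]=o(k)$ while $b_k\mathbb{E}_{\mathrm{r}}[\beta_k]\sim k$. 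So the non-recursive $\beta$-mean is $o\bigl(\sqrt{k}/(\log k)^{3/2}\bigr)$, not of that order.
\item If they do not, the $\alpha$-inferences already disagree.
\end{itemize}
Your own heuristic says the same thing. If $m$ is pushed to $k-O(\text{small})$, then $\mathbb{E}_{\mathrm{nr}}[\beta]=\mathbb{E}[\xi+k-m]/b_k\to 0$. This contradicts your next sentence, which expects a $\beta$-mean of order $\sqrt{k}/(\log k)^{3/2}$. No saddle-point refinement can repair this.

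\textbf{The predictive claim in Step 5 fails too.} At the recursive scale,
\[
\mathbb{E}_{\mathrm{r}}[\beta_k]\,f(t_k)\approx k f(t_k)/b_k\approx k/(2t_k)\approx 1/(2\log k),
\]
which is half of $li(t_k)\approx 1/\log k$. So the recursive predictive hazard just beyond $t_k$ is about $\tfrac{3}{2}/\log k$. The first identity, together with $\mathbb{E}_{\mathrm{nr}}[\alpha]\le(\gamma+\xi+k)/a_k$, caps the non-recursive hazard at about $1/\log k$. The two predicted gap laws therefore differ by a constant factor.

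Separately, Step 5's inference from ``same first-order means and vanishing relative variance'' to small total variation is invalid. You would need the mean gaps to be $o$ of the posterior standard deviations, for example $k-\mathbb{E}[m]=o(\sqrt{k})$ for $\alpha$.

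\textbf{What is needed.} The chapter gives no proof of this theorem; it only cites \citet{roy2025prime}. To prove the statement you must first identify which ``non-recursive theory'' is meant there. Under the plain full-likelihood, fixed-prior reading you chose, only the $\alpha\to 1$ part of the equivalence survives. The $\beta$ and predictive parts are false.
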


This result is crucial because it justifies using the recursive formulation for practical prime hunting while maintaining theoretical guarantees. The non-recursive posteriors would contain \(2^k\) terms—with \(k\) potentially in the millions—making them impossible to compute, while the recursive versions contain only 2 terms for the parameter posteriors and 4 terms for the predictive distributions.

\subsection{Posterior Predictive Distributions for Prime Discovery}

For discovering new primes, the recursive posterior predictive distribution for the next prime \(Z_{k+1}\) given the first \(k\) primes is a mixture of four terms. For large \(k\), this simplifies to a form amenable to MCMC sampling:

\begin{equation}
h(t) \propto (li(t) + f(t)) \left(\frac{t F(t)}{\log t}\right)^{-k}
\label{eq:posterior_predictive}
\end{equation}

where \(F(t) = \int_2^t f(u)du\). For the sharp error bound from \cite{mossinghoff2015nonnegative}, they use:

\begin{equation}
F(x) = \frac{x}{(\log x)^{3/4}} \exp\left(-\sqrt{\frac{\log x}{6.315}}\right)
\label{eq:error_bound}
\end{equation}

\subsection{Discovering Mersenne Prime Candidatee via Change of Variable}

A particularly elegant contribution is the transformation to sample candidate Mersenne prime exponents directly. Setting \(s = 2^t - 1\) and working on the log scale with \(t \mapsto \exp(z)\), they derive:

\begin{align*}
\log[h_1^*(z)] =& \, \log[li(\exp(z) + p_0)] \\
& - k[\log(\exp(z) + p_0) + \log\{F(\exp(z) + p_0)\} - \log\log(\exp(z) + p_0)] \\
& - \exp(z)\log 2 + z
\end{align*}

This formulation completely avoids evaluating astronomical numbers like \(2^t - 1\), making the search for Mersenne primes computationally feasible on ordinary hardware.

\begin{figure}[htbp]
\centering
\begin{tikzpicture}[scale=0.9, transform shape]
    % Prime number sequence
    \draw[thick] (0,0) -- (12,0);
    
    % Known primes
    \foreach \x/\p in {1/$p_1$, 2/$p_2$, 3/$p_3$, 4/$p_4$, 5/$p_5$, 6/$p_6$} {
        \filldraw[blue] (\x,0) circle (3pt) node[above] {\p};
    }
    
    % Future primes
    \foreach \x in {8,9,10,11,12} {
        \filldraw[red, opacity=0.5] (\x,0) circle (3pt);
    }
    \node at (10,0.6) {Unknown primes $p_{k+1},\ldots$};
    
    % NHPP intensity
    \draw[thick, domain=0:12, samples=100, smooth, variable=\x, red] 
        plot ({\x}, {2/(\x+1)});
    \node at (6,1.5) {Intensity $\lambda_{\alpha,\beta}(t) = \alpha\,li(t) + \beta\,f(t)$};
    
    % Recursive updates
    \node (recursive) at (6,-1.5) [rectangle, draw, fill=green!20, text width=9cm, minimum height=1.2cm] {
        \textbf{Recursive Posterior:} Mixture of Gamma distributions \\
        Batch: $2^k$ terms (infeasible) $\rightarrow$ Recursive: 2 terms (tractable)
    };
    
    % Predictive distribution for new primes
    \node (predictive) at (6,-3) [rectangle, draw, fill=yellow!20, text width=9cm, minimum height=1.2cm] {
        $\pi(Z_{k+1} = t_{k+1} | Z_k = t_k) \propto (li(t) + f(t)) \left(\frac{t F(t)}{\log t}\right)^{-k}$
    };
    
    % Mersenne prime transformation
    \node (mersenne) at (6,-4.5) [rectangle, draw, fill=orange!20, text width=9cm, minimum height=1.2cm] {
        \textbf{Mersenne Discovery:} $s = 2^t - 1$ \\
        Log transformation avoids evaluating astronomical numbers
    };
    
    % Results
    \node at (6,-6.2) [rectangle, draw, fill=red!20, text width=9cm, minimum height=1.2cm] {
        \textbf{Results:} 259 new primes $>$ 140 million \\
        184 strong Mersenne prime candidates (up to 242 million digits)
    };
\end{tikzpicture}
\caption{Recursive Bayesian framework for prime number analysis and discovery.}
\label{fig:primes}
\end{figure}

\subsection{Empirical Results}

Using TMCMC with a mixture of additive and multiplicative transformations, \citet{roy2025prime} generated \(10^7\) realisations in seconds on a standard laptop with 8GB RAM (Figure \ref{fig:primes}). Repeating the procedure with updated starting points, they discovered 259 new primes exceeding 140 million. Among these, 184 were identified as strong Mersenne prime exponent candidates, corresponding to potential Mersenne primes with digit lengths ranging from 42 million to 242 million. These candidates await verification by the mathematical community.

\subsection{Significance for the Bayesian Reflex}

This work demonstrates that the Bayesian reflex operates at the deepest level of mathematical inquiry. The recursive updating of beliefs about the parameters governing prime distribution, the sequential processing of primes as they are discovered, the use of posterior predictive distributions to guide exploration for new primes, and the theoretical convergence guarantees all embody the core principles of the Bayesian reflex. The practical outcome—discovery of 184 new potential Mersenne primes—shows that these principles have real-world utility beyond theoretical elegance.

\section{The Recursive Bayesian Framework for Stationarity Detection and Beyond}

The recursive Bayesian methodology introduced for series convergence has proven to be remarkably versatile, extending to a wide range of statistical problems involving the characterisation of stochastic processes. The work of \citet{roy2021stationarity} (see also \cite{Soma_thesis}) demonstrates how the same core principles—partitioning the index set, constructing binary indicators based on empirical distribution functions, and applying recursive Beta-Binomial updates—can be applied to detect stationarity in time series, assess MCMC convergence, test for spatial and spatio-temporal stationarity, characterise point processes, and determine frequencies in oscillatory signals. Classical approaches to these problems include the Dickey-Fuller test \citep{dickey1979distribution} for unit roots, the KPSS test \citep{kwiatkowski1992testing} for stationarity, and Ripley's K-function \citep{ripley1976second} for point processes.

\subsection{Theoretical Foundation: Local Stationarity}

A key theoretical insight underlying these extensions is that under mild regularity conditions, most stochastic processes are approximately locally stationary. As proved in \citet{roy2021stationarity}, for any stochastic process \(\mathbf{X} = \{X_s: s\in\mathcal{S}\}\) with appropriately differentiable finite-dimensional distributions and satisfying \(X_{s+h} = X_s + O_P(h)\) as \(h\to 0\), the process can be partitioned into disjoint regions \(\mathcal{N}_i\) where \(\{X_s: s\in\mathcal{N}_i\}\) is approximately stationary. This local stationarity property provides the foundation for constructing empirical distribution functions \(\hat{P}_i\) for each region.

Let \(\tilde{P}_K\) denote the pooled empirical distribution function based on all observations in \(\cup_{i=1}^K \mathcal{N}_i\). The key quantities for stationarity detection are the sup-norm differences:
\begin{equation*}
\sup_C |\hat{P}_i(C) - \tilde{P}_K(C)|
\end{equation*}
which converge to zero almost surely as the region sizes increase if and only if the process is stationary \citep{roy2021stationarity}. For nonstationary processes, these differences remain positive for infinitely many regions.

\subsection{Recursive Bayesian Procedure for Stationarity}

Following the same recursive Bayesian approach as in series convergence, we define binary indicators:
\begin{equation*}
Y_{j,n_j} = \mathbb{I}\left\{\sup_C |\hat{P}_j(C) - \tilde{P}_K(C)| \leq c_j\right\}
\end{equation*}
where \(c_j\) is a non-negative decreasing sequence. With prior \(p_{j,n_j} \sim \text{Beta}(\alpha_j, \beta_j)\) where \(\alpha_j = \beta_j = 1/j^2\), the recursive posterior at stage \(k\) becomes:
\begin{equation}
\pi(p_{k,n_k} | y_{k,n_k}) \sim \text{Beta}\left(\sum_{j=1}^k 1/j^2 + \sum_{j=1}^k y_{j,n_j}, \; k + \sum_{j=1}^k 1/j^2 - \sum_{j=1}^k y_{j,n_j}\right)
\label{eq:stationarity_posterior}
\end{equation}

Theorems 6 and 7 of \citet{roy2021stationarity} establish that the posterior converges to 1 in any neighbourhood of 1 if and only if the process is stationary, and converges to 0 in any neighbourhood of 0 if and only if the process is nonstationary. This provides a universal test for stationarity that requires no parametric assumptions about the underlying process.

\subsection{Extension to Covariance Stationarity}

The framework extends naturally to testing covariance stationarity by considering the empirical covariances:
\begin{equation*}
\widehat{Cov}_{ih} = \frac{\sum_{(s_1,s_2)\in\mathcal{N}_{ih}} X_{s_1}X_{s_2}}{2n_{ih}} - \left(\frac{\sum_{s_1\in\mathcal{N}_{ih}} X_{s_1}}{n_{ih}}\right)\left(\frac{\sum_{s_2\in\mathcal{N}_{ih}} X_{s_2}}{n_{ih}}\right)
\end{equation*}
where \(\mathcal{N}_{ih} = \{(s_1,s_2)\in\mathcal{N}_i: \|s_1-s_2\| = h\}\). Defining binary indicators based on differences between local and pooled covariances, the same recursive Beta-Binomial updates yield posterior convergence to 1 for covariance-stationary processes and to 0 for covariance-nonstationary processes \citep{roy2021stationarity}.

\begin{figure}[htbp]
\centering
\begin{tikzpicture}[scale=0.9, transform shape]
    % Process representation
    \draw[thick, domain=0:12, samples=100, smooth, variable=\x, blue] 
        plot ({\x}, {2*sin(\x r) + 0.5*rand});
    
    % Partition into regions
    \foreach \x in {2,4,6,8,10} {
        \draw[dashed] (\x,-1) -- (\x,3);
        \node at (\x+1,2.5) {$\mathcal{N}_i$};
    }
    
    % Empirical distributions
    \node (P1) at (1,-2) {$\hat{P}_1$};
    \node (P2) at (3,-2) {$\hat{P}_2$};
    \node (P3) at (5,-2) {$\hat{P}_3$};
    \node (P4) at (7,-2) {$\hat{P}_4$};
    \node (Pk) at (9,-2) {$\hat{P}_k$};
    \node (Ppool) at (11,-2) {$\tilde{P}_K$};
    
    % Binary indicators
    \node at (2,-3) {$Y_{1,n_1}$};
    \node at (4,-3) {$Y_{2,n_2}$};
    \node at (6,-3) {$Y_{3,n_3}$};
    \node at (8,-3) {$Y_{4,n_4}$};
    \node at (10,-3) {$Y_{k,n_k}$};
    
    % Recursive Beta updates
    \node (beta) at (6,-5) [rectangle, draw, fill=green!20, text width=10cm, minimum height=1.2cm] {
        $\pi(p_{k,n_k} | y_{k,n_k}) \sim \text{Beta}\left(\sum_{j=1}^k 1/j^2 + \sum_{j=1}^k y_j, \; k + \sum_{j=1}^k 1/j^2 - \sum_{j=1}^k y_j\right)$
    };
    
    % Applications
    \node at (0,-7) [text width=4cm, text centered] {ARMA Models};
    \node at (4,-7) [text width=4cm, text centered] {MCMC Diagnostics};
    \node at (8,-7) [text width=4cm, text centered] {Spatial Data};
    \node at (12,-7) [text width=4cm, text centered] {Point Processes};
    
    \draw[->] (beta) -- ++(-2,-1) -- (0,-6.5);
    \draw[->] (beta) -- ++(0,-1) -- (4,-6.5);
    \draw[->] (beta) -- ++(2,-1) -- (8,-6.5);
    \draw[->] (beta) -- ++(4,-1) -- (12,-6.5);
\end{tikzpicture}
\caption{Recursive Bayesian framework for stationarity detection across diverse stochastic processes.}
\label{fig:stationarity}
\end{figure}

\subsection{Applications to Time Series Models}

The methodology has been extensively validated on various time series models (Figure \ref{fig:stationarity}). For AR(1) models \(X_t = \rho X_{t-1} + \epsilon_t\), the recursive Bayesian procedure correctly detects stationarity for \(|\rho| < 1\) and nonstationarity for \(|\rho| \geq 1\), even in subtle cases where \(\rho\) is extremely close to 1 (e.g., \(\rho = 0.99999\)). For AR(2) models, the method correctly identifies stationarity when the parameters satisfy \(\alpha+\beta < 1\), \(\beta-\alpha < 1\), and \(\beta > -1\), even for slowly diverging processes where traditional tests may fail \citep{roy2021stationarity}.

For ARCH(1) and GARCH(1,1) models, the method correctly detects stationarity when the parameters satisfy the usual stationarity conditions (\(\alpha < 1\) for ARCH(1); \(\alpha+\beta < 1\) for GARCH(1,1)). Interestingly, for ARCH(1) models with \(\alpha \geq 1\), the method sometimes indicates stationarity due to the uncorrelated nature of the realisations, highlighting the importance of understanding the properties being tested \citep{roy2021stationarity}.

\subsection{MCMC Convergence Diagnostics}

A particularly valuable application is to MCMC convergence diagnostics. The TMCMC framework of \citet{dutta2014markov} provides an ideal testbed. For a target distribution that is a product of 100 standard normal densities, TMCMC chains with different scaling parameters \(\ell\) exhibit different convergence behaviours. The recursive Bayesian procedure correctly identifies chains as stationary when the scaling parameter is near the optimal value \(\ell = 2.4\), and as nonstationary when \(\ell\) is too small (e.g., \(\ell = 0.001, 0.01\)) or too large (e.g., \(\ell = 100, 1000\)), corresponding to chains that converge too slowly due to tiny steps or high rejection rates \citep{roy2021stationarity}.

For mixture normal targets, the method correctly identifies stationarity when the chain mixes well between modes (e.g., mixture of \(N(0,1)\) and \(N(10,1)\)), and identifies nonstationarity when the chain exhibits distinct regimes of local stationarity (e.g., mixture of \(N(0,1)\) and \(N(15,1)\) with insufficient mixing). This demonstrates the method's ability to detect lack of convergence even in complex multimodal settings.

\subsection{Spatial and Spatio-Temporal Applications}

The framework extends naturally to spatial data. For zero-mean Gaussian processes with stationary covariance \(Cov(X_{s_1}, X_{s_2}) = \exp(-5\|s_1-s_2\|^2)\) and nonstationary covariance \(Cov(X_{s_1}, X_{s_2}) = \exp(-5\|\sqrt{s_1}-\sqrt{s_2}\|^2)\), the method correctly distinguishes between strict stationarity and nonstationarity. For mixtures of stationary and nonstationary components with weights as extreme as \(p = 0.99999\) (i.e., 99.999\% stationary, 0.001\% nonstationary), the method still correctly detects nonstationarity, demonstrating remarkable sensitivity \citep{roy2021stationarity}.

For spatio-temporal processes with separable covariance structures combining spatial and AR(1) temporal components, the method correctly identifies stationarity when \(|\rho| < 1\) and the spatial component is stationary, and nonstationarity otherwise. Comparisons with existing tests \citep{bandopadhyay2017nonparametric} show that the Bayesian method often outperforms classical approaches, particularly for non-Gaussian processes and small sample sizes \citep{roy2021stationarity}.

\subsection{Point Process Characterisation}

Perhaps the most extensive application is to point process analysis. The recursive Bayesian framework provides a unified approach to testing complete spatial randomness (CSR) via comparison of empirical \(G\)-functions, Poisson process assumptions via characterisation of mutual independence among disjoint regions, and stationarity of point processes via the same empirical distribution approach used for general processes. The method has been validated on an extensive range of point process models, including homogeneous and inhomogeneous Poisson processes (HPP, IHPP), log-Gaussian Cox processes (LGCP) with various covariance structures, Matérn cluster processes with homogeneous and inhomogeneous parameters, Thomas processes (modified) with various inhomogeneity patterns, Neyman-Scott processes (both homogeneous and inhomogeneous), and Strauss processes (pairwise interaction point processes). In each case, the recursive Bayesian procedure correctly identifies CSR status, Poisson vs. non-Poisson nature, and stationarity vs. nonstationarity, often outperforming classical methods based on \(G\)-function envelopes \citep{roy2021stationarity}. For example, in inhomogeneous LGCP where the classical method fails to reject CSR, the Bayesian method correctly identifies non-CSR.

\subsection{Mutual Independence Characterisation}

A novel contribution is the characterisation of mutual independence among random variables using Dirichlet process priors. For a sequence of random variables \(X_1, X_2, \ldots, X_K\), define the differences:
\begin{equation*}
\zeta_i = \sup_{t_1,\ldots,t_i} |P(X_i \leq t_i | X_1 \leq t_1, \ldots, X_{i-1} \leq t_{i-1}) - P(X_i \leq t_i)|
\end{equation*}

Using a Dirichlet process prior for the joint distribution avoids numerical issues with empirical conditional distributions. The recursive Bayesian procedure based on these differences converges to 1 if and only if the variables are mutually independent \citep{roy2021stationarity}. This characterisation is essential for testing Poisson process assumptions, where mutual independence of disjoint regions is required.

\subsection{Frequency Determination in Oscillatory Processes}

Another noteworthy extension is to determining frequencies in oscillatory stochastic processes. For a process \(X_t\) with unknown frequencies, transform to \(Z_t = \exp(X_t)/(1+\exp(X_t)) \in [0,1]\). For a suitable power \(r > 0\), the oscillations in \(Z_t^r\) become explicit. Partitioning \([0,1]\) into intervals \((\tilde{p}_{m-1}, \tilde{p}_m]\) and defining:
\begin{equation*}
Y_j = m \text{ if } \tilde{p}_{m-1} < Z_j^r \leq \tilde{p}_m
\end{equation*}
the proportions \(p_{m,j}\) can be interpreted as frequencies. Using a Dirichlet process prior with base measure \(G(Y_j = m) = 1/M\) (for finite \(M\)) or \(G(Y_j = m) = 2^{-m}\) (for countably infinite \(M\)), the recursive posterior converges to the true frequencies \citep{roy2021stationarity}.

The method has been validated on simulated data with single frequency (\( \omega = 0.02\)) and multiple frequencies (0.4, 0.1, 0.06), correctly recovering the frequencies when the power parameter \(r\) and partition size \(M\) are appropriately chosen. For real data, the method successfully identifies the characteristic frequencies in the Southern Oscillation Index (SOI) and fish recruitment time series, with results consistent with periodogram analysis \citep{shumway2006time}.

\subsection{Computational Implementation}

All implementations are parallelised using MPI (Message Passing Interface) in C, with some R code for specific tasks. For point process applications, the `spatstat' package is used for data generation. The parallel implementation enables analysis of large datasets—for example, processing \(2\times 10^8\) AR(1) samples in minutes.

\subsection{Significance for the Bayesian Reflex}

This body of work demonstrates the remarkable versatility of the recursive Bayesian framework. The same core machinery—partitioning into locally stationary regions, constructing binary indicators based on empirical distributions, applying recursive Beta-Binomial updates—succeeds across a wide range of statistical problems: time series analysis (AR, ARCH, GARCH models), MCMC convergence diagnostics, spatial statistics (strict and covariance stationarity), spatio-temporal statistics, point process analysis (CSR, Poisson, stationarity for 17 different process types), and frequency analysis of oscillatory signals. This unification of seemingly disparate problems under a common methodological framework is a powerful testament to the unifying power of the Bayesian reflex. The recursive updating of beliefs about the underlying process, the adaptive construction of bounds based on previous outcomes, and the theoretical convergence guarantees all embody the core principles introduced in the context of series convergence and extended across the statistical sciences.

\section{Applications: The Bayesian Reflex in Action}

\subsection{Kilkari Program: Collaborative Bandit Formulation}

The Kilkari program optimisation problem can be formalised as follows \citep{dasgupta2025bayesian}. Let \(u \in \{1,\ldots,U\}\) index beneficiaries and \(t \in \{1,\ldots,T\}\) index time slots. Let \(Y_{u,t} \in \{0,1\}\) indicate successful call delivery. The reward matrix is modelled as \(Y_{u,t} \sim \text{Bernoulli}(\sigma(U_u^\top V_t))\), where \(\sigma(x) = 1/(1+e^{-x})\) is the logistic function, \(U_u \in \mathbb{R}^d\) is the user embedding, and \(V_t \in \mathbb{R}^d\) is the time slot embedding. Priors are placed on the embeddings: \(U_u \sim \mathcal{N}(0, \sigma_U^2 I_d)\) and \(V_t \sim \mathcal{N}(0, \sigma_V^2 I_d)\).

Posterior inference is performed using stochastic gradient Langevin dynamics \citep{welling2011bayesian}. Thompson sampling selects time slots with probability proportional to the posterior probability of success \citep{dasgupta2025bayesian}. The ellipsoidal decomposition framework could enhance this application by providing practically exact iid samples from the posterior over \((U_u, V_t)\), eliminating approximation error and potentially improving the convergence rate of the bandit algorithm.

\subsection{Sepsis Prediction: Online Conformal Prediction}

The Sepsyn-OLCP framework combines online learning with conformal prediction \citep{zhou2025sepsyn, vovk2005algorithmic}. Let \(f_t(x)\) be the prediction at time \(t\) for patient with features \(x\). Conformal prediction produces prediction sets with coverage guarantees \citep{vovk2005algorithmic}: \(C_t(x) = \{y: s(x,y) \leq q_t\}\), where \(s\) is a nonconformity score and \(q_t\) is the \((1-\alpha)\)-quantile of scores on a calibration set.

In the online setting, the calibration set grows over time. The quantile is updated as \citep{zhou2025sepsyn}: \(q_{t+1} = q_t + \gamma_t (\mathbf{1}(s_{t+1} \leq q_t) - (1-\alpha))\), where \(\gamma_t\) is a learning rate satisfying \(\sum \gamma_t = \infty\) and \(\sum \gamma_t^2 < \infty\).

\subsection{ICEE: In-Context Exploration-Exploitation}

The ICEE framework uses a transformer architecture for meta-RL \citep{dai2024context, garg2022what}. The model processes sequences of the form \(\tau = (s_1, a_1, r_1, s_2, a_2, r_2, \ldots, s_t)\) and outputs action probabilities \(\pi(a \mid \tau)\). The transformer architecture uses multi-head self-attention \citep{vaswani2017attention}. The training objective maximises the expected return across tasks \citep{dai2024context}. The transformer implements an implicit inference algorithm, and theoretical analysis shows that with sufficient capacity, the transformer can approximate the Bayes-optimal policy \citep{dai2024context, xie2021explanation}.

\subsection{Sequential Bayesian Optimisation}

The Gaussian derivative process approach of \citet{roy2021function} has been successfully applied to a wide range of optimisation problems, demonstrating the versatility of the Bayesian reflex. For one-dimensional examples like \(f(x) = 2x^3 - 3x^2 - 12x + 6\), the algorithm accurately identifies both maximum at \(x = -1\) and minimum at \(x = 2\) with estimates \(\hat{x}_{\max} = -0.999995\) and \(\hat{x}_{\min} = 2.000023\). For multimodal functions like \(f(x) = \sin(x)\) on \([-10,10]\), the algorithm captures all three maxima and minima. For high-dimensional optimisation with dimensions up to 100, the algorithm demonstrates consistent outperformance of classical optimisation methods like Fisher scoring.

\subsection{Climate Model Evaluation}

The inverse regression framework of \citet{chatterjee2022ominous} has been applied to evaluate GCM projections across four climate scenarios. The empirical results suggest that most GCMs assign low posterior probability to the actual global warming pattern observed from 1850 to 2016. These results cast doubt on the fidelity of GCM projections under the assumptions of this statistical model, but they must be interpreted alongside the broader climate science literature \citep{ipcc2021}. The Bayesian GP forecasts based solely on historical data do not support the drastic warming trends predicted by most GCMs; only the Commitment scenario falls within high-density regions of their predictive distributions.

\subsection{Series Convergence for Climate Analysis}

The recursive Bayesian framework of \citet{roy2020bayesian2} has been applied to assess long-term climate dynamics. The analysis suggests that global temperature dynamics may be subject only to temporary variations under this model, with no evidence of long-term warming or cooling in the past, and no such adverse conditions likely in the future. This finding aligns remarkably with the forward forecasts of \citet{chatterjee2022ominous} and the benchmark forecast of \citet{green2009validity}, providing converging evidence from multiple Bayesian online learning perspectives. Again, these are statistical inferences and should be considered alongside physical climate models.

\subsection{Riemann Hypothesis Investigation}

The application of recursive Bayesian methods to the Riemann Hypothesis \citep{roy2020bayesian} reveals that the Dirichlet series for the Möbius function converges only for \(a \geq 0.72\) rather than all \(a > 1/2\). This suggests that the Riemann Hypothesis may not hold under this Bayesian analysis. Both the convergence assessment method and the multiple limit points extension yield consistent results, demonstrating the robustness of the Bayesian reflex across different mathematical formulations. The authors emphasise that this is not a mathematical proof.

\subsection{Prime Number Discovery}

The recursive Bayesian framework for prime numbers \citep{roy2025prime} has yielded remarkable practical results. Using TMCMC to sample from the posterior predictive distribution, the algorithm discovered 259 new primes exceeding 140 million, of which 184 are strong Mersenne prime exponent candidates. These correspond to potential Mersenne primes with digit lengths ranging from 42 million to 242 million. This work demonstrates that the Bayesian reflex is not merely a theoretical framework but a practical tool for discovery in pure mathematics.

\subsection{Stationarity Detection in Practice}

The recursive Bayesian framework for stationarity detection has been applied to numerous real-world datasets \citep{roy2021stationarity}. For spatial ozone data, the method correctly identifies nonstationarity, consistent with the findings of \citet{das2020nonstationary} who required nonstationary models for adequate fit. For spatio-temporal PM10 data (70,572 observations), the method clearly indicates nonstationarity, while for PM2.5 data, it correctly identifies stationarity, aligning with prior literature \citep{paciorek2009practical}.

In point process applications, the method successfully characterises 17 different process types, correctly identifying CSR status, Poisson vs. non-Poisson nature, and stationarity in each case. This includes subtle distinctions such as inhomogeneous log-Gaussian Cox processes where classical methods fail to reject CSR, and Matérn cluster processes with inhomogeneous parameters where the method correctly identifies non-CSR and nonstationarity.

\subsection{Frequency Analysis of Environmental Time Series}

The frequency determination methodology has been applied to the Southern Oscillation Index (SOI) and fish recruitment time series \citep{roy2021stationarity, shumway2006time}. For the SOI series, the method identifies characteristic frequencies around 0.02 and 0.08 cycles per month, corresponding to periods of approximately 4 years and 1 year—consistent with the El Niño Southern Oscillation cycle and annual cycle. For the recruitment series, the method identifies slightly slower frequencies, reflecting the biological response time to environmental forcing.

\begin{figure}[htbp]
\centering
\begin{tikzpicture}[scale=0.9, transform shape]
    % Central Bayesian Reflex
    \node (center) at (0,0) [circle, draw, fill=red!40, minimum size=3cm] {\Large\textbf{Bayesian Reflex}};
    
    % Application domains (arranged in a circle) – wider boxes and smaller font
    \foreach \angle/\name/\color in {
        0/Healthcare/green!40,
        45/Recommendation/blue!40,
        90/Climate/yellow!40,
        135/Mathematics/purple!40,
        180/Prime Numbers/orange!40,
        225/Spatial/cyan!40,
        270/Optimisation/brown!40,
        315/Time Series/pink!40
    } {
        \node (app\angle) at (\angle:5.5cm) [rectangle, draw, fill=\color, rounded corners, 
              text width=3.0cm, text centered, minimum height=1.2cm, font=\small] {\textbf{\name}};
        \draw[thick] (center) -- (app\angle);
    }
    
    % Key examples in outer ring – moved further out to radius 8.5cm
    \node at (0:8.5cm) [text width=2.5cm, text centered, font=\small] {Kilkari Program};
    \node at (45:8.5cm) [text width=2.5cm, text centered, font=\small] {Collaborative Bandits};
    \node at (90:8.5cm) [text width=2.5cm, text centered, font=\small] {GCM Evaluation};
    \node at (135:8.5cm) [text width=2.5cm, text centered, font=\small] {Riemann Hypothesis};
    \node at (180:8.5cm) [text width=2.5cm, text centered, font=\small] {Mersenne Primes};
    \node at (225:8.5cm) [text width=2.5cm, text centered, font=\small] {Point Processes};
    \node at (270:8.5cm) [text width=2.5cm, text centered, font=\small] {Derivative GP};
    \node at (315:8.5cm) [text width=2.5cm, text centered, font=\small] {Stationarity Detection};
    
    % Connecting lines to outer examples
    \foreach \angle in {0,45,90,135,180,225,270,315} {
        \draw[dashed] (app\angle) -- ++(\angle:3cm);
    }
    
    % Unifying principle
    \node at (0,-9) [rectangle, draw, fill=red!20, text width=14cm, text centered, minimum height=1.5cm] {
        \textbf{Unifying Principle:} Belief maintenance + Sequential updating + Uncertainty-driven action across all domains
    };
\end{tikzpicture}
\caption{The Bayesian reflex unifies diverse application domains through common mathematical and computational principles.}
\label{fig:applications}
\end{figure}

Figure \ref{fig:applications} illustrates the remarkable breadth of applications unified by the Bayesian reflex, from healthcare optimisation to prime number discovery, all sharing the same core principles of belief maintenance, sequential updating, and uncertainty-driven action.

\section{Computational Considerations and Scalability}

\subsection{Parallel Implementation of Ellipsoidal Decomposition}

The ellipsoidal decomposition framework is inherently parallelisable, as recognised in the original work:

\begin{quote}
``All our codes are written in C using the Message Passing Interface (MPI) protocol for parallel processing. We implemented our codes on a 80-core VMWare provided by Indian Statistical Institute. The machine has 2 TB memory and each core has about 2.8 GHz CPU speed.'' \citep{bhattacharya2025iid}
\end{quote}

This parallel architecture enables the framework to scale to high-dimensional problems. For the 160-dimensional Rongelap Island posterior, generation of 10,000 iid realisations took just 30 minutes—a task that would be infeasible with serial computation.

The key to this scalability is the independence across ellipsoidal regions: each processor can estimate \(\hat{\pi}(\mathbf{A}_i)\) for its assigned region without communication with others, except for the final broadcast of estimates. Similarly, when generating iid samples, each processor can independently sample from a selected component using Algorithm \ref{alg:perfect_sampling}. This ``embarrassingly parallel'' structure makes the framework ideal for modern computing clusters.

\subsection{Stochastic Gradient Langevin Dynamics}

SGLD combines stochastic optimisation with Bayesian inference \citep{welling2011bayesian}. The update is \(\theta_{t+1} = \theta_t + \frac{\epsilon_t}{2} \left( \nabla \log p(\theta_t) + \frac{N}{n} \sum_{i \in \mathcal{B}_t} \nabla \log p(x_i \mid \theta_t) \right) + \eta_t\) with \(\eta_t \sim \mathcal{N}(0, \epsilon_t I)\). Under appropriate conditions, the distribution of \(\theta_t\) converges to the posterior \citep{telgarsky2013langevin, dalalyan2017theoretical}.

\subsection{Variational Inference with Natural Gradients}

For exponential family variational distributions, natural gradient updates simplify \citep{amari2016information}. The natural gradient of the ELBO with respect to natural parameter \(\eta\) is \(\tilde{\nabla}_\eta \mathcal{L} = \mathbb{E}_{q_\eta}[\nabla_\eta \log p(\mathcal{D}, \theta)] - \eta\). This leads to simple updates \citep{hoffman2013stochastic}: \(\eta_{t+1} = (1 - \rho_t) \eta_t + \rho_t \mathbb{E}_{q_{\eta_t}}[\nabla_\eta \log p(\mathcal{D}, \theta)]\). For a Gaussian variational distribution \(q(\theta) = \mathcal{N}(\mu, \Sigma)\), the natural parameters are \(\eta_1 = \Sigma^{-1}\mu\) and \(\eta_2 = -\frac{1}{2}\Sigma^{-1}\) \citep{blei2017variational}.

\subsection{Parallel Implementation Strategies}

Both the sequential Bayesian optimisation and climate model evaluation frameworks leverage parallel computing for scalability. \citet{roy2021function} note that importance weights can be computed simultaneously on parallel processors, reducing computation time from hours to minutes for moderately sized problems.

The climate model evaluation work of \citet{chatterjee2022ominous} provides an excellent example of the scalability enabled by parallel computing and the look-up table methodology. As they describe:

\begin{quote}
``We wrote all our codes in the C language as efficiently as possible, parallelising them using the Message Passing Interface (MPI) protocol whenever relevant, for example, in the case of the Bayesian multiple testing procedure. In such a case, we implemented the GP models associated with the large number of GCM forecasts in the parallel computing architecture (VMWare) available at our institution. Very efficient and time-saving computations are the results of our parallel processing.''
\end{quote}

Specifically, they note that there are 75 posteriors of the form \([\bar{x}_1,\ldots,\bar{x}_{T_0}|\bar{x}_{T_0+1},\ldots,\bar{x}_T,\mathcal{M}_k]\) across all climate scenarios, and from each of them 60,000 realisations are simulated. This would be an infeasible task if implemented separately, but by parallelising across 100 cores on a VMWare cluster, the entire exercise takes less than an hour.

For the multivariate GP implementations, however, parallel MCMC algorithm could not be constructed, and the computation times ranged from 16 hours 59 minutes for the Commitment scenario to 30 hours 37 minutes for the A1B scenario, demonstrating the need for further investigation in the direction of parallel computing.

For the series convergence applications, the recursive Bayesian updates are amenable to massive parallelisation, enabling analysis of huge datasets. The Riemann Hypothesis investigation processed \(10^9\) terms of the Möbius function in just 2 minutes on a VMWare cluster, demonstrating the remarkable efficiency of the recursive Beta-Binomial updates when implemented in parallel computing architecture.

For prime number discovery, the recursive formulation is essential—a non-recursive approach would require \(2^k\) terms in the posterior, making analysis of millions of primes impossible. The recursive version requires only 2 terms for parameter posteriors and 4 terms for predictive distributions. TMCMC simulations on a standard laptop generate \(10^7\) realisations in seconds, and the primality testing for discovered candidates completes in under an hour. Thus, parallel processing is unnecessary in this setup.

For dynamic computer model emulation, the look-up table approach introduced by \citet{bhattacharya2007simulation} achieves its efficiency through a one-time matrix inversion, with all subsequent conditional simulations requiring only simple vector operations. This principle—invest computational effort upfront to enable fast sequential inference—has become a cornerstone of a range of online Bayesian methods introduced by these authors. The parallelisability of the approach, with independent sequences generated across processors, anticipates the scalability requirements of modern AI systems.

For state-space models, the look-up table approach provides similar computational benefits. As \citet{ghosh2014bayesian} note, retaining the auxiliary variables \(\mathbf{D}_n^*\) ensures that the correlation matrix \(\mathbf{A}_{g,D_n^*}\) needs to be inverted only once, before MCMC simulations begin, and remains fixed thereafter, ``saving a lot of computational time, in addition to providing protection against numerical instability.''

For the circular extensions, the look-up table methodology remains equally efficient. The fixed grid \(\mathbf{G}_z\) ensures that \(\mathbf{A}_{g,D_z}^{-1}\) can be pre-computed, and the conditional independence structures induced by the auxiliary variables make sequential updating feasible even in the most complex fully circular case.

For Recursive Gaussian Processes, the look-up table approximation enables scalable inference by reducing infinite-dimensional GP inference to finite-dimensional computations. The Markov structure induced by the look-up table makes the RGP framework particularly amenable to parallelisation, with layer-wise computations distributable across processors. While the original batch MCMC implementation required substantial computational resources \citep{bhattacharya2025bayesian}, the path toward online inference—whether through particle filters or sequential variational methods—promises to make RGPs practical for real-time learning scenarios.

For the recursive stationarity detection framework, parallel implementation is essential for handling large datasets. For spatial data with 10,000 observations, K-means clustering into \(K=250\) clusters with parallel computation takes less than a second on a 4-core laptop. For spatio-temporal data with 70,572 observations and \(K=1764\) clusters, parallel implementation enables analysis in minutes rather than hours \citep{roy2021stationarity}.

\subsection{Computational Challenges in High Dimensions}

As dimension increases, several challenges emerge in both optimisation and time series settings. The Euclidean norm of gradients increases with dimension, requiring larger thresholds. Matrix operations become more expensive, requiring \(\mathcal{O}(d^3)\) operations for Cholesky decompositions. Markov chain mixing degrades, requiring longer burn-in periods up to \(10^6\) iterations for \(d=100\). The number of stages with positive acceptance decreases, with only the first few stages yielding progress. Despite these challenges, the methods remain feasible and outperform classical alternatives.

For RGPs in high-dimensional settings, the spike-and-slab priors on weights provide automatic variable selection, pruning irrelevant input dimensions and hidden units. This feature is particularly valuable for online learning in high dimensions, as it allows the model to focus computational resources on the most informative features as data accumulates.

For the ellipsoidal decomposition framework, high dimensions present both challenges and opportunities. The minorisation constant \(\hat{p}_i = \hat{s}_i/\hat{S}_i - \eta_i\) tends to decrease as dimension increases, potentially increasing the coalescence time \(T_i \sim \text{Geometric}(\hat{p}_i)\). However, the framework incorporates a powerful tool to address this: \textit{diffeomorphic transformations}. As demonstrated in the Rongelap Island example (Section 6.3 of \citet{bhattacharya2025iid}), applying an appropriate transformation can flatten the target distribution, making \(\hat{s}_i\) and \(\hat{S}_i\) closer and thus increasing \(\hat{p}_i\). This transformed approach successfully handled a 160-dimensional posterior. %uggesting that even higher dimensions may be tractable with carefully chosen transformations.
As detailed in \cite{bhattacharya2025flows}, the same transformation approach was also able to yield near-perfect realizations from a $4776$-dimensional posterior; however,
the times taken were about $2$ days and $10$ days for two strategic versions of the method, implemented on a (overloaded) server with $100$ cores.

\subsection{Federated Bayesian Learning}

In federated learning, we have \(K\) clients with local data \(\mathcal{D}_k\) \citep{mcmahan2017communication}. The global posterior is \(p(\theta \mid \cup_k \mathcal{D}_k) \propto p(\theta) \prod_{k=1}^K p(\mathcal{D}_k \mid \theta)\). Federated averaging approximates this by communicating sufficient statistics \citep{mcmahan2017communication}. For exponential family models with conjugate priors, the global posterior hyperparameters are \(\chi_{\text{global}} = \chi_0 + \sum_{k=1}^K (\chi_k - \chi_0)\) and \(\nu_{\text{global}} = \nu_0 + \sum_{k=1}^K (\nu_k - \nu_0)\).

For privacy, we can add Gaussian noise to achieve differential privacy \citep{dwork2014algorithmic, abadi2016deep}: \(\tilde{\chi}_{\text{global}} = \chi_{\text{global}} + \mathcal{N}(0, \sigma^2 I)\), where \(\sigma\) is chosen to satisfy \((\epsilon, \delta)\)-differential privacy according to the Gaussian mechanism \citep{dwork2014algorithmic}: \(\sigma = \frac{\Delta \sqrt{2\log(1.25/\delta)}}{\epsilon}\), with \(\Delta\) being the sensitivity of the sufficient statistics.

The ellipsoidal decomposition framework offers a novel approach to federated learning: each client could generate near-perfect iid samples from its local posterior, and these samples could be combined to approximate the global posterior via importance sampling or direct pooling. This would eliminate the need for iterative communication and provide exact uncertainty quantification at the global level.

The RGP framework, with its hierarchical structure and look-up table representation, offers promising directions for federated online learning. Clients could maintain local RGPs and communicate compressed summaries (e.g., sufficient statistics of the look-up table or variational parameters) to a central server, which then aggregates them to form a global model—all while preserving privacy through differential privacy mechanisms.

\section{Future Directions and Open Challenges}

\subsection{Theoretical Frontiers}

The Eluder dimension for neural networks remains an open problem \citep{osband2016deep}. For a network with \(L\) layers and \(P\) parameters, we conjecture \(d_E = \tilde{\mathcal{O}}(P L \log(1/\delta))\), but this has not been proven. Current bounds for ReLU networks are exponential in depth \citep{allen2019learning}. Non-stationary regret bounds are another active area \citep{besbes2014stochastic}. For a process with \(V_T\) changes, the optimal regret is \(R_T^* = \tilde{\Theta}(\sqrt{V_T T})\). Achieving this bound for general Bayesian models requires adaptive forgetting mechanisms \citep{kaelin2021bayesian}.

For dynamic computer model emulation, theoretical questions remain about optimal selection of the look-up table \(\mathbf{G}^*\). The original work \citep{bhattacharya2007simulation} demonstrated empirically that a grid size of \(N=20\) was adequate for the examples considered, but general theoretical guarantees for choosing grid size and point locations remain an open area for investigation.

For state-space models, \citet{ghosh2014bayesian} proved that the order of approximation of the true function by the conditional distribution given \(\mathbf{D}_n^*\) is \(O(n^{-1})\). However, extending these results to the circular case and establishing optimal grid selection criteria remains an open challenge.

For the climate modelling application, \citet{chatterjee2022ominous} note that their grid choice of \(n=50\) points for each dimension, selected via Latin hypercube sampling, proved adequate for their purposes. However, developing adaptive grid selection methods that respond to the data would be a valuable extension.

For Recursive Gaussian Processes, theoretical questions remain about the optimal depth \(T\) for online learning scenarios. The asymptotic theory of \citet{bhattacharya2025bayesian} establishes near-optimal convergence rates for batch inference, but extending these results to online settings with non-stationary data streams remains an open challenge. The interplay between the look-up table approximation and sequential inference also deserves further theoretical investigation.

For series convergence, theoretical questions remain about optimal choices of the sequences \(\{n_j\}\) and \(\{c_j\}\). The nonparametric adaptive bound of \citet{roy2020bayesian2} works well empirically, but theoretical guarantees for its convergence rate remain to be established.

For prime number analysis, open questions include extending the framework to other classes of primes (e.g., Sophie Germain primes, twin primes) and developing iid sampling procedure with the ellipsoidal decomposition framework for exploring the posterior predictive distribution. The theoretical connection between the recursive Bayesian framework and the Riemann Hypothesis deserves further investigation.

For the recursive stationarity detection framework, theoretical questions remain about optimal choices of the region sizes \(n_j\) and the bound sequence \(c_j\). The adaptive bound construction used in AR(1) examples—where \(\hat{C}_j\) is updated based on previous outcomes—works well empirically, but theoretical guarantees for its convergence rate remain to be established. Extending the framework to handle non-ergodic processes and establishing rates of convergence for the posterior probabilities are important open problems \citep{roy2021stationarity}.

For point process applications, the assumption of growing observation windows \(|W_r| \to \infty\) with \(\sum_{r=1}^\infty |W_r|^{-1} < \infty\) ensures almost sure convergence via the Borel-Cantelli lemma. Relaxing this condition and establishing convergence rates for finite windows would enhance practical applicability.

\subsection{Theoretical Frontiers for Ellipsoidal Decomposition}

The ellipsoidal decomposition framework opens several theoretical questions. First, the optimal choice of radii: the framework requires choosing radii \(\sqrt{c_i}\) and the number of regions \(M\). While heuristic guidelines exist (e.g., Section 5 of \citet{bhattacharya2025iid}), theoretical results on optimal choices would be valuable. The trade-off is clear: larger \(\sqrt{c_1}\) reduces \(\hat{p}_1\) (slower coalescence) but captures more probability mass, while smaller \(\sqrt{c_1}\) increases \(\hat{p}_1\) but may require more regions \(M\). Formalising this trade-off is an open problem. Second, convergence rates for \(\hat{p}_i\): the minorisation constant \(\hat{p}_i = \hat{s}_i/\hat{S}_i - \eta_i\) determines the coalescence time distribution. Understanding how \(\hat{p}_i\) scales with dimension and the shape of the target distribution would provide insights into the scalability of the method. The Rongelap example suggests that with appropriate diffeomorphic transformations, \(\hat{p}_i\) can remain practically useful even in 160 dimensions. Third, error bounds from Monte Carlo estimation: the estimates \(\hat{\pi}(\mathbf{A}_i)\) and \(\hat{p}_i\) are based on finite Monte Carlo samples. Developing rigorous error bounds and adaptive sample size selection rules would strengthen the theoretical foundations. Fourth, integration with sequential inference: while the framework as presented is for batch sampling, extending it to true online settings where the posterior evolves over time requires theoretical guarantees on how the ellipsoidal decomposition should adapt as new data arrives.

\subsection{Methodological Innovations}

Efficient approximations for online Bayesian deep learning include particle-based methods where \(p(\theta \mid \mathcal{D}_t) \approx \sum_{i=1}^N w_t^{(i)} \delta_{\theta_t^{(i)}}(\theta)\), variational methods where \(q_t(\theta) = \mathcal{N}(\mu_t, \Sigma_t)\), and Laplace approximations where \(p(\theta \mid \mathcal{D}_t) \approx \mathcal{N}(\hat{\theta}_t, H_t^{-1})\), with \(H_t\) being the Hessian at the MAP estimate \citep{osband2016deep}. The Laplace approximation requires computing the Hessian, which is \(\mathcal{O}(P^2)\) for \(P\) parameters \citep{mackay1992practical}.

For dynamic computer model emulation, future directions include adaptive look-up tables that expand or contract based on the observed trajectory, and hybrid approaches combining sparse grids with importance sampling for improved efficiency.

For state-space models with circular variables, future directions include developing iid sampling procedures for the fully circular case, and extending the methodology to handle multivariate circular processes \citep{mazumder2016nonparametric}.

For the climate modelling application, \citet{chatterjee2022ominous} suggest several extensions:

\begin{quote}
``Our framework could be extended to incorporate spatio-temporal models, dynamic covariates, or hybrid approaches that integrate physical constraints into the statistical emulation process. We also anticipate that future work may apply this framework to other environmental processes, such as sea-level rise or precipitation extremes, where model uncertainty remains high and decision-making stakes are critical.''
\end{quote}

They also note the potential for combining multiple data sources (NASA, NOAA, World Bank, National Snow and Ice Data Center) coherently, which would require very complex nonparametric Bayesian multivariate spatio-temporal models, likely necessitating supercomputing facilities with sophisticated parallel MCMC strategies. For iid sampling, significant extension of the ellipsoidal decomposition framework may be necessary.

For Recursive Gaussian Processes, developing true online inference algorithms is a priority. Promising directions include particle-based RGP, where a set of weighted particles represents the joint distribution of all latent variables, with sequential importance resampling as new data arrives; variational RGP, where a variational posterior over the look-up table and parameters is maintained, with updates based on stochastic gradients of a sequential ELBO; and recursive RGP, which exploits the state-space structure to derive approximate closed-form updates, extending the Kalman filter to the nonparametric, deep setting. This latter approach would represent the ultimate realisation of the Bayesian reflex in deep learning.

For sequential Bayesian optimisation specifically, key challenges include replacing the TMCMC procedures by iid sampling with ellipsoidal decompositions 
for high-dimensional posterior sampling, improving importance sampling stability through adaptive tempering, reducing computational cost of matrix inversions through low-rank approximations, and extending theoretical convergence rates to more general function classes.

For series convergence applications, extending the framework to complex-valued series and developing more sophisticated adaptive bound mechanisms remain open challenges.

For prime number analysis, future directions include scaling the approach to even larger primes, developing theoretical guarantees for the rate of prime discovery, and exploring connections to other problems in analytic number theory.

For the recursive stationarity detection framework, promising methodological directions include adaptive region selection, which involves developing methods to automatically choose the region sizes \(n_j\) and the number of regions \(K\) based on the data, rather than fixing them a priori; multivariate extensions, generalising the framework to multivariate time series and spatial processes with multiple variables; nonparametric covariance estimation, incorporating more sophisticated covariance estimators for weak stationarity testing, particularly for non-Gaussian processes; online point process monitoring, developing streaming algorithms for real-time monitoring of point process characteristics, with applications to environmental monitoring and surveillance; and integration with deep learning, combining the recursive Bayesian framework with neural network-based feature extraction for high-dimensional data.

\subsection{Methodological Innovations for Ellipsoidal Decomposition}

The ellipsoidal decomposition framework itself suggests numerous methodological extensions. First, adaptive region construction: currently, the radii \(\sqrt{c_i}\) are fixed before estimation; an adaptive approach could dynamically adjust the regions based on the estimated probability masses, allocating more regions where the density varies rapidly and fewer where it is flat. Second, automatic diffeomorphic transformations: the Rongelap example demonstrates the power of diffeomorphic transformations to improve minorisation constants; developing automatic methods to select appropriate transformations based on preliminary samples would greatly enhance the framework's applicability. Third, hybrid approaches for extremely high dimensions: for dimensions beyond 4776, a hybrid approach combining variational inference (for initial dimension reduction) with ellipsoidal decomposition (for exact sampling in the reduced space) could be explored. Fourth, integration with Hamiltonian Monte Carlo: the perfect sampling framework currently uses Metropolis-Hastings with uniform proposals; extending to Hamiltonian Monte Carlo proposals, specifically the tunable and robust Modified Parameterised Leapfrog Hamiltonian Monte Carlo (MPL-HMC) recently introduced by \citep{Bhatta26}, within the ellipsoidal regions, could improve efficiency for highly correlated targets. Fifth, streaming ellipsoidal decomposition: developing a truly online version where the ellipsoidal regions are updated incrementally as new data arrives, without recomputing all Monte Carlo estimates from scratch, would enable real-time perfect sampling.

\subsection{Ethical Considerations}

Fairness constraints can be incorporated into Thompson sampling \citep{joseph2016fairness}. Let \(G(a)\) be a protected group for action \(a\). A fairness-aware Thompson sampling algorithm might sample parameters and then enforce \(P(\text{select } a \mid G(a) = g) \geq \gamma\) for all groups \(g\) and some threshold \(\gamma\). This can be implemented by reweighting the posterior samples or adding constraints to the optimisation \citep{celis2018fair}.

Privacy guarantees for online Bayesian inference remain challenging \citep{abadi2016deep}. The cumulative nature of the posterior means that privacy loss compounds over time. Advanced composition theorems provide bounds \(\epsilon_{\text{total}} \leq \sqrt{2T \log(1/\delta)} \epsilon + T \epsilon (e^\epsilon - 1)\) \citep{dwork2014algorithmic}, but these may be too loose for long-running systems. Tighter bounds using moments accountant methods are an active research area \citep{abadi2016deep}.

For Recursive Gaussian Processes deployed in online learning systems, privacy concerns are paramount. The hierarchical structure may leak information across layers, and the look-up table representation could potentially encode sensitive patterns in the data. Developing differentially private variants of RGP inference—whether through noisy gradient updates, private aggregation of sufficient statistics, or objective perturbation—is an important direction for future work.

In the context of climate modelling, ethical considerations are paramount. Policy decisions based on model projections affect billions of lives. The work of \citet{chatterjee2022ominous} demonstrates the importance of independent statistical validation, showing that GCM projections may overstate warming trends under this model. As they note, their results align with the recommendation of \citet{green2009validity} that the best policy may be to do nothing about global warming—at least until stronger Bayesian statistical evidence emerges from other significant climate data analyses. However, this must be weighed against the overwhelming consensus of physical climate science \citep{ipcc2021}. Statisticians have a responsibility to communicate the limitations of their models and the uncertainty in their conclusions.

The series convergence results of \citet{roy2020bayesian2} further support this view under their model, suggesting that current warming may be a temporary variation rather than a long-term trend. Such findings have profound implications for climate policy, underscoring the responsibility of statisticians to provide rigorous, independent evaluations of scientific claims that shape public policy, while also acknowledging the broader scientific context.

For point process applications in epidemiology and ecology, ethical considerations include ensuring that conclusions about clustering or randomness are not misinterpreted as evidence for or against causal mechanisms. The recursive Bayesian framework provides rigorous uncertainty quantification, but responsible communication of results remains essential.

\subsection{Ethical Considerations for Perfect Sampling}

The ellipsoidal decomposition framework, by providing practically exact iid samples from posterior distributions, has important ethical implications for AI systems. First, transparency and reproducibility: essentially exact sampling eliminates the ``black box'' nature of MCMC convergence diagnostics; decisions made based on these samples are fully reproducible and verifiable, enhancing transparency in high-stakes applications like healthcare and criminal justice. Second, fairness through exact uncertainty quantification: the essentially exact samples enable precise quantification of uncertainty about model parameters, which can inform fairness assessments; for example, if a model's predictions for different demographic groups overlap substantially in their uncertainty intervals, this provides evidence against disparate impact. Third, privacy-preserving exact inference: the parallel nature of the framework suggests possibilities for distributed exact inference where data never leaves local processors—only summary statistics (the Monte Carlo estimates \(\hat{\pi}(\mathbf{A}_i)\)) are shared; this could enable privacy-preserving Bayesian analysis across institutions. Fourth, accountability in scientific discovery: the prime number discovery application demonstrates that exact Bayesian methods can lead to verifiable scientific discoveries; in applications like climate modelling or drug discovery, such accountability is crucial for public trust.

\section{Conclusion}

This chapter has provided a comprehensive mathematical treatment of the Bayesian reflex, from foundational principles to state-of-the-art applications. We have shown how Bayes' theorem provides a universal update rule, how conjugate families and exponential families enable tractable computation, and how modern methods like particle filters and variational inference scale to complex models. The connections to decision-making through bandit algorithms demonstrate how Bayesian principles translate directly into action, while applications across healthcare, recommendation, and language modelling show the practical impact of these ideas.

The foundational work of \citet{bhattacharya2007simulation} on dynamic computer model emulation established a core principle that is central to online Bayesian learning methods
proposed by these authors: introducing latent auxiliary variables (look-up tables) creates conditional independence structures that make sequential inference computationally tractable. This insight—that a one-time investment in matrix inversion enables efficient, parallelisable simulation of entire dynamic sequences—anticipates many of the scalability requirements of modern AI systems.

Building on this foundation, \citet{ghosh2014bayesian} generalised the look-up table principle to nonparametric state-space models, showing that the same core idea enables efficient inference when both the observational and evolutionary functions are unknown and time-varying. This work explicitly elucidated the profound insight that the latent states, while conditionally Markov given the look-up table, become non-Markovian upon marginalisation—a realistic dependence structure that emerges naturally from the Bayesian nonparametric approach.

The extensions to circular latent states \citep{mazumder2016bayesian, mazumder2016nonparametric} demonstrate the remarkable versatility of the look-up table methodology. By introducing additional auxiliary variables—wrapping numbers that track how many times a process has circled the unit circle—these works show how the Bayesian reflex can handle variables that live on non-Euclidean manifolds. The fully circular case, where both observed and latent processes are circular, represents the most challenging extension, yet the look-up table principle enables tractable inference through a hierarchical structure of auxiliary variables.

The inverse regression framework of \citet{chatterjee2022ominous} further demonstrates that the Bayesian reflex operates in multiple temporal directions—not just forward prediction, but also backward inference conditioning on future scenarios. Their compositional Gaussian process emulator, look-up table approximations (directly inherited from \citet{bhattacharya2007simulation}), Bayesian multiple testing for model selection, and multivariate GP extensions provide a rigorous statistical framework for evaluating complex physical models. The empirical findings cast doubt on the fidelity of GCM projections under this model, while their forward forecasts based solely on historical data suggest more moderate warming trajectories.

The Recursive Gaussian Process framework \citet{bhattacharya2025bayesian} represents a further significant advance in Bayesian deep learning, exploiting the look-up table principle for multiple layers and offering a principled approach to hierarchical modelling with uncertainty quantification. While originally developed for batch inference, the RGP architecture is ideally suited for online learning. Its hierarchical structure mirrors state-space models, and even here, only a single look-up table is needed to induce conditional independence for sequential processing. Moreover, its spike-and-slab priors enable automatic feature selection over time. The theoretical guarantees—including near-optimal convergence rates and robustness to misspecification—provide confidence that online RGP systems will learn reliably from streaming data.

The work of \citet{roy2021function} on sequential Bayesian optimisation with Gaussian derivative processes exemplifies the Bayesian reflex in a setting where the system actively decides which points to evaluate next. The algorithm's iterative data augmentation, importance resampling, and theoretical convergence guarantees embody the core principles of online Bayesian learning: belief maintenance, sequential updating, and uncertainty-driven action.

The recursive Bayesian framework of \citet{roy2020bayesian, roy2020bayesian2} shows that the Bayesian reflex operates at the most fundamental level of mathematical analysis—determining whether an infinite series converges. The applications to climate change and the Riemann Hypothesis demonstrate the profound reach of these ideas. The climate analysis suggests that global temperature dynamics may be subject only to temporary variations under this model, with no evidence of long-term warming or cooling, aligning remarkably with the findings of \citet{chatterjee2022ominous}. The Riemann Hypothesis investigation suggests that this famous conjecture may not hold under this Bayesian analysis, with the Dirichlet series for the Möbius function converging only for \(a \geq 0.72\) rather than all \(a > 1/2\).

Most remarkably, the recursive Bayesian framework for prime numbers \citet{roy2025prime} demonstrates that the Bayesian reflex extends to the purest realm of mathematics. The asymptotic analysis provides a second, independent statistical argument against the Riemann Hypothesis, while the posterior predictive sampling yields practical discoveries of 259 new primes exceeding 140 million, including 184 strong Mersenne prime candidates with potential digit lengths reaching 242 million. The recursive formulation reduces an infeasible \(2^k\)-term posterior to just 2 terms, making analysis of millions of primes computationally possible.

The recursive stationarity detection framework \citep{roy2021stationarity} demonstrates the versatility of the Bayesian reflex across an unprecedented range of statistical problems. From ARMA time series to spatial data, from MCMC diagnostics to point processes, from environmental monitoring to frequency analysis—the same core principles apply: partition the index set into locally stationary regions, construct binary indicators based on empirical distributions, and apply recursive Beta-Binomial updates. The theoretical convergence guarantees ensure that as data accumulates, the posterior probabilities correctly identify the underlying properties of the process.

The applications to 17 different point process types—including Poisson processes, Cox processes, cluster processes, and interaction processes—demonstrate the method's ability to distinguish subtle differences in process characteristics. The successful identification of non-CSR in inhomogeneous log-Gaussian Cox processes where classical methods fail, and the correct classification of stationarity in Matérn cluster processes with inhomogeneous parameters, highlight the power of the Bayesian approach.

\subsection{The Ellipsoidal Decomposition: A Foundational Contribution}

At the heart of this chapter lies the recognition that the Bayesian reflex requires a computational foundation capable of exact, reliable inference. The {\it ellipsoidal decomposition framework} \citep{bhattacharya2025iid} provides precisely this foundation. By decomposing any target distribution on \(\mathbb{R}^d\) into an infinite mixture of component distributions on concentric ellipsoids and annuli, this framework enables near-perfect iid sampling without the convergence concerns that have plagued MCMC methods for decades.

The theoretical elegance of the framework—the minorisation inequality for Metropolis-Hastings with uniform proposals, the geometric coalescence time, the split-chain representation that bypasses CFTP—is matched by its practical performance. From standard distributions in dimensions up to 100 to challenging 160-dimensional spatial posteriors, the framework delivers practically exact samples with reasonable computation times, particularly when implemented in parallel architectures.

The extensions of this framework to multimodal distributions \citep{bhattacharya2021multimodal}, doubly intractable distributions \citep{bhattacharya2021doubly}, Dirichlet process mixtures \citep{bhattacharya2022dirichlet}, and random normalising flows \citep{bhattacharya2025flows} demonstrate its versatility and power. Together with the look-up table principle, it forms a complete computational infrastructure for the Bayesian reflex—one that handles both the sequential dynamics of online learning (via look-up tables) and the exact inference required at each step (via ellipsoidal decomposition).

\subsection{A Unified Vision}

The Bayesian reflex, as we have articulated it, is not merely a metaphor but a concrete computational framework for building adaptive AI systems. It rests on three pillars: first, mathematical foundations—Bayes' theorem, sequential updating, conjugate families, and information theory provide the theoretical underpinning; second, computational principles—the look-up table principle enables sequential inference in dynamic systems and the ellipsoidal decomposition framework enables essentially exact iid sampling from complex posteriors; and third, algorithmic implementations—from Kalman filters to particle filters, from variational inference to Thompson sampling, from RGPs to perfect samplers, these algorithms realise the reflex in practice.

The applications we have surveyed—healthcare optimisation, climate modelling, prime number discovery, Riemann Hypothesis investigation, spatial statistics, point process analysis—demonstrate the reach of these ideas. In each case, the Bayesian reflex operates: beliefs are maintained probabilistically, updated sequentially as new information arrives, and used to guide decisions under uncertainty.

As we look to the future, the Bayesian reflex will remain essential for navigating the complex, dynamic, and uncertain world that intelligent systems must inhabit. The challenges ahead—scaling to foundation models with billions of parameters, providing theoretical guarantees in non-stationary environments, ensuring fairness and privacy in adaptive systems, developing truly online algorithms for powerful architectures like RGPs, extending the recursive Bayesian framework to even more complex data types, and perfecting the ellipsoidal decomposition for ever-higher dimensions—demand continued innovation. But the core insight endures: learning is not a one-time event, but a continuous process of belief updating in response to experience. This is the essence of the Bayesian reflex, and it will guide the next generation of artificial intelligence.

\section*{Acknowledgments}

We are grateful to the many researchers whose work has shaped our understanding of Bayesian online learning. We also sincerely thank DeepSeek for assistance in 
drafting and formatting this manuscript; all technical content has been verified by these authors.

\bibliographystyle{plainnat}
\bibliography{online_refs3}

@phdthesis{Soma_thesis,
  author    = {Sucharita Roy},
  title     = {{Infinite Series, Stochastic Processes, Function Optimization and the bayesian Panacea}},
  school = {Department of Mathematics, Jadavpur University},
  year      = {2023},
  note  = {Available at \url{https://www.researchgate.net/publication/360457067_INFINITE_SERIES_STOCHASTIC_PROCESSES_FUNCTION_OPTIMIZATION_AND_THE_BAYESIAN_PANACEA}}
}

@phdthesis{Debashis_thesis,
  author    = {Debashis Chatterjee},
  title     = {{A Brief Treatise on Bayesian Inverse Regression}},
  school = {Indian Statistical Institute},
  year      = {2022},
  note  = {Available at \url{https://www.researchgate.net/publication/360457067_INFINITE_SERIES_STOCHASTIC_PROCESSES_FUNCTION_OPTIMIZATION_AND_THE_BAYESIAN_PANACEA}}
}

@inproceedings{abadi2016deep,
  author    = {Abadi, Martin and Chu, Andy and Goodfellow, Ian and McMahan, H. Brendan and Mironov, Ilya and Talwar, Kunal and Zhang, Li},
  title     = {{Deep Learning with Differential Privacy}},
  booktitle = {Proceedings of the 2016 ACM SIGSAC Conference on Computer and Communications Security},
  pages     = {308--318},
  year      = {2016},
  publisher = {ACM}
}

@inproceedings{agrawal2012analysis,
  author    = {Agrawal, Shipra and Goyal, Navin},
  title     = {{Analysis of Thompson Sampling for the Multi-Armed Bandit Problem}},
  booktitle = {Proceedings of the 25th Annual Conference on Learning Theory},
  series    = {JMLR Workshop and Conference Proceedings},
  volume    = {23},
  pages     = {39.1--39.26},
  year      = {2012}
}

@inproceedings{akyurek2022what,
  author    = {Aky{\"u}rek, Ekin and Schuurmans, Dale and Andreas, Jacob and Ma, Tengyu and Zhou, Denny},
  title     = {{What Learning Algorithm is In-Context Learning? Investigations with Linear Models}},
  booktitle = {International Conference on Learning Representations},
  year      = {2022}
}

@inproceedings{allen2019learning,
  author    = {Allen-Zhu, Zeyuan and Li, Yuanzhi and Song, Zhao},
  title     = {{Learning and Generalization in Overparameterized Neural Networks}},
  booktitle = {Proceedings of the 32nd Annual Conference on Learning Theory},
  pages     = {830--882},
  year      = {2019}
}

@book{amari2016information,
  author    = {Amari, Shun-ichi},
  title     = {{Information Geometry and Its Applications}},
  publisher = {Springer},
  year      = {2016}
}

@book{anderson1979optimal,
  author    = {Anderson, Brian D. O. and Moore, John B.},
  title     = {{Optimal Filtering}},
  publisher = {Prentice-Hall},
  year      = {1979}
}

@article{auer2002finite,
  author    = {Auer, Peter and Cesa-Bianchi, Nicol{\`o} and Fischer, Paul},
  title     = {{Finite-time Analysis of the Multiarmed Bandit Problem}},
  journal   = {Machine Learning},
  volume    = {47},
  number    = {2},
  pages     = {235--256},
  year      = {2002}
}

@article{auer2002nonstochastic,
  author    = {Auer, Peter and Cesa-Bianchi, Nicol{\`o} and Freund, Yoav and Schapire, Robert E.},
  title     = {{The Nonstochastic Multiarmed Bandit Problem}},
  journal   = {SIAM Journal on Computing},
  volume    = {32},
  number    = {1},
  pages     = {48--77},
  year      = {2002}
}

@article{bandopadhyay2017nonparametric,
  author    = {Bandopadhyay, S. and Jentsch, C. and Rao, S. S.},
  title     = {{A Spectral Domain Test for Stationarity of Spatio-Temporal Data}},
  journal   = {Journal of Time Series Analysis},
  volume    = {38},
  pages     = {326--351},
  year      = {2017}
}

@incollection{bengtsson2008curse,
  author    = {Bengtsson, Thomas and Bickel, Peter and Li, Bo},
  title     = {{Curse-of-dimensionality Revisited: Collapse of the Particle Filter in very Large Scale Systems}},
  booktitle = {Probability and Statistics: Essays in Honor of David A. Freedman},
  pages     = {316--334},
  publisher = {Institute of Mathematical Statistics},
  year      = {2008}
}

@book{berger2013statistical,
  author    = {Berger, James O.},
  title     = {{Statistical Decision Theory and Bayesian Analysis}},
  publisher = {Springer},
  year      = {2013}
}

@book{bernardo2009bayesian,
  author    = {Bernardo, Jos{\'e} M. and Smith, Adrian F. M.},
  title     = {{Bayesian Theory}},
  publisher = {Wiley},
  year      = {2009}
}

@article{besbes2014stochastic,
  author    = {Besbes, Omar and Gur, Yonatan and Zeevi, Assaf},
  title     = {{Stochastic Optimization with Time-Varying Models}},
  journal   = {Mathematics of Operations Research},
  volume    = {39},
  number    = {4},
  pages     = {1021--1047},
  year      = {2014}
}

@article{bhattacharya2007simulation,
  author    = {Bhattacharya, Sourabh},
  title     = {{A Simulation Approach to Bayesian Emulation of Complex Dynamic Computer Models}},
  journal   = {Bayesian Analysis},
  volume    = {2},
  number    = {4},
  pages     = {783--816},
  year      = {2007}
}

@article{bhattacharya2025iid,
  author    = {Bhattacharya, Sourabh},
  title     = {{IID Sampling from Intractable Distributions}},
  journal   = {Sankhy{\=a} A},
  note      = {To appear in Professor C. R. Rao special issue},
  year      = {2025}
}

@article{bhattacharya2025flows,
  author    = {Bhattacharya, Durba and Maitra, Trisha and Roy, Sucharita and Bhattacharya, Sourabh},
  title     = {{Bayesian Nonparametrics with Random Normalizing Flows}},
  journal   = {ResearchGate preprint},
  year      = {2025},
  note      = {Available at \url{https://www.researchgate.net/publication/395337211_Bayesian_Nonparametrics_With_Random_Normalizing_Flows}}
}

@article{bhattacharya2025bayesian,
  author    = {Bhattacharya, Durba and Maitra, Trisha and Roy, Sucharita and Bhattacharya, Sourabh},
  title     = {{Bayesian Deep Neural Networks Driven by Recursive Gaussian Processes}},
  journal   = {ResearchGate preprint},
  year      = {2025}
}

@article{bhattacharya2021multimodal,
  author    = {Bhattacharya, Sourabh},
  title     = {{IID Sampling from Intractable Multimodal and Variable-Dimensional Distributions}},
  journal   = {arXiv preprint},
  year      = {2021},
  note      = {arXiv:2109.12633}
}

@article{bhattacharya2021doubly,
  author    = {Bhattacharya, Sourabh},
  title     = {{IID Sampling from Doubly Intractable Distributions}},
  journal   = {arXiv preprint},
  year      = {2021},
  note      = {arXiv:2112.07939}
}

@article{bhattacharya2022dirichlet,
  author    = {Bhattacharya, Sourabh},
  title     = {{IID Sampling from Posterior Dirichlet Process Mixtures}},
  journal   = {arXiv preprint},
  year      = {2022},
  note      = {arXiv:2206.09233}
}

@article{Bhatta26,
  author    = {Bhattacharya, Sourabh},
  title     = {{MPL-HMC: A Tunable Parameterized Leapfrog Framework for Robust Hamiltonian Monte Carlo}},
  journal   = {arXiv preprint},
  year      = {2026}
}

@article{blei2017variational,
  author    = {Blei, David M. and Kucukelbir, Alp and McAuliffe, Jon D.},
  title     = {{Variational Inference: A Review for Statisticians}},
  journal   = {Journal of the American Statistical Association},
  volume    = {112},
  number    = {518},
  pages     = {859--877},
  year      = {2017}
}

@inproceedings{blundell2015weight,
  author    = {Blundell, Charles and Cornebise, Julien and Kavukcuoglu, Koray and Wierstra, Daan},
  title     = {{Weight Uncertainty in Neural Networks}},
  booktitle = {Proceedings of the 32nd International Conference on Machine Learning},
  pages     = {1613--1622},
  year      = {2015}
}

@book{brooks2011handbook,
  editor    = {Brooks, Steve and Gelman, Andrew and Jones, Galin and Meng, Xiao-Li},
  title     = {{Handbook of Markov Chain Monte Carlo}},
  publisher = {CRC Press},
  year      = {2011}
}

@article{bubeck2012regret,
  author    = {Bubeck, S{\'e}bastien and Cesa-Bianchi, Nicol{\`o}},
  title     = {{Regret Analysis of Stochastic and Nonstochastic Multi-armed Bandit Problems}},
  journal   = {Foundations and Trends in Machine Learning},
  volume    = {5},
  number    = {1},
  pages     = {1--122},
  year      = {2012}
}

@inproceedings{celis2018fair,
  author    = {Celis, L. Elisa and Deshpande, Anay and Kathuria, Tarun and Vishnoi, Nisheeth K.},
  title     = {{Fairness in Contextual Bandits}},
  booktitle = {Proceedings of the 2018 ACM Conference on Fairness, Accountability, and Transparency},
  pages     = {140--149},
  year      = {2018}
}

@book{cesa2006prediction,
  author    = {Cesa-Bianchi, Nicol{\`o} and Lugosi, G{\'a}bor},
  title     = {{Prediction, Learning, and Games}},
  publisher = {Cambridge University Press},
  year      = {2006}
}

@article{chatterjee2022ominous,
  author    = {Chatterjee, Debashis and Bhattacharya, Sourabh},
  title     = {{How Ominous is the Premonition of Future Global Warming?}},
  journal   = {Sankhy{\=a} B},
  note      = {To appear in Professor C. R. Rao special issue},
  year      = {2025}
}

@article{cybenko1989approximation,
  author    = {Cybenko, George},
  title     = {{Approximation by Superpositions of a Sigmoidal Function}},
  journal   = {Mathematics of Control, Signals and Systems},
  volume    = {2},
  number    = {4},
  pages     = {303--314},
  year      = {1989}
}

@inproceedings{dai2024context,
  author    = {Dai, Z. and Chen, L. and Wang, Y. and Liu, H.},
  title     = {{In-Context Exploration-Exploitation for Reinforcement Learning at Scale}},
  booktitle = {Advances in Neural Information Processing Systems},
  volume    = {37},
  year      = {2024}
}

@article{dalalyan2017theoretical,
  author    = {Dalalyan, Arnak S.},
  title     = {{Theoretical Guarantees for Approximate Sampling from Smooth and Log-Concave Densities}},
  journal   = {Journal of the Royal Statistical Society: Series B},
  volume    = {79},
  number    = {3},
  pages     = {651--676},
  year      = {2017}
}

@inproceedings{damianou2013deep,
  author    = {Damianou, Andreas and Lawrence, Neil D.},
  title     = {{Deep Gaussian Processes}},
  booktitle = {Proceedings of the 16th International Conference on Artificial Intelligence and Statistics},
  pages     = {207--215},
  year      = {2013}
}

@article{das2020nonstationary,
  author    = {Das, Moumita and Bhattacharya, Sourabh},
  title     = {{A Kernel-Enriched Order-Dependent Nonparametric Spatio-Temporal Process}},
  journal   = {Spatial Statistics},
  volume    = {55},
  pages     = {100751},
  year      = {2020}
}

@inproceedings{dasgupta2025bayesian,
  author    = {Dasgupta, Arpan and Jain, Gagan and Suggala, Arun and Shanmugam, Karthikeyan and Tambe, Milind and Taneja, Aparna},
  title     = {{Bayesian Collaborative Bandits with Thompson Sampling for Improved Outreach in Maternal Health Program}},
  booktitle = {Proceedings of the 24th International Conference on Autonomous Agents and Multiagent Systems (AAMAS 2025)},
  year      = {2025}
}

@article{dickey1979distribution,
  author    = {Dickey, David A. and Fuller, Wayne A.},
  title     = {{Distribution of the Estimators for Autoregressive Time Series with a Unit Root}},
  journal   = {Journal of the American Statistical Association},
  volume    = {74},
  number    = {366},
  pages     = {427--431},
  year      = {1979}
}

@article{doucet2000sequential,
  author    = {Doucet, Arnaud and Godsill, Simon and Andrieu, Christophe},
  title     = {{On Sequential Monte Carlo Sampling Methods for Bayesian Filtering}},
  journal   = {Statistics and Computing},
  volume    = {10},
  number    = {3},
  pages     = {197--208},
  year      = {2000}
}

@book{doucet2001sequential,
  editor    = {Doucet, Arnaud and de Freitas, Nando and Gordon, Neil},
  title     = {{Sequential Monte Carlo Methods in Practice}},
  publisher = {Springer},
  year      = {2001}
}

@incollection{doucet2009tutorial,
  author    = {Doucet, Arnaud and Johansen, Adam M.},
  title     = {{A Tutorial on Particle Filtering and Smoothing: Fifteen Years Later}},
  booktitle = {Handbook of Nonlinear Filtering},
  publisher = {Oxford University Press},
  year      = {2009}
}

@article{dutta2014markov,
  author    = {Dutta, Somak and Bhattacharya, Sourabh},
  title     = {{Markov Chain Monte Carlo Based on Deterministic Transformations}},
  journal   = {Statistical Methodology},
  volume    = {16},
  pages     = {100--116},
  year      = {2014}
}

@book{dwork2014algorithmic,
  author    = {Dwork, Cynthia and Roth, Aaron},
  title     = {{The Algorithmic Foundations of Differential Privacy}},
  publisher = {Now Publishers},
  year      = {2014}
}

@article{frigola2013bayesian,
  author    = {Frigola, Roger and Lindsten, Fredrik and Sch{\"o}n, Thomas B. and Rasmussen, Carl Edward},
  title     = {{Bayesian Inference and Learning in Gaussian Process State-Space Models with Particle MCMC}},
  journal   = {Advances in Neural Information Processing Systems},
  volume    = {26},
  year      = {2013}
}

@inproceedings{gal2016dropout,
  author    = {Gal, Yarin and Ghahramani, Zoubin},
  title     = {{Dropout as a Bayesian Approximation: Representing Model Uncertainty in Deep Learning}},
  booktitle = {Proceedings of the 33rd International Conference on Machine Learning},
  pages     = {1050--1059},
  year      = {2016}
}

@inproceedings{garg2022what,
  author    = {Garg, Shivam and Tsipras, Dimitris and Liang, Percy and Valiant, Gregory},
  title     = {{What Can Transformers Learn InCcontext? A Case Study of Simple Function Classes}},
  booktitle = {Advances in Neural Information Processing Systems},
  volume    = {35},
  year      = {2022}
}

@inproceedings{garnelo2018conditional,
  author    = {Garnelo, Marta and Rosenbaum, Dan and Maddison, Chris J. and Ramalho, Tiago and Saxton, David and Shanahan, Murray and Teh, Yee Whye and Rezende, Danilo J. and Eslami, S. M. Ali},
  title     = {{Conditional Neural Processes}},
  booktitle = {Proceedings of the 35th International Conference on Machine Learning},
  pages     = {1704--1713},
  year      = {2018}
}

@book{gelman2013bayesian,
  author    = {Gelman, Andrew and Carlin, John B. and Stern, Hal S. and Dunson, David B. and Vehtari, Aki and Rubin, Donald B.},
  title     = {{Bayesian Data Analysis}},
  edition   = {3},
  publisher = {CRC Press},
  year      = {2013}
}

@article{ghahramani2015probabilistic,
  author    = {Ghahramani, Zoubin},
  title     = {{Probabilistic Machine Learning and Artificial Intelligence}},
  journal   = {Nature},
  volume    = {521},
  number    = {7553},
  pages     = {452--459},
  year      = {2015}
}

@article{ghosh2014bayesian,
  author    = {Ghosh, Anurag and Mukhopadhyay, Soumalya and Roy, Sandipan and Bhattacharya, Sourabh},
  title     = {{Bayesian Inference in Nonparametric Dynamic State-Space Models}},
  journal   = {Statistical Methodology},
  volume    = {21},
  pages     = {35--48},
  year      = {2014}
}

@article{gordon1993novel,
  author    = {Gordon, Neil J. and Salmond, David J. and Smith, Adrian F. M.},
  title     = {{Novel Approach to Nonlinear/non-Gaussian Bayesian State Estimation}},
  journal   = {IEE Proceedings F - Radar and Signal Processing},
  volume    = {140},
  number    = {2},
  pages     = {107--113},
  year      = {1993}
}

@article{green2009validity,
  author    = {Green, Kesten C. and Armstrong, J. Scott and Soon, Willie},
  title     = {{Validity of Climate Change Forecasting for Public Policy Decision Making}},
  journal   = {International Journal of Forecasting},
  volume    = {25},
  number    = {4},
  pages     = {826--832},
  year      = {2009}
}

@article{hazan2016introduction,
  author    = {Hazan, Elad},
  title     = {{Introduction to Online Convex Optimization}},
  journal   = {Foundations and Trends in Optimization},
  volume    = {2},
  number    = {3-4},
  pages     = {157--325},
  year      = {2016}
}

@article{hoffman2013stochastic,
  author    = {Hoffman, Matthew D. and Blei, David M. and Wang, Chong and Paisley, John},
  title     = {{Stochastic Variational Inference}},
  journal   = {Journal of Machine Learning Research},
  volume    = {14},
  pages     = {1303--1347},
  year      = {2013}
}

@article{hornik1989multilayer,
  author    = {Hornik, Kurt and Stinchcombe, Maxwell and White, Halbert},
  title     = {{Multilayer Feedforward Networks are Universal Approximators}},
  journal   = {Neural Networks},
  volume    = {2},
  number    = {5},
  pages     = {359--366},
  year      = {1989}
}

@book{ipcc2021,
  author    = {{IPCC}},
  title     = {{Climate Change 2021: The Physical Science Basis. Contribution of Working Group I to the Sixth Assessment Report of the Intergovernmental Panel on Climate Change}},
  publisher = {Cambridge University Press},
  year      = {2021}
}

@article{jordan1999introduction,
  author    = {Jordan, Michael I. and Ghahramani, Zoubin and Jaakkola, Tommi S. and Saul, Lawrence K.},
  title     = {{An Introduction to Variational Methods for Graphical Models}},
  journal   = {Machine Learning},
  volume    = {37},
  number    = {2},
  pages     = {183--233},
  year      = {1999}
}

@inproceedings{joseph2016fairness,
  author    = {Joseph, Matthew and Kearns, Michael and Morgenstern, Jamie and Roth, Aaron},
  title     = {{Fairness in Learning: Classic and Contextual Bandits}},
  booktitle = {Advances in Neural Information Processing Systems},
  volume    = {29},
  year      = {2016}
}

@article{julier2004unscented,
  author    = {Julier, Simon J. and Uhlmann, Jeffrey K.},
  title     = {{Unscented Filtering and Nonlinear Estimation}},
  journal   = {Proceedings of the IEEE},
  volume    = {92},
  number    = {3},
  pages     = {401--422},
  year      = {2004}
}

@article{kaelin2021bayesian,
  author    = {Kaelin, Lukas P. and Gr{\"u}nwald, Peter and van Ommen, Thijs},
  title     = {{Bayesian Online Learning for Non-Stationary Processes}},
  journal   = {arXiv preprint},
  year      = {2021},
  note      = {arXiv:2106.08239}
}

@book{kailath2000linear,
  author    = {Kailath, Thomas and Sayed, Ali H. and Hassibi, Babak},
  title     = {{Linear Estimation}},
  publisher = {Prentice Hall},
  year      = {2000}
}

@article{kalman1960new,
  author    = {Kalman, Rudolf E.},
  title     = {{A New Approach to Linear Filtering and Prediction Problems}},
  journal   = {Journal of Basic Engineering},
  volume    = {82},
  number    = {1},
  pages     = {35--45},
  year      = {1960}
}

@inproceedings{kingma2015variational,
  author    = {Kingma, Diederik P. and Welling, Max},
  title     = {{Auto-Encoding Variational Bayes}},
  booktitle = {International Conference on Learning Representations},
  year      = {2015}
}

@article{kwiatkowski1992testing,
  author    = {Kwiatkowski, Denis and Phillips, Peter C. B. and Schmidt, Peter and Shin, Yongcheol},
  title     = {{Testing the Null Hypothesis of Stationarity Against the Alternative of a Unit Root}},
  journal   = {Journal of Econometrics},
  volume    = {54},
  number    = {1-3},
  pages     = {159--178},
  year      = {1992}
}

@book{lattimore2020bandit,
  author    = {Lattimore, Tor and Szepesv{\'a}ri, Csaba},
  title     = {{Bandit Algorithms}},
  publisher = {Cambridge University Press},
  year      = {2020}
}

@inproceedings{li2010contextual,
  author    = {Li, Lihong and Chu, Wei and Langford, John and Schapire, Robert E.},
  title     = {{A Contextual-Bandit Approach to Personalized News Article Recommendation}},
  booktitle = {Proceedings of the 19th International Conference on World Wide Web},
  pages     = {661--670},
  year      = {2010}
}

@inproceedings{liang2025context,
  author    = {Liang, Y. and Liu, Y. and Wang, X. and Xu, H.},
  title     = {{Bayesian Learning for Contextual Restless Bandits with Applications to Maternal Health}},
  booktitle = {International Conference on Autonomous Agents and Multiagent Systems},
  year      = {2025}
}

@article{mackay1992practical,
  author    = {MacKay, David J. C.},
  title     = {{A Practical Bayesian Framework for Backpropagation Networks}},
  journal   = {Neural Computation},
  volume    = {4},
  number    = {3},
  pages     = {448--472},
  year      = {1992}
}

@article{mazumder2016bayesian,
  title={{Bayesian Nonparametric Dynamic State Space Modeling With Circular Latent States}},
  author={Mazumder, Satyaki and Bhattacharya, Sourabh},
  journal={Journal of Statistical Theory and Practice},
  volume={10},
  number={1},
  pages={154--178},
  year={2016},
  publisher={Taylor \& Francis},
  doi={10.1080/15598608.2015.1100562}
}

@article{mazumder2016nonparametric,
  title={{Nonparametric Dynamic State Space Modeling of Observed Circular Time Series With Circular Latent States: A Bayesian Perspective}},
  author={Mazumder, Satyaki and Bhattacharya, Sourabh},
  journal={Journal of Statistical Theory and Practice},
  volume={11},
  number={4},
  pages={693--718},
  year={2016},
  publisher={Taylor \& Francis},
  doi={10.1080/15598608.2017.1305922}
}

@inproceedings{mcmahan2017communication,
  author    = {McMahan, H. Brendan and Moore, Eider and Ramage, Daniel and Hampson, Seth and y Arcas, Blaise Ag{\"u}era},
  title     = {{Communication-Efficient Learning of Deep Networks from Decentralized Data}},
  booktitle = {Proceedings of the 20th International Conference on Artificial Intelligence and Statistics},
  pages     = {1273--1282},
  year      = {2017}
}

@inproceedings{mnih2008probabilistic,
  author    = {Mnih, Andriy and Salakhutdinov, Ruslan},
  title     = {{Probabilistic Matrix Factorization}},
  booktitle = {Advances in Neural Information Processing Systems},
  volume    = {20},
  year      = {2008}
}

@incollection{mockus1978application,
  author    = {Mockus, J. and Tiesis, V. and Zilinskas, A.},
  title     = {{The Application of Bayesian Methods for Seeking the Extremum}},
  booktitle = {Towards Global Optimization},
  volume    = {2},
  pages     = {117--129},
  publisher = {North-Holland},
  year      = {1978}
}

@article{mossinghoff2015nonnegative,
  author    = {Mossinghoff, Michael J. and Trudgian, Timothy S.},
  title     = {{Nonnegative Trigonometric Polynomials and a Zero-free Region for the Riemann Zeta-function}},
  journal   = {Journal of Number Theory},
  volume    = {157},
  pages     = {329--349},
  year      = {2015}
}

@book{murphy2023probabilistic,
  author    = {Murphy, Kevin P.},
  title     = {{Probabilistic Machine Learning: Advanced Topics}},
  publisher = {MIT Press},
  year      = {2023}
}

@book{neal2012bayesian,
  author    = {Neal, Radford M.},
  title     = {{Bayesian Learning for Neural Networks}},
  publisher = {Springer},
  year      = {2012}
}

@inproceedings{nguyen2018variational,
  author    = {Nguyen, Cuong V. and Li, Yingzhen and Bui, Thang D. and Turner, Richard E.},
  title     = {{Variational Continual Learning}},
  booktitle = {International Conference on Learning Representations},
  year      = {2018}
}

@article{orabona2019modern,
  author    = {Orabona, Francesco},
  title     = {{A Modern Introduction to Online Learning}},
  journal   = {arXiv preprint},
  year      = {2019},
  note      = {arXiv:1912.13213}
}

@inproceedings{osband2016deep,
  author    = {Osband, Ian and Blundell, Charles and Pritzel, Alexander and Van Roy, Benjamin},
  title     = {{Deep Exploration via Bootstrapped DQN}},
  booktitle = {Advances in Neural Information Processing Systems},
  volume    = {29},
  year      = {2016}
}

@article{paciorek2009practical,
  author    = {Paciorek, Christopher J. and Yanosky, Jeff D. and Puett, Robin C.},
  title     = {{Practical Large-Scale Spatio-Temporal Modeling of Particulate Matter Concentrations}},
  journal   = {The Annals of Applied Statistics},
  volume    = {3},
  number    = {1},
  pages     = {370--397},
  year      = {2009}
}

@article{propp1996exact,
  author    = {Propp, James Gary and Wilson, David Bruce},
  title     = {{Exact Sampling with Coupled Markov Chains and Applications to Statistical Mechanics}},
  journal   = {Random Structures \& Algorithms},
  volume    = {9},
  number    = {1-2},
  pages     = {223--252},
  year      = {1996}
}

@article{quinonero2005sparse,
  author    = {Qui{\~n}onero-Candela, Joaquin and Rasmussen, Carl Edward},
  title     = {{A Unifying View of Sparse Approximate Gaussian Process Regression}},
  journal   = {Journal of Machine Learning Research},
  volume    = {6},
  pages     = {1939--1959},
  year      = {2005}
}

@article{ripley1976second,
  author    = {Ripley, Brian D.},
  title     = {{The Second-Order Analysis of Stationary Point Processes}},
  journal   = {Journal of Applied Probability},
  volume    = {13},
  number    = {2},
  pages     = {255--266},
  year      = {1976}
}

@book{robert2004monte,
  author    = {Robert, Christian P. and Casella, George},
  title     = {{Monte Carlo Statistical Methods}},
  edition   = {2},
  publisher = {Springer},
  year      = {2004}
}

@book{robert2007bayesian,
  author    = {Robert, Christian P.},
  title     = {{The Bayesian Choice}},
  edition   = {2},
  publisher = {Springer},
  year      = {2007}
}

@article{roy2020bayesian,
  author    = {Roy, Sucharita and Bhattacharya, Sourabh},
  title     = {{Bayes Meets Riemann: Bayesian Characterization of Infinite Series with Application to Riemann Hypothesis}},
  journal   = {International Journal of Applied Mathematics and Statistics},
  volume    = {59},
  number    = {1},
  pages     = {81--128},
  year      = {2020}
}

@article{roy2020bayesian2,
  author    = {Roy, Sucharita and Bhattacharya, Sourabh},
  title     = {{Bayesian Appraisal of Random Series Convergence with Application to Climate Change}},
  journal   = {arXiv preprint},
  year      = {2020},
  note      = {arXiv:2005.00035}
}

@article{roy2021function,
  author    = {Roy, Sucharita and Bhattacharya, Sourabh},
  title     = {{Function Optimization with Posterior Gaussian Derivative Process}},
  journal   = {arXiv preprint},
  year      = {2021}
}

@article{roy2021stationarity,
  author    = {Roy, Sucharita and Bhattacharya, Sourabh},
  title     = {{Bayesian Characterizations of Properties of Stochastic Processes with Applications}},
  journal   = {arXiv preprint},
  year      = {2021}
}

@article{roy2025prime,
  author    = {Bhattacharya, Durba and Roy, Sucharita and Bhattacharya, Sourabh},
  title     = {{Bayes Meets Riemann Again: Large Prime Discovery and Re-emergence of the Bone of Contention}},
  journal   = {arXiv preprint},
  year      = {2025},
  note      = {arXiv:2510.09651}
}

@article{russo2014learning,
  author    = {Russo, Daniel and Van Roy, Benjamin},
  title     = {{Learning to Optimize via Posterior Sampling}},
  journal   = {Mathematics of Operations Research},
  volume    = {39},
  number    = {4},
  pages     = {1221--1243},
  year      = {2014}
}

@article{russo2018tutorial,
  author    = {Russo, Daniel and Van Roy, Benjamin and Kazerouni, Abbas and Osband, Ian and Wen, Zheng},
  title     = {{A Tutorial on Thompson Sampling}},
  journal   = {Foundations and Trends in Machine Learning},
  volume    = {11},
  number    = {1},
  pages     = {1--96},
  year      = {2018}
}

@article{sacks1989design,
  author    = {Sacks, Jerome and Welch, William J. and Mitchell, Toby J. and Wynn, Henry P.},
  title     = {{Design and Analysis of Computer Experiments}},
  journal   = {Statistical Science},
  volume    = {4},
  number    = {4},
  pages     = {409--423},
  year      = {1989}
}

@book{santner2018design,
  author    = {Santner, Thomas J. and Williams, Brian J. and Notz, William I.},
  title     = {{The Design and Analysis of Computer Experiments}},
  edition   = {2},
  publisher = {Springer},
  year      = {2018}
}

@article{shahriari2015taking,
  author    = {Shahriari, Bobak and Swersky, Kevin and Wang, Ziyu and Adams, Ryan P. and de Freitas, Nando},
  title     = {{Taking the Human Out of the Loop: A Review of Bayesian Optimization}},
  journal   = {Proceedings of the IEEE},
  volume    = {104},
  number    = {1},
  pages     = {148--175},
  year      = {2015}
}

@book{shumway2006time,
  author    = {Shumway, Robert H. and Stoffer, David S.},
  title     = {{Time Series Analysis and Its Applications: With R Examples}},
  edition   = {2},
  publisher = {Springer},
  year      = {2006}
}

@inproceedings{snelson2006sparse,
  author    = {Snelson, Edward and Ghahramani, Zoubin},
  title     = {{Sparse Gaussian Processes using Pseudo-inputs}},
  booktitle = {Advances in Neural Information Processing Systems},
  volume    = {18},
  year      = {2006}
}

@inproceedings{snoek2012practical,
  author    = {Snoek, Jasper and Larochelle, Hugo and Adams, Ryan P.},
  title     = {{Practical Bayesian Optimization of Machine Learning Algorithms}},
  booktitle = {Advances in Neural Information Processing Systems},
  volume    = {25},
  year      = {2012}
}

@inproceedings{srinivas2010gaussian,
  author    = {Srinivas, Niranjan and Krause, Andreas and Kakade, Sham M. and Seeger, Matthias},
  title     = {{Gaussian Process Optimization in the Bandit Setting: No Regret and Experimental Design}},
  booktitle = {Proceedings of the 27th International Conference on Machine Learning},
  pages     = {1015--1022},
  year      = {2010}
}

@book{sutton2018reinforcement,
  author    = {Sutton, Richard S. and Barto, Andrew G.},
  title     = {{Reinforcement Learning: An Introduction}},
  edition   = {2},
  publisher = {MIT Press},
  year      = {2018}
}

@inproceedings{tekin2010bayesian,
  author    = {Tekin, Cem and Liu, Mingyan},
  title     = {{Online Learning in Restless Multi-Armed Bandits}},
  booktitle = {Advances in Neural Information Processing Systems},
  volume    = {23},
  year      = {2010}
}

@article{telgarsky2013langevin,
  author    = {Telgarsky, Matus},
  title     = {{Langevin Monte Carlo and Its Applications}},
  journal   = {Technical Report},
  year      = {2013}
}

@article{thompson1933likelihood,
  author    = {Thompson, William R.},
  title     = {{On the Likelihood That One Unknown Probability Exceeds Another in View of the Evidence of Two Samples}},
  journal   = {Biometrika},
  volume    = {25},
  number    = {3/4},
  pages     = {285--294},
  year      = {1933}
}

@inproceedings{turner2010bayesian,
  author    = {Turner, Ryan and Deisenroth, Marc and Rasmussen, Carl},
  title     = {{State-Space Inference and Learning with Gaussian Processes}},
  booktitle = {Proceedings of the 13th International Conference on Artificial Intelligence and Statistics},
  pages     = {868--875},
  year      = {2010}
}

@inproceedings{vaswani2017attention,
  author    = {Vaswani, Ashish and Shazeer, Noam and Parmar, Niki and Uszkoreit, Jakob and Jones, Llion and Gomez, Aidan N. and Kaiser, {\L}ukasz and Polosukhin, Illia},
  title     = {{Attention Is All You Need}},
  booktitle = {Advances in Neural Information Processing Systems},
  volume    = {30},
  year      = {2017}
}

@inproceedings{von2023transformers,
  author    = {Von Oswald, Johannes and Niklasson, Eyvind and Randazzo, Ettore and Sacramento, Jo{\~a}o and Mordvintsev, Alexander and Zhmoginov, Andrey and Vladymyrov, Max},
  title     = {{Transformers Learn In-Context by Gradient Descent}},
  booktitle = {Proceedings of the 40th International Conference on Machine Learning},
  pages     = {35151--35174},
  year      = {2023}
}

@book{vovk2005algorithmic,
  author    = {Vovk, Vladimir and Gammerman, Alex and Shafer, Glenn},
  title     = {{Algorithmic Learning in a Random World}},
  publisher = {Springer},
  year      = {2005}
}

@article{wainwright2008graphical,
  author    = {Wainwright, Martin J. and Jordan, Michael I.},
  title     = {{Graphical Models, Exponential Families, and Variational Inference}},
  journal   = {Foundations and Trends in Machine Learning},
  volume    = {1},
  number    = {1-2},
  pages     = {1--305},
  year      = {2008}
}

@inproceedings{wan2000unscented,
  author    = {Wan, Eric A. and Van Der Merwe, Rudolph},
  title     = {{The Unscented Kalman Filter for Nonlinear Estimation}},
  booktitle = {Proceedings of the IEEE 2000 Adaptive Systems for Signal Processing, Communications, and Control Symposium},
  pages     = {153--158},
  year      = {2000}
}

@inproceedings{welling2011bayesian,
  author    = {Welling, Max and Teh, Yee Whye},
  title     = {{Bayesian Learning via Stochastic Gradient Langevin Dynamics}},
  booktitle = {Proceedings of the 28th International Conference on Machine Learning},
  pages     = {681--688},
  year      = {2011}
}

@article{whittle1988restless,
  author    = {Whittle, Peter},
  title     = {{Restless Bandits: Activity Allocation in a Changing World}},
  journal   = {Journal of Applied Probability},
  volume    = {25},
  number    = {A},
  pages     = {287--298},
  year      = {1988}
}

@inproceedings{xie2021explanation,
  author    = {Xie, Sang Michael and Raghunathan, Aditi and Liang, Percy and Ma, Tengyu},
  title     = {{An Explanation of In-Context Learning as Implicit Bayesian Inference}},
  booktitle = {International Conference on Learning Representations},
  year      = {2021}
}

@inproceedings{zhou2025sepsyn,
  author    = {Zhou, X. and Johnson, A. and Lee, J. and Patel, K.},
  title     = {{Sepsyn-OLCP: Online Learning with Conformal Prediction for Early Sepsis Detection in Intensive Care}},
  booktitle = {Proceedings of Machine Learning for Healthcare},
  volume    = {18},
  pages     = {112--145},
  year      = {2025}
}

\end{document}